\documentclass{article}
\usepackage[utf8]{inputenc}
\usepackage{amsmath,amssymb,amsfonts,amsthm}
\usepackage[english]{babel}
\usepackage{parskip}
\usepackage{hyperref}
\usepackage{tikz}
\usepackage{graphicx}
\usepackage{float}
\usepackage{subcaption}
\usepackage{fancyhdr} 
\usepackage{geometry}
\usepackage{setspace}
\usepackage{csquotes}
\usepackage{comment}
\usepackage{authblk}

\newtheorem{definition}{Definition}[section]
\newtheorem{lemma}{Lemma}[section]
\newtheorem{theorem}{Theorem}[section]
\newtheorem{corollary}{Corollary}[section]
\newtheorem{remark}{Remark}[section]

\newcommand{\keywords}[1]{%
\par\noindent\textbf{Keywords: }#1
}

\renewcommand{\P}{\textbf{P}}
\newcommand{\NC}{\textbf{NC}}

\usepackage{hyperref}
\hypersetup{
  colorlinks=true,
  linkcolor=black,
  citecolor=black,
  filecolor=black,
  urlcolor=black,
}

\usepackage[shortlabels]{enumitem} 
\usepackage[capitalize]{cleveref} 
\usepackage{mdframed}

\usepackage{comment}

\usepackage{newpxtext}

\let\H\relax

\usepackage{ammacros}

\usepackage{microtype}

\usepackage[author=]{fixme}
\fxusetheme{color}

\usepackage{tikz}
\usetikzlibrary {arrows.meta}
\newcounter{row}
\newcounter{col}
\newcounter{rows}
\newcounter{cols}

\newcommand\setrow[1]{
  \setcounter{col}{1}
  \foreach \n in {#1} {
    \edef\x{\value{col} - 0.5+\value{cols}}
    \edef\y{-\value{row} - 0.5 + \value{rows}}
    \node[anchor=center] at (\x, \y) {\n};
    \stepcounter{col}
  }
  \stepcounter{row} 
}

\title{Embedding Boolean circuits into fungal automata with arbitrary update sequences}
\author{E.~Goles\thanks{Facultad de Ingeniería y Ciencias, Universidad Adolfo Ibáñez, Santiago, Chile. E-mail: eric.chacc@uai.cl}
,
A.~Modanese\thanks{Aalto University, Espoo, Finland. E-mail: augusto.modanese@aalto.fi}
,
M.~Ríos-Wilson\thanks{Facultad de Ingeniería y Ciencias, Universidad Adolfo Ibáñez, Santiago, Chile. E-mail: martin.rios@uai.cl}
,
D.~Ruiz-Tala\thanks{Departamento de Ingeniería Matemática, Facultad de Ingeniería y
Ciencias, Universidad de Chile, Santiago, Chile. E-mail: domruiz123@gmail.com}
~and 
T.~Worsch\thanks{Karlsruhe Institute of Technology (KIT), Karlsruhe, Germany (retired). Email: thomas.worsch@posteo.de}}
\date{February 2025}

\begin{document}
\maketitle

\begin{abstract}
  The sandpile automata of Bak, Tang, and Wiesenfeld [Phys.\ Rev.\ Lett., 1987]
  are a simple model for the diffusion of particles in space.
  A fundamental problem related to the complexity of the model is predicting its
  evolution in the parallel setting.
  Despite decades of effort, a classification of this problem for
  two-dimensional sandpile automata remains outstanding.
  Fungal automata were recently proposed by Goles et al.~[Phys.\ Lett.\ A, 2020]
  as a spin-off of the model in which diffusion occurs either in horizontal
  ($H$) or vertical ($V$) directions according to a so-called update scheme.
  Goles et al. proved that the prediction problem for this model with the update
  scheme $H^4V^4$ is $\P$-complete.
  This result was subsequently improved by Modanese and Worsch [Algorithmica,
  2024], who showed the problem is $\P$-complete also for the simpler update
  scheme $HV$.
  In this work, we fill in the gaps and prove that the prediction problem is
  $\P$-complete for any update scheme that contains both $H$ and $V$ at least
  once.
  \keywords{Cellular automata \and Circuit embedding \and Natural computing \and
  Parallel and space-bounded computation}
\end{abstract}

\fxnote{Can we please use the standalone package so that tikz doesn't need to
compile every figure from scratch every time?}
\fxnote{There are a lot of warnings in code for figures. Please fix.}

\section{Introduction}

The \emph{sandpile model} of Bak, Tang, and Wiesenfeld \cite{bak1987self} is a
classical discrete model for particle diffusion.
Concretely, one might picture \emph{grains of sand} that are distributed across
a lattice of \emph{cells} that partitions Euclidean space.
Each cell contains a certain number of grains of sand and, when this number
exceeds a fixed threshold value, the cell \emph{fires}, equally distributing
grains to its neighbors. 

The sandpile model has been extensively studied from the lenses of cellular
automata theory, where it is interpreted as a class of \emph{sandpile automata}.
In the conventional model of sandpile automata in two dimensions, the neighbors
of a cell are fixed to be the ones immediately located along its four cardinal
directions.
The states are simply numbers from $0$ to $7$ indicating the number of grains in
a cell, and the threshold for firing is $4$.
\fxnote{Would be nice to have a figure}
The complexity of predicting the behavior of sandpile automata in two dimensions
is a long-standing open problem in the field \cite{formenti2020hard}.

Recently, Goles et al.~\cite{goles2020computational} proposed a variant of
sandpile automata that was in turn inspired by another class of automata called
\emph{fungal automata} \cite{adamatzky2023fungal} (see also
\cite{sepulveda2023exploring}).
These are a nature-inspired model of computing that mimics the behavior of septa
in fungi. 
Septa are compartments that are located inside the cytoplasm of fungi organisms
and that communicate flow through structures called Woronin bodies, which change
their behavior depending on external conditions to be closed (preventing the
flow from one septum to another) or open. 

Concretely, the \emph{fungal sandpile automata} defined by Goles et
al.~\cite{goles2020computational} are (two-dimensional) sandpile automata where
the neighborhood of a cell is not fixed but dynamically changes over time
according to a so-called \emph{update scheme}.
This is a word $Z \in \{H,V\}^+$ that defines the neighborhood of cells at every
time step following a cyclic rotation.
The $H$ and $V$ symbols in $Z$ indicate whether at the corresponding time point
the cells distribute grains to their horizontal (i.e., east and west) or
vertical (i.e., north and south) neighbors.

\subsection{The Problem}

Given a complex discrete dynamical system, a fundamental question is the
complexity of \emph{predicting} its behavior.
That is, given an initial configuration, an objective entity in the system
(e.g., the state of a cell in a cellular automaton), and a bound on the
observation time, will the entity assume some state of interest or not?
In the case of sandpiles, we are particularly interested in determining if a
cell will ever contain a grain of sand.
For the specific variant of fungal sandpiles, our \emph{fungal sandpile
prediction} problem is formally defined as follows.

\begin{mdframed}
  \begin{description}
    \item[Parameters:] A word $Z \in \{H,V\}^+$.
    \item[Input:] The initial configuration $c$ for a finite grid $R$ of cells,
    the index of a cell $x$ (encoded in binary), and a positive time bound $T$
    (encoded in unary).
    \item[Question:] Suppose the initial configuration of the fungal sandpile is
    such that we place $c$ in $R$ surrounded by cells all in the state $0$.
    Will cell $x$ assume a non-zero state at some time point $t \le T$?
  \end{description}
\end{mdframed}

Characterizing the complexity of predicting cellular automata and related models
such as automata networks is a recurring object of study \cite{goles2016pspace,%
  concha-vega25_sandpiles_natcomput,%
  ollinger22_freezing_dmtcs,nakar25_characterizing_arxiv,%
  goles2020complexity,goles2021complexity,rios2024intrinsic1,%
  rios2024intrinsic2,RIOSWILSON2024114890},
and the corresponding prediction problem for sandpile automata has been the
subject of intensive efforts as well \cite{formenti2020hard}.
Letting $d$ denote the number of dimensions of the sandpile automaton, for $d =
1$ it is known that the prediction problem can be solved in a parallel setting
as it is contained in a subclass of $\NC$ \cite{miltersen2007computational}.
Conversely, for $d \ge 3$, the problem has been shown to be $\P$-complete
\cite{moore1999computational}. 
As already mentioned, the complexity of the problem for $d = 2$ has been open
for very long.
One of the main motivations for studying fungal sandpile automata (or other
variants of sandpile automata) is to explore and develop techniques that could
shed more light into the case of ordinary two-dimensional sandpiles.

In their original paper, Goles et al.~\cite{goles2020computational} showed that,
if we set $Z = H^4 V^4$, then the fungal sandpile prediction problem associated
with $Z$ is $\P$-complete.
This result was subsequently improved by Modanese and Worsch
\cite{modanese2024embedding}, who introduced several novel techniques in order
to prove that $\P$-completeness is preserved when $Z = HV$, which is the
simplest non-trivial update scheme possible (see also \cref{sec:intro-tech}
further below).
The fundamental approach in both papers is by a reduction from the circuit
value problem (CVP), which is well-known to be $\P$-complete, but relying on
different strategies to embed Boolean circuits in fungal sandpiles.

\subsection{Our Contribution}

In this paper, we settle an open question of Modanese and Worsch
\cite{modanese2024embedding} and show that the fungal sandpile prediction
problem is $P$-complete \emph{whenever $Z \in \{H,V\}^+$ is a non-trivial update
scheme}, that is, whenever it contains $H$ and $V$ at least once.
This gives a full characterization of the problem since it is clear that, if $Z$
consists of only $H$ or only $V$, then it is equivalent to prediction on
ordinary sandpiles in the one-dimensional case
\cite{miltersen2007computational}.
More precisely, we have the following result:

\begin{theorem}
  If $Z \in \{ H,V \}^+$ is any word that contains at least one $H$ and one $V$,
  then the fungal sandpile prediction problem associated with $Z$ is
  $\P$-complete.
\end{theorem}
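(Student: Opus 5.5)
Membership in $\P$ is immediate. The initial configuration sits inside a grid $R$ surrounded by zeros. Grains are conserved, and the total mass is at most $7|R|$, so by time $T$ the configuration is nonzero only inside the $T$-neighbourhood of $R$, i.e. a box of side $|R|^{1/2}+2T$. Since $T$ is given in unary, we can simulate $T$ steps of the fungal sandpile directly on this polynomial-size box, applying the $H$ or $V$ rule prescribed by $Z$ at each time step cyclically. We then check whether cell $x$ ever becomes non-zero. The real content is $\P$-hardness, which I would prove by a logspace reduction from the monotone circuit value problem. We may assume the circuit is planar or comes with explicit crossing gadgets, and is laid out on a grid. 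This is the same source problem as in \cite{goles2020computational,modanese2024embedding}.

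The plan is to reduce an arbitrary non-trivial $Z$ to a canonical situation that we control. Write $Z$ cyclically as $H^{a_1}V^{b_1}\cdots H^{a_k}V^{b_k}$ with all $a_i,b_i\ge 1$. The key observation is a \emph{stability} fact. If a cell holds fewer than $4$ grains it never fires. Moreover, a cell's behaviour under a run $H^{a}$ depends only on horizontal transfers, and after the first $H$ step a horizontal line of cells with values $\le 3$ is static. So in a configuration where every ``active'' structure is confined either to horizontal wires or to vertical wires, a block $H^a$ acts like a single $H$ step followed by idle steps.

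I would therefore design signal gadgets in the spirit of Modanese and Worsch:
\begin{itemize}
\item \textbf{wires}, made of cells in states $3$ that are triggered by one incoming grain;
\item \textbf{OR, AND, fan-out and turn gadgets}, each consisting of a constant-size pattern of $2$s and $3$s.
\end{itemize}
Signals will propagate exclusively along the direction currently being updated. The point is to make each gadget's correctness depend only on the \emph{order} of $H$ and $V$ phases, and not on the lengths $a_i,b_i$. Concretely, a horizontal wire advances a signal by a number of cells during each maximal $H$-block and freezes during $V$-blocks. Turns therefore happen exactly at $H\to V$ or $V\to H$ boundaries, and both kinds of boundary occur at least once per period of $Z$.

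The steps, in order, are as follows.
\begin{enumerate}
\item Prove a lemma: on horizontal wires, repeated $H$ steps propagate a signal monotonically and never create mass off the wire. In particular, the number of cells a signal travels in an $H^a$ block is a computable function of $a$, possibly $a$ cells per block.
\item Build the turn, fan-out and AND/OR gadgets. Verify each one once per ordering pattern $HV$, $VH$, using the lemma to absorb the exponents.
\item Synchronize. Inputs to an AND gate may arrive at different times, so I would pad wires with delay loops. Since the period $|Z|$ is fixed, all path lengths can be adjusted by constants computed from $Z$. The delays only need to be equalised modulo the phase structure, and AND gadgets built from a cell at value $2$ need two grains but not simultaneous ones. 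This makes the construction timing-insensitive, which simplifies everything.
\item Lay out the circuit on the grid and place $x$ at the output wire. Set $T$ polynomial in the circuit size.
\end{enumerate}

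The main obstacle I expect is step 1 together with the gadgets. During a long block $H^a$, a horizontal wire in which a cell fires sends a grain backwards as well as forwards. This could re-trigger cells or leak grains into perpendicular gadgets. Also, cells adjacent to a vertical wire accumulate grains during $H$-blocks and might fire prematurely in the next $V$-block. I would first try to use wires of alternating $3$ and $2$ values, or wires with $0$ guard cells on both sides, so that backward grains are absorbed. I would then check by a finite case analysis that any gadget, after any number of consecutive same-direction steps, reaches a fixed point within a bounded number of steps. Once that stabilization lemma holds, the result for arbitrary $Z$ follows from the $HV$ construction essentially verbatim.
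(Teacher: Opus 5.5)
Your membership argument for $\P$ is fine, but the hardness part has two genuine gaps.

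\textbf{No crossing gadget.} Your gadget set can only embed monotone \emph{planar} circuits. Wires, AND, OR, fan-out and turns are all monotone, and nothing in your list lets two wires cross. Monotone planar CVP is in $\NC$, so you may not assume planarity. ``Comes with explicit crossing gadgets'' is therefore not an assumption you can make; it is exactly the missing construction, as the paper's introduction points out.

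The paper, like Modanese and Worsch, reduces from general CVP. It obtains crossings and non-monotone behaviour from \emph{polarized} signals:
\begin{itemize}
\item each block has a positive and a negative source cell, and the vertical parity property is preserved by the dynamics;
\item semi-crossings let signals of opposite polarity cross once and are destroyed by use;
\item switches pass a signal only if the opposite-polarity input fired \emph{earlier};
\item diodes and retarders control direction and delay.
\end{itemize}
This machinery is intrinsically timing-sensitive. So your step~3, which makes everything timing-insensitive, is also incompatible with your final appeal to the $HV$ construction.

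\textbf{The stability lemma is false.} It is meant to absorb the exponents $a_i,b_i$, but a cell in state $\ge 4$ fires at \emph{every} step, in whatever direction the current letter prescribes; it never idles. On a horizontal run of $3$s, a block $H^a$ moves a signal $a$ cells along the run. At the next $V$ step the signal topples vertically and disappears unless a $3$ sits directly above or below it. Consequently:
\begin{itemize}
\item signals do not freeze during $V$-blocks;
\item $H^a$ is not ``one $H$ step plus idle steps'';
\item a gadget can only reach a fixed point after its signal has died.
\end{itemize}

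Your own remark that turns happen exactly at phase boundaries is the correct observation, and it contradicts the freezing claim. A signal's trajectory is dictated letter by letter by $Z$, so the $HV$ gadgets cannot be reused verbatim.

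This is why the paper introduces $Z$-paths and bridges. These are staircases of $|Z|+1$ cells that follow $Z$ from $B_1^{\pm}$ to $B_2^{\pm}$ for diagonally connected $h\times v$ blocks. A whole block then plays the role of one cell of the $HV$ construction, and a signal advances one block per period of $Z$. All layer-2 gadgets are rebuilt from bridges, with additional arguments that bridges with overlapping closures do not interfere, and with coordinating semi-wires that eliminate signals running out of phase with $Z$. Your plan needs some replacement for this intermediate layer.
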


To show this result, we adapt the modular approach of Modanese and Worsch
\cite{modanese2024embedding} and show it is possible to implement the same
underlying set of gadgets when we have such a word $Z$.
The main challenge to overcome is coping with the fact of $Z$ no longer being
fixed.
This means the entire description and proof of correctness for the gadgets must
be generic enough to handle any such given $Z$.
In particular, the structure of the gadgets is no longer static and we can no
longer refer to them in a direct manner.
See \cref{sec:intro-tech} for further discussion.

\subsection{Implications and Future Work}
Before we turn to the technicalities involved in proving the result, let us
briefly discuss what it means for the case of ordinary sandpiles.
If one could show how to implement fungal sandpiles in ordinary sandpiles, then
of course our result would immediately imply hardness in that case as well.
Unfortunately, however, no reduction between the two problems is presently
known.

That being said, there are still meaningful inferences to be made.
We focus here on points that were not already made by Modanese and Worsch in
their work \cite{modanese2024embedding}.
More specifically, we have two main observations to make:
\begin{enumerate}
  \item (As is the case with the work of Modanese and Worsch) the construction
  relies heavily on the fact that we have alternations between horizontal and
  vertical phases, which lets us induce \emph{signals} of different polarities
  (positive and negative).
  This phenomenon is not only interesting as a feature of fungal sandpiles but
  also fundamental for the construction to function since the signals behave
  distinctly.
  In particular, when signals of inverse polarity reach the same cell, the
  respective sets of cells that are potentially visited next are disjoint.
  This seems to be an inherent feature of fungal sandpiles and it is not
  reasonable to expect the same would be possible in ordinary sandpiles.
  \item We have a sharp threshold between the one-dimensional case that is
  embedded in the problem (i.e., the case where $Z$ contains only $H$ or $V$)
  and the more complex two-dimensional one.
  This could be seen as an indication that a sharp threshold in the case of
  ordinary sandpiles is to be expected as well.
\end{enumerate}
One possibility for these two observations to be consistent with each another is
that the characterizing class for the prediction problem on ordinary sandpiles
lies in-between $\P$ and $\NC$.  
The reason for believing this is that we should also see a threshold from the
one- to the two-dimensional case (hence we are beyond $\NC$) but, as already
mentioned, it is perhaps not possible to obtain $\P$-completeness---which would
also explain why all previous approaches at proving it have failed.
Hence one promising direction of study would be to try and embed simpler classes
of circuits.
It is already known how to embed circuits that are monotone and planar (but not
just one of the two, as that would already give us a proof of
$\P$-completeness), so one could consider similarly restricted classes of
circuits.

\subsection{Technical Discussion}
\label{sec:intro-tech}

We start by recalling the necessary details from the construction of Modanese
and Worsch \cite{modanese2024embedding}, where the same result for the special
case $z = HV$ was proven.
As already mentioned above, this is done by a reduction from the circuit value
problem (CVP), which we restate here for the sake of concreteness.

\begin{mdframed}
  \begin{description}
    \item[Input:] The description of a single-output Boolean circuit $C$ with
    bounded fan-in as well as an input $x \in \{ 0,1 \}^+$ to the circuit.
    \item[Question:] What is the output of $C$ on $x$?
  \end{description}
\end{mdframed}

In order to obtain a sensible problem, some care must be taken regarding the
format of the description in which $C$ is given in.
The problem is well-known and these details are not relevant in the present
paper, and so we omit a more detailed treatment.
It suffices to say that the description must be structured in such a way that it
can be processed gate-by-gate by a logspace machine.
It is a well-known result that the CVP is $\P$-complete, which means the problem
is in $\P$ and also that any problem in $\P$ can be reduced to it by a logspace
reduction.

Let us return to the construction of Modanese and Worsch.
It is structured into multiple \emph{layers}:
\begin{description}
  \item[Layer 0:] the underlying model of fungal automata
  \item[Layer 1:] partitioning of the grid into blocks of size $2 \times 2$
  \item[Layer 2:] polarized circuits
  \item[Layer 3:] Boolean circuits with delay
  \item[Layer 4:] embedding of a Boolean circuit in logspace
\end{description}
Fortunately, due to this modular structure, we do not need to redo the entire
construction from scratch.
Rather, we only need to (heavily) adapt layers 1 and 2 and take care of some
details in layer 4.
Layer 3 can be left untouched, and hence we do not review it here; the
interested reader may consult \cite{modanese2024embedding} for the details.

In order to better structure this discussion, we describe the required
adaptations layer by layer while at the same time reviewing necessary details
and terminology from the original construction \cite{modanese2024embedding}.

\paragraph{Layer 0: fungal automata.}
We provide a slightly more technical definition of fungal automata than in the
preceding discussions.
(The reader may consult \cref{sec:prel} for the full definition.)
The underlying space is the lattice $\Z^2$. 
Each cell in $\Z^2$ may assume a state from a set $Q = \{ 0, \dots, 5 \}$
through a configuration $c\colon \Z^2 \to Q$, where $c(x)$ indicates the number
of grains in cell $x \in \Z^2$.
There are two local rules $H$ and $V$ that may be applied:
\begin{itemize}
  \item $H$ \emph{fires} cells with state $4$ or $5$, reducing their state by
  $2$ (i.e., mapping state $4$ to state $2$ and $5$ to $3$). 
  Meanwhile cells that have neighbors along a same row in firing states gain one
  unit per such neighbor (e.g., if a cell is in state $3$ and one neighbor along the same row in state $4$ and one in state $5$, then it turns into a $5$).
  \item $V$ has the same behavior as $H$ but considers neighbors along columns
  instead of rows.
\end{itemize}
The global transition function $F$ applies either rule everywhere simultaneously
to obtain the next configuration from the previous one.
The choice of local rule follows an update scheme, which is a word $Z \in \{ H,V
\}^+$.
Starting from the initial configuration, the $i$-th rule applied by $F$ is 
$Z_{(i-1) \bmod \abs{Z}}$ (e.g., if $Z = HVHV$, then the seventh rule applied by
$F$ is $Z_{(7-1) \bmod 4} = Z_2 = H$).

As already stated multiple times, the original result addresses the case $Z =
HV$, whereas we would like to prove the same for arbitrary $Z$ (that does
consist of only $H$ or only $V$).
Using simple observations, without restriction we can assume that $Z$ starts
with $H$ and ends with $V$.
In addition, as otherwise some features of the construction are degenerate, we
also assume for this discussion that $Z$ contains at least two $H$ and two $V$
symbols.

\paragraph{Layer 1: block partitioning.}
The first layer consists of partitioning the grid into blocks of $2 \times 2$
cells.
This allows better organization of \emph{signals}, which are cells with a state
$4$ or $5$.
They are so called because they induce a self-propagating behavior when placed
on a path of cells with state $3$ (see \cref{fig:ciclo}).

In our generic construction, the block sizes are determined by the composition
of $Z$.
More specifically, letting $h$ and $v$ be the counts of the symbols $H$ and $V$
in $Z$, respectively, our blocks are of size $h \times v$.
The blocks in our construction are inspired by the original one but they 
actually constitute a more fine-grained division than in the original.
To be precise, we use $2 \times 2$ blocks for update schemes such as $Z =
HHVV$, whereas for $Z = HV$ we would actually have a degenerate partition into
$1 \times 1$ blocks.
(This is the reason for us skipping the cases $h = 1$ and $v = 1$ in this
high-level overview.)

The general approach is to have every block simulate a single cell in the
original $HV$ construction.
Assuming this can be realized correctly, this allows us to apply the higher
layers of abstraction without having to alter them too much.
That being said, it is not possible to obtain such a simple simulation directly;
in fact, it is necessary for a block to use also the cells that are adjacent to
its boundaries.
Hence it is not trivial how to \enquote{glue} blocks together and also not at
all clear how to handle corners, that is, cells where four distinct blocks
intersect.
This is the first technical difficulty to be overcome and requires extra care
throughout the construction.
It is also the reason why we need to adapt the next layer and redo the
constructions for the gadgets in it from scratch.

\paragraph{Layer 2: polarized circuits.}
As already foreshadowed, this layer is where the bulk of our technical effort
goes into.
In the original construction, a set of gadgets is defined that is used to build
a limited class of circuits that operates on \emph{polarized signals}, that is,
signals that may be \emph{positive} or \emph{negative}, which is determined by
their relative placement inside a block.
It is a feature of the fungal automata model that the parity of these signals is
conserved (regardless of the update scheme used), and we exploit this in the
construction.

We provide a brief overview of the set of gadgets required:
\begin{itemize}
  \item Basic wire operations to merge and duplicate signals of the same
  polarity.
  \item Monotone Boolean gates (i.e., AND and OR) that operate on signals of the
  same polarity.
  \item \emph{Semi-crossings}, which are a rudimentary form of crossing
  \emph{between signals of opposite polarity} that is destroyed upon use.
  \item Operators called \emph{switches} that allow a certain logical operation
  similar to that of a transistor between signals of opposite polarities.
  \item \emph{Diodes} to ensure signals traverse a wire in a single direction.
  \item \emph{Retarders} to allow delaying a signal by a variable amount of
  time.
\end{itemize}
Using the original implementations as a blueprint, we construct the same set of
components, and hence the result is fully compatible with the subsequent layer.
Nevertheless, we must construct these all from scratch and also provide details
as to how to implement them for arbitrary $Z$.

The key notion that we employ to realize all these gadgets while coping with a
variable $Z$ are so-called \emph{bridges}.
Their core functionality is to convey signals from one block to the other.
Each bridge is implemented using the inside of a block as well as the one-cell
neighborhood around it, which we refer to as the \emph{closure} of the block.
This means that adjacent bridges potentially use \emph{overlapping cells}, and
hence technical care is needed to ensure there are no side effects.

Since we use bridges throughout to realize the more complex components of layer
2, one can also think of them as an \enquote{intermediate layer} between blocks
and polarized circuits.
The advantage of this approach is that we must only define a \emph{single}
gadget (namely the bridge) based on $Z$ and from which the construction of all
layer 2 components follows in a generic manner.

\paragraph{Layer 3: Boolean circuits with delay.}
As already mentioned, we do not need to change anything in this layer.
Hence we skip describing it and reference the interested reader to
\cite{modanese2024embedding}.

\paragraph{Layer 4: logspace embedding.}
The final layer in the original construction is the \emph{logspace constructor}
that is in charge of taking a CVP instance and actually producing the
description of the embedding associated with it.
Since this layer sits on top of the previous one, not much adaptation is needed
compared to the original construction, though one needs to take care of a few
details since now the constructor depends on $Z$ and needs to take into
consideration additional factors incurred by the use of bridges.

\subsection{Structure of the Article}

The rest of the article is structured as follows: 
In \cref{sec:prel} we present basic definitions and notation.
In \cref{sec:main} we set up the groundwork for the circuit embedding.
Following this, in \cref{sec:bridges} we introduce our core gadget, bridges.
We have two appendices:
\Cref{sec:layer-2} covers the reimplementation of gadgets in layer 2, while in
\cref{sec:layer-4} we discuss the adaptations needed for layer 4.

\section{Preliminaries} \label{sec:prel}

The set of integers is denoted by $\Z$, that of positive integers by $\N_+$, and
that of non-negative integers by $\N_0$.

A two-dimensional \emph{cellular automaton} is a tuple $(Q,N,f)$ where $Q$ is a finite set called the alphabet, $N$ is finite subset of $\Z^2$ called neighborhood and $f:Q^N\mapsto Q$ is called the \emph{local rule} of the cellular automaton. An assignation of states for each cell $c: \Z^2 \mapsto Q$ is called a \emph{configuration}. If $S\subseteq \Z^2$ we denote by $c|_{S}$ to the restriction of the configuration $c$ to the set $S.$ The local function $f$ of a cellular automaton induces a global function $F:Q^{\Z^2} \mapsto  Q^{\Z^2}$  given by $F(c)(x) = f(c |_{x+N}).$ 

A fungal automaton can be seen as a two-dimensional cellular automaton equipped
with the von Neumman neighborhood, i.e., $N = \{(a, b) \in \mathbb{Z} \mid
|a| + |b| \leq 1\}$ and set of states $Q=\{ q \in \N_0 \mid q \le 5 \}$.  
Based on $N$, given a cell $x \in \Z^2$ we let $N_h(x) = \{x,x+(1,0),x-(1,0)\}$
be the \emph{horizontal neighborhood} and $N_v(x) = \{x,x+(0,1),x-(0,1)\}$ the \emph{vertical
neighborhood} of $x$.
Cells are updated according to a two entry update function
$F:Q^{\mathbb{Z}^2}\times \{H,V\} \to S^{\mathbb{Z}^2}$, where for $x
\in\mathbb{Z}^2$:
\begin{align*}
F(c,H)(x) &= c(x) - 2[c(x)\geq4] + \sum_{y \in N_h(x)} [c(y) \geq 4], \\
F(c,V)(x) &= c(x) - 2[c(x)\geq4] + \sum_{y \in N_v(x)} [c(y) \geq 4],
\end{align*}
with $[ \cdot]$ being the usual characteristic function, that is, $[Z(s) = z] =
1$ if $Z(s) = z$ and $0$ otherwise.

In \Cref{fig:ciclo}, we show an example of the dynamics of a fungal automata for a word $Z = HVVHHHV.$  All the empty  cells (cells with $0$ grains) are represented by blank spaces. In addition, we are assuming that all the cells that are not part of the figure ($\Z^2$ is an infinite grid) have $0$ grains.


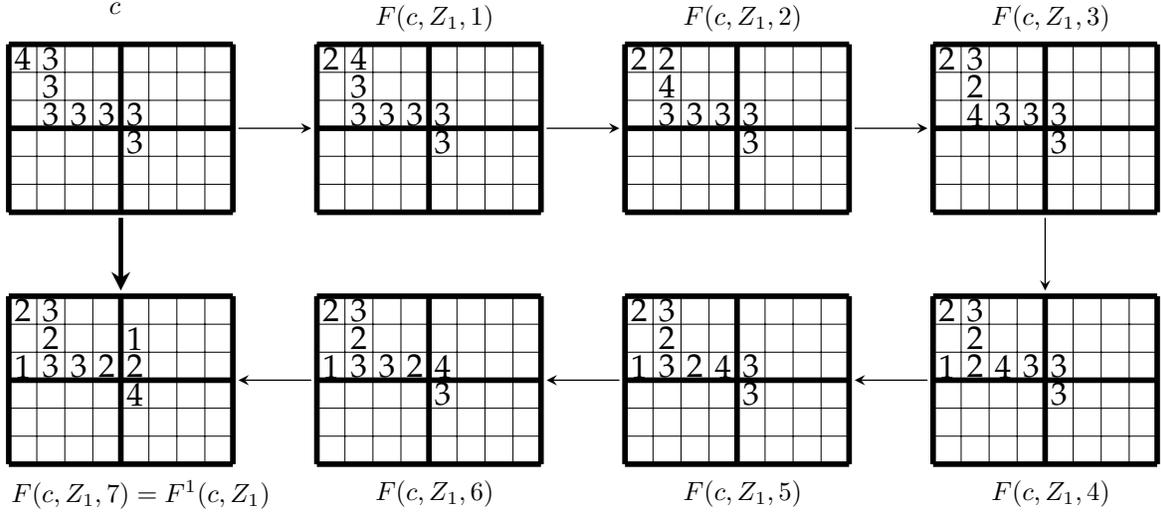
\begin{figure}
\centering
\resizebox{\textwidth}{!}{%
\begin{tikzpicture}[scale=.3]

    \draw[-{Stealth[width=2mm,length=2mm]}, line width=0.5mm] (4,8.8) -- (4,6.2);

    \draw[line width=0.1mm] (0, 9) grid (8, 15);
    \draw[line width=0.6mm] (0,9) -- (0,15); 
    \draw[line width=0.6mm] (4,9) -- (4,15); 
    \draw[line width=0.6mm] (8,9) -- (8,15); 
    \draw[line width=0.6mm] (0,9) -- (8,9);
    \draw[line width=0.6mm] (0,12) -- (8,12);
    \draw[line width=0.6mm] (0,15) -- (8,15);
    
    \setcounter{rows}{15}
    \setcounter{row}{0}
    \setrow {4,3, , , }
    \setrow { ,3, , , }
    \setrow { ,3,3,3,3}
    \setrow { , , , ,3};
    
    \draw[- stealth] (8.2,12) -- (10.8,12);

    \draw[line width=0.1mm] (11, 9) grid (19, 15);
    \draw[line width=0.6mm] (11,9) -- (11,15); 
    \draw[line width=0.6mm] (15,9) -- (15,15); 
    \draw[line width=0.6mm] (19,9) -- (19,15); 
    \draw[line width=0.6mm] (11,9) -- (19,9);
    \draw[line width=0.6mm] (11,12) -- (19,12);
    \draw[line width=0.6mm] (11,15) -- (19,15);

    \setcounter{cols}{11}
    \setcounter{rows}{15}
    \setcounter{row}{0}
    \setrow {2,4, , , }
    \setrow { ,3, , , }
    \setrow { ,3,3,3,3}
    \setrow { , , , ,3};

    \draw[- stealth] (19.2,12) -- (21.8,12);

    \draw[line width=0.1mm] (22, 9) grid (30, 15);
    \draw[line width=0.6mm] (22,9) -- (22,15); 
    \draw[line width=0.6mm] (26,9) -- (26,15); 
    \draw[line width=0.6mm] (30,9) -- (30,15); 
    \draw[line width=0.6mm] (22,9) -- (30,9);
    \draw[line width=0.6mm] (22,12) -- (30,12);
    \draw[line width=0.6mm] (22,15) -- (30,15);

    \setcounter{cols}{22}
    \setcounter{rows}{15}
    \setcounter{row}{0}
    \setrow {2,2, , , }
    \setrow { ,4, , , }
    \setrow { ,3,3,3,3}
    \setrow { , , , ,3};

    \draw[- stealth] (30.2,12) -- (32.8,12);

    \draw[line width=0.1mm] (33, 9) grid (41, 15);
    \draw[line width=0.6mm] (33,9) -- (33,15); 
    \draw[line width=0.6mm] (37,9) -- (37,15); 
    \draw[line width=0.6mm] (41,9) -- (41,15); 
    \draw[line width=0.6mm] (33,9) -- (41,9);
    \draw[line width=0.6mm] (33,12) -- (41,12);
    \draw[line width=0.6mm] (33,15) -- (41,15);

    \setcounter{cols}{33}
    \setcounter{rows}{15}
    \setcounter{row}{0}
    \setrow {2,3, , , }
    \setrow { ,2, , , }
    \setrow { ,4,3,3,3}
    \setrow { , , , ,3};

    \draw[- stealth] (37,8.8) -- (37,6.2);

    \draw[line width=0.1mm] (0, 0) grid (8, 6);
    \draw[line width=0.6mm] (0,0) -- (0,6); 
    \draw[line width=0.6mm] (4,0) -- (4,6); 
    \draw[line width=0.6mm] (8,0) -- (8,6); 
    \draw[line width=0.6mm] (0,0) -- (8,0);
    \draw[line width=0.6mm] (0,3) -- (8,3);
    \draw[line width=0.6mm] (0,6) -- (8,6);

    \setcounter{cols}{33}
    \setcounter{rows}{6}
    \setcounter{row}{0}
    \setrow {2,3, , , }
    \setrow { ,2, , , }
    \setrow {1,2,4,3,3}
    \setrow { , , , ,3};

    \draw[- stealth] (32.8,3) -- (30.2,3);

    \draw[line width=0.1mm] (11, 0) grid (19, 6);
    \draw[line width=0.6mm] (11,0) -- (11,6); 
    \draw[line width=0.6mm] (15,0) -- (15,6); 
    \draw[line width=0.6mm] (19,0) -- (19,6); 
    \draw[line width=0.6mm] (11,0) -- (19,0);
    \draw[line width=0.6mm] (11,3) -- (19,3);
    \draw[line width=0.6mm] (11,6) -- (19,6);

    \setcounter{cols}{22}
    \setcounter{rows}{6}
    \setcounter{row}{0}
    \setrow {2,3, , , }
    \setrow { ,2, , , }
    \setrow {1,3,2,4,3}
    \setrow { , , , ,3};

    \draw[- stealth] (21.8,3) -- (19.2,3);

    \draw[line width=0.1mm] (22, 0) grid (30, 6);
    \draw[line width=0.6mm] (22,0) -- (22,6); 
    \draw[line width=0.6mm] (26,0) -- (26,6); 
    \draw[line width=0.6mm] (30,0) -- (30,6); 
    \draw[line width=0.6mm] (22,0) -- (30,0);
    \draw[line width=0.6mm] (22,3) -- (30,3);
    \draw[line width=0.6mm] (22,6) -- (30,6);

    \setcounter{cols}{11}
    \setcounter{rows}{6}
    \setcounter{row}{0}
    \setrow {2,3, , , }
    \setrow { ,2, , , }
    \setrow {1,3,3,2,4}
    \setrow { , , , ,3};

    \draw[- stealth] (10.8,3) -- (8.2,3);

    \draw[line width=0.1mm] (33, 0) grid (41, 6);
    \draw[line width=0.6mm] (33,0) -- (33,6); 
    \draw[line width=0.6mm] (37,0) -- (37,6); 
    \draw[line width=0.6mm] (41,0) -- (41,6); 
    \draw[line width=0.6mm] (33,0) -- (41,0);
    \draw[line width=0.6mm] (33,3) -- (41,3);
    \draw[line width=0.6mm] (33,6) -- (41,6);

    \setcounter{cols}{0}
    \setcounter{rows}{6}
    \setcounter{row}{0}
    \setrow {2,3, , , }
    \setrow { ,2, , ,1}
    \setrow {1,3,3,2,2}
    \setrow { , , , ,4};

\setcounter{cols}{0}

    \draw (3.5,16.6) node [anchor=north west,scale=0.8][inner sep=0.75pt]    {$c$};
     
    \draw (0,-0.5) node [anchor=north west,scale=0.8][inner sep=0.75pt]    {$F(c,Z_1,7) = F^1(c,Z_1)$};
   
    \draw (13,-0.5) node [anchor=north west,scale=0.8][inner sep=0.75pt]    {$F(c,Z_1,6)$};

    \draw (24,-0.5) node [anchor=north west,scale=0.8][inner sep=0.75pt]    {$F(c,Z_1,5)$};
  
    \draw (35,-0.5) node [anchor=north west,scale=0.8][inner sep=0.75pt]    {$F(c,Z_1,4)$};

    \draw (35,16.5) node [anchor=north west,scale=0.8][inner sep=0.75pt]    {$F(c,Z_1,3)$};

    \draw (13,16.5) node [anchor=north west,scale=0.8][inner sep=0.75pt]    {$F(c,Z_1,1)$};

    \draw (24,16.5) node [anchor=north west,scale=0.8][inner sep=0.75pt]    {$F(c,Z_1,2)$};
    
\end{tikzpicture}
}
\caption{Updates according to the word $Z_1 = HVVHHHV \in {H,V}^*$. The initial configuration $c$ is shown in the top-left. The arrows, starting to the right, indicate the next configuration after each corresponding update. The configuration immediately below the initial one corresponds to the state after completing the entire cycle of the word.}
\label{fig:ciclo}
\end{figure}


Let $Z  \in \{H,V\}^k$. 
We abuse notation and define the global transition function induced by $Z$ as the
function $F(c, Z,s) = F(F(c, Z,s-1), Z(s))$ where $F(c,Z,0) = c$. 
The sequence of configurations $F(c,Z,0), \dots, F(c,Z,k)$ is a \emph{cycle} of
$Z$. 
In addition, we denote $F(c,Z,k)$ by simply $F(c,Z)$ and $F(c,Z,ik)$ by
$F^i(c,Z)$.
In \cref{fig:ciclo} is an example of a complete cycle. 
A fungal automaton is a tuple $\mathcal{F} = (Q,N,Z,F(\cdot,Z))$ where $Q, N,Z$ and $F(\cdot,Z))$ are defined as above.


We write $Z \in \{H,V\}^{h,v,k}$ whenever $Z$ is such that the symbol $H$
appears $h$ times and $V$ appears $v$ times in $Z$. 
In the rest of the paper we sometimes use the identification $H = (1,0)$ and $V
= (0,1)$ whenever the context is clear and write $|w|_a := |\{i : w(i) = a\}|$
for any $w \in \{H,V\}^*, a \in \{H,V\}$. 

For any configuration $c \in Q^{\Z^2}$, a cell $x \in \Z^2$ with $c(x) \geq 4$ is called a \emph{signal}.

\section{Circuit embedding} \label{sec:main}

We start by describing the main structure on which our construction is based. This structure is called a block. Let $Z \in \{H,V\}^{h,v,k}$ a word.  A $Z$-block $B$ is a grid of size $h\times v.$ To avoid cluttering the notation, we refer to them as blocks when the context is clear.

Given a $Z$-block $B$, $B^{+}$ and $B^{-}$ are the cells in the upper- and
lower-left corner of $B$, respectively. 
More precisely, if $x=(a,b)$ are the coordinates of the cell in the upper-left
corner of $B$ then $B^+ = x$ and $B^- = x + (0,(v-1))$. 
We call $B^+$ (resp., $B^-$) the \emph{positive} (resp., \emph{negative})
\emph{source cell} of $B$.


\begin{figure}
\centering
\resizebox{.7\textwidth}{!}{%
\begin{tikzpicture}[scale=.4]

    \draw  [draw opacity=0][fill={rgb, 255:red, 184; green, 233; blue, 134 }  ,fill opacity=0.43 ] (0,0) -- (4,0) -- (4,3) -- (0,3) -- cycle ;
    \draw  [draw opacity=0][fill={rgb, 255:red, 184; green, 233; blue, 134 }  ,fill opacity=0.43 ] (8,0) -- (12,0) -- (12,3) -- (8,3) -- cycle ;
    \draw  [draw opacity=0][fill={rgb, 255:red, 184; green, 233; blue, 134 }  ,fill opacity=0.43 ] (0,6) -- (4,6) -- (4,9) -- (0,9) -- cycle ;
    \draw  [draw opacity=0][fill={rgb, 255:red, 184; green, 233; blue, 134 }  ,fill opacity=0.43 ] (8,6) -- (12,6) -- (12,9) -- (8,9) -- cycle ;
    \draw  [draw opacity=0][fill={rgb, 255:red, 184; green, 233; blue, 134 }  ,fill opacity=0.43 ] (4,3) -- (8,3) -- (8,6) -- (4,6) -- cycle ;

    \draw[line width=0.1mm] (0, 0) grid (12, 9);
    \draw[line width=0.6mm] (0,0) -- (0,9);
    \draw[line width=0.6mm] (4,0) -- (4,9); 
    \draw[line width=0.6mm] (8,0) -- (8,9); 
    \draw[line width=0.6mm] (12,0) -- (12,9); 
    \draw[line width=0.6mm] (0,0) -- (12,0);
    \draw[line width=0.6mm] (0,3) -- (12,3);
    \draw[line width=0.6mm] (0,6) -- (12,6);
    \draw[line width=0.6mm] (0,9) -- (12,9);
    
    \setcounter{cols}{0}
    \setcounter{rows}{9}
    \setcounter{row}{0}
    \setrow {\small{$+$}, , , ,\small{$+$}, , , ,\small{$+$}}
    \setrow { , , , }
    \setrow {\small{$-$}, , , ,\small{$-$}, , , ,\small{$-$}}
    \setrow {\small{$+$}, , , ,\small{$+$}, , , ,\small{$+$}}
    \setrow { , , , }
    \setrow {\small{$-$}, , , ,\small{$-$}, , , ,\small{$-$}}
    \setrow {\small{$+$}, , , ,\small{$+$}, , , ,\small{$+$}}
    \setrow { , , , }
    \setrow {\small{$-$}, , , ,\small{$-$}, , , ,\small{$-$}}
    ;
    
\setcounter{cols}{0}

    \draw  [draw opacity=0][fill={rgb, 255:red, 0; green, 0; blue,  255 }  ,fill opacity=0.2 ] (17,2) -- (23,2) -- (23,7) -- (17,7) -- cycle ;

    \draw[line width=0.1mm] (14, 0) grid (26, 9);
    \draw[line width=0.6mm] (14,0) -- (14,9);
    \draw[line width=0.6mm] (18,0) -- (18,9); 
    \draw[line width=0.6mm] (22,0) -- (22,9); 
    \draw[line width=0.6mm] (26,0) -- (26,9); 
    \draw[line width=0.6mm] (14,0) -- (26,0);
    \draw[line width=0.6mm] (14,3) -- (26,3);
    \draw[line width=0.6mm] (14,6) -- (26,6);
    \draw[line width=0.6mm] (14,9) -- (26,9);
    
    \setcounter{cols}{14}
    \setcounter{rows}{9}
    \setcounter{row}{0}
    \setrow { , , , , , , , , }
    \setrow { , , , }
    \setrow { , , , , , , , , }
    \setrow { , , , ,\small{$B_+$}, , , , }
    \setrow { , , , }
    \setrow { , , , ,\small{$B_-$}, , , , }
    \setrow { , , , , , , , , }
    \setrow { , , , }
    \setrow { , , , , , , , , }
    \setcounter{cols}{0};

\end{tikzpicture}
}
\caption{On the left, blocks for $Z = HVVHHHV \in \{H,V\}^{4,3,7}$ divided with
thick black lines and with respective positive and negative sources marked with
$+,-$. 
In green are marked diagonally connected blocks (white $Z$-blocks are diagonally
connected as well). 
On the right, letting $B$ be the central block, $\overline{B}$ is marked in
blue.}
\label{fig:block}
\end{figure}
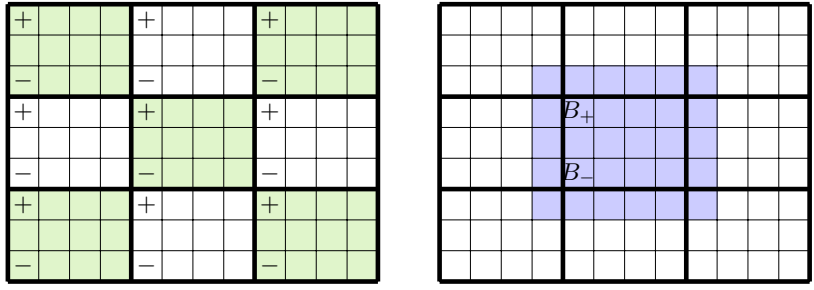

A $Z$-block subdivision of the grid is a subdivision of the grid in $Z$-blocks
as in \Cref{fig:block} for the word $Z = HVVHHHV \in \{H, V\}^{4,3,7}$. 
Also, observe that in \Cref{fig:ciclo} there are $4$ blocks forming a $2 \times
2$ shape for the word $Z_1$. 
In a $Z$-block subdivision, we say two blocks $B_1,B_2$ are \emph{vertically
adjacent} when $B_1$ is on top of $B_2$, that is, if $B_2^+ = B_1^++vV$ holds;
similarly, we say they are \emph{horizontally adjacent} when $B_1$ is at the
left of $B_2$, that is, if $B_2^+ = B_1^++hH$. 
Two blocks $B_1,B_2$ are \emph{diagonally connected} if there exists a
block $B$ horizontally adjacent to $B_1$ and vertically adjacent to $B_2$. 
In \Cref{fig:block}, green blocks are diagonally connected. 



Let $Z \in \{H,V\}^{h,v,k}$ and $B$ be a $Z$-block. The \emph{closure}
of $B$ is the set of cells $\overline{B}$ containing $B$ and all its vertical,
horizontal and diagonal neighbors, formally $\overline{B} = B \cup \{y \in \Z^2:
y = x \pm H \pm V, x \in B\}.$ In \Cref{fig:block}, the closure of the center block
are the cells marked in blue.

\begin{definition} \label{def:defbridge}
    Given $Z \in \{H,V\}^{h,v,k}$, we say $P=(P(0),P(1),\hdots,P(l))$ where
    $P(i) \in \Z^2$ and $l\leq k$ is a $Z$-path there exists $\alpha,\beta \in
    \{-1,1\}$ such that:
\[
P(s) = x_s = \begin{cases}
     x \in \Z^2 & \text{ if } s = 0,\\
     \alpha Z(s) [Z(s) = H]  + \beta Z(s) [Z(s) = V] + P(s-1) & \text{otherwise.}
\end{cases}
\]
\end{definition}

Observe that a $Z$-path is characterized only by its starting cell $P(0)$ and $\alpha,\beta$. We say that a $Z$-path $P$ connects two cells $x,y \in \Z^2$ if $P(0) = x$ and $P(l)=y.$

\begin{lemma}\label{rem:pathconect} Let $Z \in \{H,V\}^{h,v,k}$ and consider diagonally connected $Z$-blocks $B_1,B_2.$ For each $x,y\in \{(B_1^+,B_2^+), (B_1^-,B_2^-)\}$, there exists a $Z$-path $P = \{P(0),...,P(k)\}$ connecting $x$ and $y.$
\end{lemma}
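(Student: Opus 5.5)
The plan is a direct computation: the endpoint of a full-length $Z$-path depends only on the letter counts of $Z$ and on the signs $\alpha,\beta$, not on the order of the letters. By \cref{def:defbridge}, with $l=k$, each step $s$ adds $\alpha H$ when $Z(s)=H$ and $\beta V$ when $Z(s)=V$. Summing over $s=1,\dots,k$ and using that $H$ occurs $h$ times and $V$ occurs $v$ times in $Z \in \{H,V\}^{h,v,k}$ gives
\[
P(k) = P(0) + \alpha h\, H + \beta v\, V .
\]
So the whole question reduces to checking that the required displacement between the source cells of diagonally connected blocks has exactly the form $\pm hH \pm vV$.

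Next I compute that displacement from the definitions of adjacency. Let $B$ be a block horizontally adjacent to $B_1$ and vertically adjacent to $B_2$. I read the adjacency definitions symmetrically, so that the adjacent block may sit on either side. Then $B^+ = B_1^+ \pm hH$ and $B_2^+ = B^+ \pm vV$, hence
\[
B_2^+ - B_1^+ = \epsilon_1 hH + \epsilon_2 vV, \qquad \epsilon_1,\epsilon_2\in\{-1,1\}.
\]
The negative source of any block is its positive source translated by the fixed vector $(v-1)V$. The same difference therefore holds for $B_2^- - B_1^-$.

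To conclude, I set $P(0)=x$ with $x=B_1^{\pm}$, and choose $\alpha=\epsilon_1$ and $\beta=\epsilon_2$. The first step then gives $P(k)=y$ with $y=B_2^{\pm}$, so $P$ connects $x$ and $y$.

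The only delicate point is orientation bookkeeping. The paper's adjacency definitions are stated in one direction, with $B_1$ on top or to the left, so the sign of each component is not immediately determined by "diagonally connected". I will handle this by letting $\alpha$ and $\beta$ absorb both sign choices. I also need the paper's convention that adding $V$ moves downward to be used consistently, but this does not affect the argument, since only the magnitudes $h$ and $v$ matter.
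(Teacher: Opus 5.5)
Your proposal is correct and follows the paper's proof. Both compute $P(k)=P(0)+\alpha hH+\beta vV$ from the letter counts of $Z$, then choose $\alpha,\beta$ to match the displacement $\pm hH\pm vV$ between corresponding source cells; your derivation of that displacement from the adjacency definitions, together with $B^-=B^++(v-1)V$ for the negative sources, fills in a step the paper only asserts.
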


\begin{proof}
Let $x,y$. We need to show that there exists a $Z$-path $P = \{P(0),...,P(k)\}$ starting from $P(0)=x$ such that $P(k) =y.$ In other words, since $Z$-paths can be defined by $P(0)$ and some $\alpha,\beta$ we need to show that it is possible to choose $\alpha,\beta \in \{-1,1\}$ such that $P(k) = y$. 

First, observe that for every $Z$-path $P=(P(0),\hdots,P(l))$, we have that: 
\[
  P(k) = P(0)+\sum_{i=0}^l Z(i)(\alpha[Z(i) = H] + \beta[Z(i) = V]).
\]
On the other hand, we have that cells $x$ and $y$ satisfy
$y=x + \alpha'h H  + \beta'v V$
for some $\alpha,\beta\in \{-1,1\}$.
Choosing $\alpha,\beta = \alpha',\beta'$, it follows
\begin{align*}
  P(k) &=  P(0)+\sum_{i=0}^k Z(i)(\alpha[Z(i) = H] + \beta[Z(i) = V]) \\
  &=  x +\alpha h H  + \beta v V = x +\alpha' h H  + \beta' v V \\
  &=  y.
\end{align*}
Thus, $P$ connects $x$ and $y.$ The lemma holds.
\end{proof}



\paragraph{Restricting update schemes.}
During the proof, we assume that the update scheme $Z$ starts with \(H\) and
ends with \(V\).
We argue we can make this assumption without restriction since there are two
possibilities for $Z$ to deviate from this scheme:
\begin{enumerate}
  \item If $Z$ starts with \(V\) and ends with \(H\), then we can perform the
  same construction rotated 90 degrees.
  \item If $Z$ starts and ends with the same symbol, then its behavior is
  equivalent to that of a word that starts and ends with distinct ones.
  For instance, let \(Z = V^iHWV^j\) with \(W \in \{H,V\}^*\) and \(i, j \in
  \mathbb{N}\).
  Then notice that, when executing \(Z\), the dynamics after $i$ initial steps
  behave as if we were working with \(Z' = HWV^{j+i}\).
  Hence we need only add $i$ \enquote{waiting} steps at the beginning of the
  construction.
  This can be done simply by extending source bridges (Defined in the following section) by a wire that is $i$ cells long
  (and carries the initial signal in the correct direction).
  See \cref{fig:words} for an illustration.
\end{enumerate}

In addition, we can also assume that $Z$ contains at least two $H$ and two $V$
symbols each.
If it contains, say, a single $V$, then we can obtain the result by directly
adapting the construction from \cite{modanese2024embedding} with a similar
strategy based on \enquote{waiting} steps.
Let $i$ be the number of occurrences of $H$ in $Z$, that is, $Z = H^iV$.
Then, for every $2 \times 2$ block in the original construction, we maintain a
$(i+1) \times 2$ block where the $i$ additional columns are simply carrying the
signal from the leftmost to the rightmost column during the intermediate
applications of $H$.
(Alternatively, our construction also works for the case of a single $H$ or a
single $V$, but the terminology is then degenerate; e.g., then the positive and
negative sources of a block are actually the same cell.)
\fxnote{Here we could also use an illustration}

\begin{figure}[t]
\centering
\resizebox{.5\textwidth}{!}{%
\input{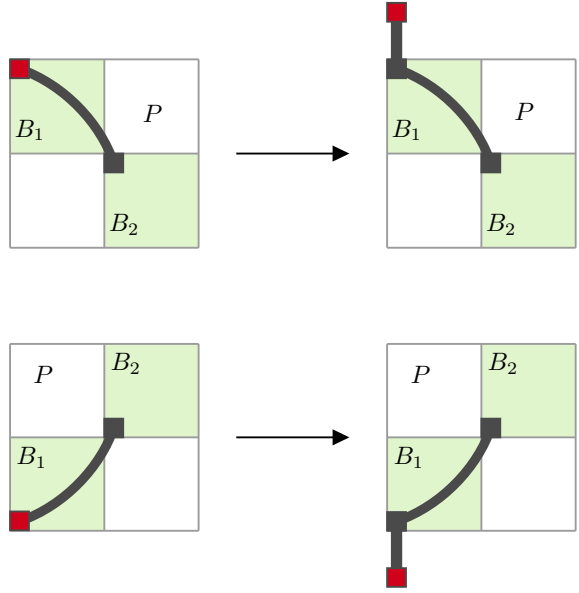}
}
\caption{Source bridges modification. At each source cell, we concatenate cells with value 3 either above or below it. This ensures that a signal reaches the source cell of the block, allowing it to start functioning as intended for the new, \emph{non-shifted} word.}
\label{fig:words}
\end{figure}

\section{Bridges} \label{sec:bridges}
Now we introduce the main element of our construction: \emph{bridges}.
From now on, for every word $Z$ we work with a $Z$-subdivision of the grid.

For a fixed word $Z$, a bridge is a structure defined by a particular $Z$-path and a configuration $c.$ Intuitively, the path marks a trail of cells with $3$-grains. These paths connect  positive (or negative) source cells of diagonal $Z$-blocks.
Thus, if we left $4$ grains in the first cell of the path, that will trigger an avalanche that will traverse the four blocks. We show an example for a word in \Cref{fig:ciclo}.



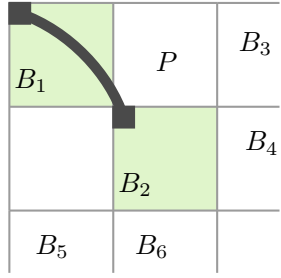
\begin{figure}[t]
\centering
\resizebox{.25\textwidth}{!}{%
\tikzset{every picture/.style={line width=0.75pt}} 

\begin{tikzpicture}[x=0.75pt,y=0.75pt,yscale=-1,xscale=1]

\draw  [draw opacity=0][fill={rgb, 255:red, 184; green, 233; blue, 134 }  ,fill opacity=0.43 ] (70,70) -- (120,70) -- (120,120) -- (70,120) -- cycle ;
\draw  [draw opacity=0][fill={rgb, 255:red, 184; green, 233; blue, 134 }  ,fill opacity=0.43 ] (20,20) -- (70,20) -- (70,70) -- (20,70) -- cycle ;
\draw [color={rgb, 255:red, 155; green, 155; blue, 155 }  ,draw opacity=1 ][line width=0.75]    (70,20) -- (70,150) ;
\draw [color={rgb, 255:red, 155; green, 155; blue, 155 }  ,draw opacity=1 ][line width=0.75]    (20,20) -- (20,120) ;
\draw [color={rgb, 255:red, 155; green, 155; blue, 155 }  ,draw opacity=1 ][line width=0.75]    (120,20) -- (120,120) ;
\draw [color={rgb, 255:red, 155; green, 155; blue, 155 }  ,draw opacity=1 ][line width=0.75]    (120,20) -- (20,20) ;
\draw [color={rgb, 255:red, 155; green, 155; blue, 155 }  ,draw opacity=1 ][line width=0.75]    (150,70) -- (20,70) ;
\draw [color={rgb, 255:red, 155; green, 155; blue, 155 }  ,draw opacity=1 ][line width=0.75]    (120,120) -- (20,120) ;
\draw  [color={rgb, 255:red, 74; green, 74; blue, 74 }  ,draw opacity=1 ][fill={rgb, 255:red, 74; green, 74; blue, 74 }  ,fill opacity=1 ] (70,70) -- (80,70) -- (80,80) -- (70,80) -- cycle ;
\draw [color={rgb, 255:red, 74; green, 74; blue, 74 }  ,draw opacity=1 ][line width=3.75]    (25,25) .. controls (49.2,33.5) and (68.2,54.5) .. (75,75) ;
\draw  [color={rgb, 255:red, 74; green, 74; blue, 74 }  ,draw opacity=1 ][fill={rgb, 255:red, 74; green, 74; blue, 74 }  ,fill opacity=1 ] (20,20) -- (30,20) -- (30,30) -- (20,30) -- cycle ;
\draw [color={rgb, 255:red, 155; green, 155; blue, 155 }  ,draw opacity=1 ][line width=0.75]    (120,120) -- (120,150) ;
\draw [color={rgb, 255:red, 155; green, 155; blue, 155 }  ,draw opacity=1 ][line width=0.75]    (120,120) -- (150,120) ;
\draw [color={rgb, 255:red, 155; green, 155; blue, 155 }  ,draw opacity=1 ][line width=0.75]    (120,20) -- (150,20) ;
\draw [color={rgb, 255:red, 155; green, 155; blue, 155 }  ,draw opacity=1 ][line width=0.75]    (20,120) -- (20,150) ;

\draw (71,100.4) node [anchor=north west][inner sep=0.75pt]    {$B_{2}$};
\draw (21,50.4) node [anchor=north west][inner sep=0.75pt]    {$B_{1}$};
\draw (89,42.4) node [anchor=north west][inner sep=0.75pt]    {$P$};
\draw (129,32.4) node [anchor=north west][inner sep=0.75pt]    {$B_{3}$};
\draw (132,80.4) node [anchor=north west][inner sep=0.75pt]    {$B_{4}$};
\draw (31,130.4) node [anchor=north west][inner sep=0.75pt]    {$B_{5}$};
\draw (79,130.4) node [anchor=north west][inner sep=0.75pt]    {$B_{6}$};

\end{tikzpicture}
}
\caption{Bridge notation for a positive bridge $T = (P, c)$ connecting blocks
$B_1$ and $B_2$, which are highlighted in green. 
Light gray lines indicate the division between $Z$-blocks while the bridge is
marked in dark gray.
The squares refer to the cells $P(0)$ and $P(k)$. 
The concavity of the bridge is intentional, highlighting that $T$ starts with a
horizontal step and ends with a vertical one.
The $Z$-Blocks $B_i$ with $i \in \{3,4,5,6\}$, are relevant for
\Cref{remarkfunc}.}
\label{fig:notpuente}
\end{figure}

Formally, given 2 blocks $B_1,B_2$ diagonally connected (notation example in \Cref{fig:notpuente}) we define a $Z$-bridge for a word $Z \in \{H,V\}^{h,v,k}$ as the tuple $T = (P,c)$ where $P = (P(0), \hdots, P(k))$ is a $Z$-path connecting $B_1^+,B_2^+$ or $B_1^-,B_2^-$ and $c$ is a configuration such that $c(x) = \begin{cases}
    3 & \text{ if } x \in P, \\
    0 & \text{ otherwise. }
\end{cases}$

Instead, if $c(P(0)) = 4$, we call $T$ a source $Z$-bridge. And if for some $l \in [k], c(P(l)) = 2$ and $c(P(j)) = 0, \forall j \geq l$ , we say $T$ is a bridge with a sink at $l$. 

This is well-defined since, by \Cref{rem:pathconect}, a path can always be defined that connects the diagonally connected blocks $B_1$ and $B_2$. A bridge $T$ connecting positive cells, i.e. when $P(0)$ and $P(k)$ both ar positive source, will be referred to as a \emph{positive} bridge. 
In the same way, if $P$ is connecting negative source cells, i.e. $P(0)$ and $P(k)$ are negative sources, it is called a \emph{negative} bridge. 

Now, we show that bridges can transmit signals. This is a key element of our construction. The idea is that a signal can go through the bridge following the structure of the $Z$-path contained in $T$. Furthermore, as illustrated in \Cref{fig:bridge_basura}, even when the bridges are surrounded by cells with values lower than 4, the bridge will still function to transmit the signal.



\begin{lemma}\label{lem:funcbridge}
    Let $Z \in \{H,V\}^{h,v,k}$ and let us consider a source $Z$-bridge $T =(P,c)$ connecting $B_1^+,B_2^+$ from two diagonally connected blocks $B_1,B_2$, with $P = \{P(0),...,P(k)\}$.  If we consider the configuration where every cell outside the bridge has value less or equal to 3, i.e if we define $\Tilde{c}(x) = \begin{cases}
        c(x) & \text{ if } x \in P, \\
        s(x) & \text{ if } x \in \overline{B_1}\setminus P\end{cases}$ where $s$ is any configuration with values strictly less than 4, then we have that $F(\Tilde{c}(x),Z)(P(k))=4.$
\end{lemma}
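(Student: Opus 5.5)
Let $P(0),\dots,P(k)$ be the cells of the bridge path, so that $P(s)=P(s-1)+\delta_s$ where $\delta_s=\alpha H$ if $Z(s)=H$ and $\delta_s=\beta V$ if $Z(s)=V$. The proof will be an induction on $s\in\{0,\dots,k\}$ with the following invariant for the configuration $c_s:=F(\tilde c,Z,s)$:
\begin{enumerate}[(i)]
\item $c_s(P(s))=4$;
\item $c_s(P(j))=3$ for $j>s$;
\item $c_s(P(j))\le 3$ for $j<s$;
\item every cell $x\notin P$ has $c_s(x)\le 3$.
\end{enumerate}
In other words, $P(s)$ is the only signal at time $s$. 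The base case $s=0$ is immediate from the definition of a source bridge, since $c(P(0))=4$, all other path cells hold $3$, and $s<4$ off the path.

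For the inductive step, suppose the invariant holds at time $s-1$ and the rule $Z(s)$ is applied. The only signal is $P(s-1)$, so by the definition of $F$ every cell changes by at most one grain, and only cells in the neighborhood $N_{Z(s)}(P(s-1))$ change at all.
\begin{itemize}
\item The cell $P(s-1)$ drops from $4$ to $2$ and receives nothing, because it has no firing neighbor.
\item Its two neighbors in direction $Z(s)$ each gain exactly one grain. One of them is $P(s)$, which goes from $3$ to $4$ by (ii).
\item Every other cell keeps its state, so (ii) persists for $j>s$.
\end{itemize}

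It remains to show that the other neighbor, $y=P(s-1)-\delta_s$, ends with at most $3$ grains. This is the delicate point, and I would handle it in two cases.
\begin{itemize}
\item If $y\notin P$, then by (iv) $y$ had at most $3$ grains, and it needs to have had at most $2$. The hypothesis only gives $s(x)<4$, so a $3$ there could become a spurious $4$.
\item If $y=P(j)$ with $j<s-1$, I would use (iii) together with the fact that such a cell has already fired and dropped to $2$, so it becomes at most $3$.
\end{itemize}
The path cannot revisit cells, because its coordinates are monotone in each direction: all horizontal steps have the same sign $\alpha$ and all vertical steps the same sign $\beta$. Hence every $P(j)$ with $j<s$ has fired exactly once and stays at $2$ unless hit again. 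Monotonicity also limits how the path can return next to an earlier cell: the backward neighbor $y=P(s-1)-\delta_s$ equals $P(s-2)$ exactly when $Z(s-1)=Z(s)$, and then $P(s-2)$ went from $2$ to $3$. This is fine.

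The main obstacle is the case $y\notin P$. Such a $y$ lies in $\overline{B_1}$ (or in a neighboring block's closure), and the statement as phrased allows $s(y)=3$, which would create a second signal. I expect the intended reading to be that a surrounding $3$ becoming $4$ does no harm to the bridge: it fires outward or sideways, and the monotonicity of the path prevents its grains from landing on $P(j)$ for $j>s$ before the main signal passes. Making that rigorous would need its own induction.

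I would therefore first try to weaken the claim to what is actually needed: extra signals may appear off the path, but the cells $P(j)$ with $j>s$ never receive grains before time $j$. To prove this, I would show that any off-path signal created at time $t$ sits at $\ell^1$-distance at least $2$ behind the front $P(t)$ in the monotone order. Each update moves a signal by at most one cell, in the same direction the front moves, so such a signal can never catch up with the front. That argument gives $c_k(P(k))=4$, which is the claim $F(\tilde c,Z)(P(k))=4$.
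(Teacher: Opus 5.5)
Your diagnosis of the obstacle is correct, and more of your invariant fails than you noticed: under $s<4$ neither (iv) nor (iii) can be maintained. For example, let $Z$ begin with $HH$ and let $s(P(0)-\alpha H)=3$. At time $1$ that cell is a spurious signal. At time $2$ the cell $P(0)$ receives one grain from it and one from $P(1)$, so $P(0)$ is back at $4$.

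Your fallback, ``off-path signals cannot outrun the front,'' is the right fix. It is essentially the idea the paper uses, but the paper uses a static form that never has to locate spurious signals. Write $n_H(i),n_V(i)$ for the numbers of $H$'s and $V$'s among $Z(1),\dots,Z(i)$. A cell changes at step $i$ only if it, or one of its $Z(i)$-neighbours, is a signal at time $i-1$, and $P(0)$ is the only initial signal. An immediate induction then shows that at time $i$ every changed cell, and hence every signal, lies in the rectangle $P(0)+[-n_H(i),n_H(i)]\times[-n_V(i),n_V(i)]$. By monotonicity $P(j)-P(0)=(\alpha n_H(j),\beta n_V(j))$, and for $j>i$ either $n_H(j)>n_H(i)$ or $n_V(j)>n_V(i)$. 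So $P(j)$ is still $3$ at time $i$. The same fact at time $i-1$ shows that the far neighbour $P(i)+\delta_i$ is not a signal, so $P(i)$ receives exactly one grain and becomes $4$. The paper's text writes ``and'' where ``or'' is needed in this step, and leaves the exactly-one-grain point implicit.

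If you keep your moving-front version, it does close, but two changes are needed:
\begin{itemize}
\item Replace $\ell^1$-distance with the potential $\varphi(x)=\alpha(x-P(0))_1+\beta(x-P(0))_2$, so that $\varphi(P(j))=j$.
\item Apply the bound to every signal other than $P(t)$, including old path cells, not only to off-path ones.
\end{itemize}
The invariant is then: every signal other than $P(t)$ has $\varphi\le t-2$. It is preserved for three reasons:
\begin{itemize}
\item grains from those signals land at $\varphi\le t-1$;
\item the front's backward neighbour has $\varphi=t-1$;
\item $P(t)$ itself receives no grain.
\end{itemize}
Meanwhile no $P(j)$ with $j>t$ receives anything except the single grain $P(t+1)$ gets from the front. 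Compared with the paper's rectangle, this costs a more delicate invariant. In exchange it gives slightly more, for instance that $P(t)$ drops exactly to $2$, which the paper states but its rectangle argument does not justify.
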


\begin{figure}[t]
\centering
\begin{tikzpicture}[scale=.29]

    \draw  [draw opacity=0][fill={rgb, 255:red, 255; green, 0; blue,  0 }  ,fill opacity=0.1 ] (0,15) -- (1,15) -- (1,14) -- (0,14) -- cycle ;

    \draw  [draw opacity=0][fill={rgb, 255:red, 255; green, 0; blue,  0 }  ,fill opacity=0.1 ] (11,15) -- (13,15) -- (13,14) -- (11,14) -- cycle ;

    \draw  [draw opacity=0][fill={rgb, 255:red, 255; green, 0; blue,  0 }  ,fill opacity=0.1 ] (22,15) -- (24,15) -- (24,13) -- (22,13) -- cycle ;

    \draw  [draw opacity=0][fill={rgb, 255:red, 255; green, 0; blue,  0 }  ,fill opacity=0.1 ] (33,15) -- (35,15) -- (35,12) -- (33,12) -- cycle ;

    \draw  [draw opacity=0][fill={rgb, 255:red, 255; green, 0; blue,  0 }  ,fill opacity=0.1 ] (33,6) -- (36,6) -- (36,3) -- (33,3) -- cycle ;

    \draw  [draw opacity=0][fill={rgb, 255:red, 255; green, 0; blue,  0 }  ,fill opacity=0.1 ] (22,6) -- (26,6) -- (26,3) -- (22,3) -- cycle ;
    
    \draw  [draw opacity=0][fill={rgb, 255:red, 255; green, 0; blue,  0 }  ,fill opacity=0.1 ] (11,6) -- (16,6) -- (16,3) -- (11,3) -- cycle ;

    \draw  [draw opacity=0][fill={rgb, 255:red, 255; green, 0; blue,  0 }  ,fill opacity=0.1 ] (0,6) -- (5,6) -- (5,2) -- (0,2) -- cycle ;

    \draw [draw opacity=0][fill={rgb, 255:red, 0; green, 0; blue, 255}, fill opacity=0.2] (0,14) -- (2,14) -- (2,15) -- (0,15) -- cycle;
    
    \draw [draw opacity=0][fill={rgb, 255:red, 0; green, 0; blue, 255}, fill opacity=0.2] (1,14) -- (1,12) -- (2,12) -- (2,14) -- cycle;
    
    \draw [draw opacity=0][fill={rgb, 255:red, 0; green, 0; blue, 255}, fill opacity=0.2] (2,13) -- (5,13) -- (5,12) -- (2,12) -- cycle;
    
    \draw [draw opacity=0][fill={rgb, 255:red, 0; green, 0; blue, 255}, fill opacity=0.2] (4,12) -- (5,12) -- (5,11) -- (4,11) -- cycle;

    \draw  [draw opacity=0][fill={rgb, 255:red, 255; green, 0; blue,  0 }  ,fill opacity=0.5 ] (0,15) -- (1,15) -- (1,14) -- (0,14) -- cycle;
    \draw [draw opacity=0, fill={rgb, 255:red, 255; green, 0; blue, 0}, fill opacity=0.5] (4,3) -- (5,3) -- (5,2) -- (4,2) -- cycle;
    \draw [draw opacity=0, fill={rgb, 255:red, 255; green, 0; blue, 0}, fill opacity=0.5] (12,15) -- (13,15) -- (13,14) -- (12,14) -- cycle;
    \draw [draw opacity=0, fill={rgb, 255:red, 255; green, 0; blue, 0}, fill opacity=0.5] (15,4) -- (16,4) -- (16,3) -- (15,3) -- cycle;
    \draw [draw opacity=0, fill={rgb, 255:red, 255; green, 0; blue, 0}, fill opacity=0.5] (23,14) -- (24,14) -- (24,13) -- (23,13) -- cycle;
    \draw [draw opacity=0, fill={rgb, 255:red, 255; green, 0; blue, 0}, fill opacity=0.5] (25,4) -- (26,4) -- (26,3) -- (25,3) -- cycle;
    \draw [draw opacity=0, fill={rgb, 255:red, 255; green, 0; blue, 0}, fill opacity=0.5] (34,13) -- (35,13) -- (35,12) -- (34,12) -- cycle;
    \draw [draw opacity=0, fill={rgb, 255:red, 255; green, 0; blue, 0}, fill opacity=0.5] (35,4) -- (36,4) -- (36,3) -- (35,3) -- cycle;
    
    \draw[-{Stealth[width=2mm,length=2mm]}, line width=0.5mm] (4,8.8) -- (4,6.2);

    \draw[line width=0.1mm] (0, 9) grid (8, 15);
    \draw[line width=0.6mm] (0,9) -- (0,15); 
    \draw[line width=0.6mm] (4,9) -- (4,15); 
    \draw[line width=0.6mm] (8,9) -- (8,15); 
    \draw[line width=0.6mm] (0,9) -- (8,9);
    \draw[line width=0.6mm] (0,12) -- (8,12);
    \draw[line width=0.6mm] (0,15) -- (8,15);
    
    \setcounter{rows}{15}
    \setcounter{row}{0}
    \setrow {4,3,3,2,3}
    \setrow {1,3,3,3, }
    \setrow {3,3,3,3,3}
    \setrow { ,3,2,3,3};
    
    \draw[- stealth] (8.2,12) -- (10.8,12);

    \draw[line width=0.1mm] (11, 9) grid (19, 15);
    \draw[line width=0.6mm] (11,9) -- (11,15); 
    \draw[line width=0.6mm] (15,9) -- (15,15); 
    \draw[line width=0.6mm] (19,9) -- (19,15); 
    \draw[line width=0.6mm] (11,9) -- (19,9);
    \draw[line width=0.6mm] (11,12) -- (19,12);
    \draw[line width=0.6mm] (11,15) -- (19,15);

    \setcounter{cols}{11}
    \setcounter{rows}{15}
    \setcounter{row}{0}
    \setrow {2,4,3,2,3}
    \setrow {1,3,3,3, }
    \setrow {3,3,3,3,3}
    \setrow { ,3,2,3,3};

    \draw[- stealth] (19.2,12) -- (21.8,12);

    \draw[line width=0.1mm] (22, 9) grid (30, 15);
    \draw[line width=0.6mm] (22,9) -- (22,15); 
    \draw[line width=0.6mm] (26,9) -- (26,15); 
    \draw[line width=0.6mm] (30,9) -- (30,15); 
    \draw[line width=0.6mm] (22,9) -- (30,9);
    \draw[line width=0.6mm] (22,12) -- (30,12);
    \draw[line width=0.6mm] (22,15) -- (30,15);

    \setcounter{cols}{22}
    \setcounter{rows}{15}
    \setcounter{row}{0}
    \setrow {2,2,3,2,3}
    \setrow {1,4,3,3, }
    \setrow {3,3,3,3,3}
    \setrow { ,3,2,3,3};

    \draw[- stealth] (30.2,12) -- (32.8,12);

    \draw[line width=0.1mm] (33, 9) grid (41, 15);
    \draw[line width=0.6mm] (33,9) -- (33,15); 
    \draw[line width=0.6mm] (37,9) -- (37,15); 
    \draw[line width=0.6mm] (41,9) -- (41,15); 
    \draw[line width=0.6mm] (33,9) -- (41,9);
    \draw[line width=0.6mm] (33,12) -- (41,12);
    \draw[line width=0.6mm] (33,15) -- (41,15);

    \setcounter{cols}{33}
    \setcounter{rows}{15}
    \setcounter{row}{0}
    \setrow {2,3,3,2,3}
    \setrow {1,2,3,3, }
    \setrow {3,4,3,3,3}
    \setrow { ,3,2,3,3};

    \draw[- stealth] (37,8.8) -- (37,6.2);

    \draw[line width=0.1mm] (0, 0) grid (8, 6);
    \draw[line width=0.6mm] (0,0) -- (0,6); 
    \draw[line width=0.6mm] (4,0) -- (4,6); 
    \draw[line width=0.6mm] (8,0) -- (8,6); 
    \draw[line width=0.6mm] (0,0) -- (8,0);
    \draw[line width=0.6mm] (0,3) -- (8,3);
    \draw[line width=0.6mm] (0,6) -- (8,6);

    \setcounter{cols}{33}
    \setcounter{rows}{6}
    \setcounter{row}{0}
    \setrow {2,3,3,2,3}
    \setrow {1,2,3,3, }
    \setrow {4,2,4,3,3}
    \setrow { ,3,2,3,3};

    \draw[- stealth] (32.8,3) -- (30.2,3);

    \draw[line width=0.1mm] (11, 0) grid (19, 6);
    \draw[line width=0.6mm] (11,0) -- (11,6); 
    \draw[line width=0.6mm] (15,0) -- (15,6); 
    \draw[line width=0.6mm] (19,0) -- (19,6); 
    \draw[line width=0.6mm] (11,0) -- (19,0);
    \draw[line width=0.6mm] (11,3) -- (19,3);
    \draw[line width=0.6mm] (11,6) -- (19,6);

    \setcounter{cols}{22}
    \setcounter{rows}{6}
    \setcounter{row}{0}
    \setrow {2,3,3,2,3}
    \setrow {1,2,3,3, }
    \setrow {2,4,2,4,3}
    \setrow { ,3,2,3,3};
    
    \draw[- stealth] (21.8,3) -- (19.2,3);

    \draw[line width=0.1mm] (22, 0) grid (30, 6);
    \draw[line width=0.6mm] (22,0) -- (22,6); 
    \draw[line width=0.6mm] (26,0) -- (26,6); 
    \draw[line width=0.6mm] (30,0) -- (30,6); 
    \draw[line width=0.6mm] (22,0) -- (30,0);
    \draw[line width=0.6mm] (22,3) -- (30,3);
    \draw[line width=0.6mm] (22,6) -- (30,6);

    \setcounter{cols}{11}
    \setcounter{rows}{6}
    \setcounter{row}{0}
    \setrow {2,3,3,2,3}
    \setrow {1,2,3,3, }
    \setrow {3,2,4,2,4}
    \setrow { ,3,2,3,3};

    \draw[- stealth] (10.8,3) -- (8.2,3);

   \draw[line width=0.1mm] (33, 0) grid (41, 6);
    \draw[line width=0.6mm] (33,0) -- (33,6); 
    \draw[line width=0.6mm] (37,0) -- (37,6); 
    \draw[line width=0.6mm] (41,0) -- (41,6); 
    \draw[line width=0.6mm] (33,0) -- (41,0);
    \draw[line width=0.6mm] (33,3) -- (41,3);
    \draw[line width=0.6mm] (33,6) -- (41,6);

    \setcounter{cols}{0}
    \setcounter{rows}{6}
    \setcounter{row}{0}
    \setrow {2,3,3,2,3}
    \setrow {1,2,4,3,1}
    \setrow {3,2,2,2,2}
    \setrow { ,3,3,3,4};

\setcounter{cols}{0}

    \draw (3.5,16.6) node [anchor=north west,scale=0.8][inner sep=0.75pt]    {$c$};
     
    \draw (0,-0.5) node [anchor=north west,scale=0.8][inner sep=0.75pt]    {$F(c,Z_1,7) = F^1(c,Z_1)$};
   
    \draw (13,-0.5) node [anchor=north west,scale=0.8][inner sep=0.75pt]    {$F(c,Z_1,6)$};

    \draw (24,-0.5) node [anchor=north west,scale=0.8][inner sep=0.75pt]    {$F(c,Z_1,5)$};
  
    \draw (35,-0.5) node [anchor=north west,scale=0.8][inner sep=0.75pt]    {$F(c,Z_1,4)$};

    \draw (35,16.5) node [anchor=north west,scale=0.8][inner sep=0.75pt]    {$F(c,Z_1,3)$};

    \draw (13,16.5) node [anchor=north west,scale=0.8][inner sep=0.75pt]    {$F(c,Z_1,1)$};

    \draw (24,16.5) node [anchor=north west,scale=0.8][inner sep=0.75pt]    {$F(c,Z_1,2)$};

\end{tikzpicture}
\caption{Functioning of a bridge with a configuration as in \Cref{lem:funcbridge} (Every cell outside the bridge has value less than 3) for the word $Z_1 = HVVHHHV$. The figure shows how the configuration evolves through each of the transitions until completing a full cycle of $Z_1$. In blue, in the leftmost figure, the cells that are part of the $Z$-path defining the bridge connecting the $Z$-blocks (top-left and bottom-right) are shown. The signal traveling from one block to another is highlighted in red, and the cells outside the light red rectangle remain unchanged with respect to the initial configuration.}
\label{fig:bridge_basura}
\end{figure}
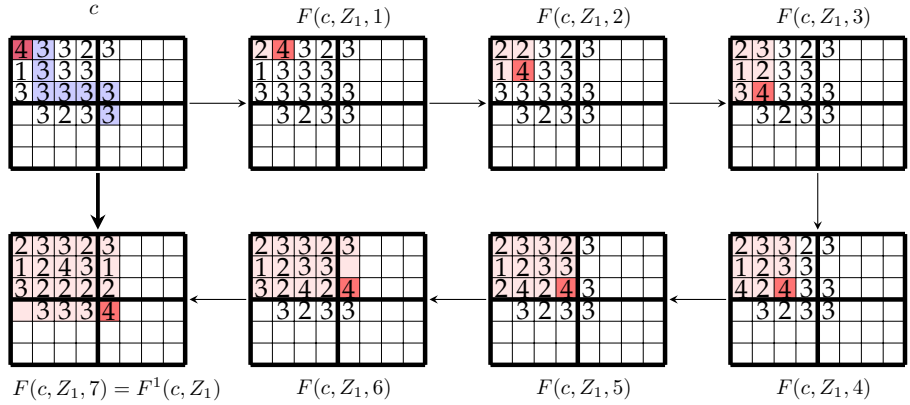

\begin{figure}[t]
\centering
\begin{tikzpicture}[scale=.29]

    \draw [draw opacity=0][fill={rgb, 255:red, 0; green, 0; blue, 255}, fill opacity=0.2] (0,14) -- (2,14) -- (2,15) -- (0,15) -- cycle;

    \draw [draw opacity=0][fill={rgb, 255:red, 0; green, 0; blue, 255}, fill opacity=0.2] (0,13) -- (0,12) -- (1,12) -- (1,13) -- cycle;

    \draw [draw opacity=0][fill={rgb, 255:red, 0; green, 0; blue, 255}, fill opacity=0.2] (1,14) -- (1,12) -- (2,12) -- (2,14) -- cycle;
    
    \draw [draw opacity=0][fill={rgb, 255:red, 0; green, 0; blue, 255}, fill opacity=0.2] (2,13) -- (5,13) -- (5,12) -- (2,12) -- cycle;
    
    \draw [draw opacity=0][fill={rgb, 255:red, 0; green, 0; blue, 255}, fill opacity=0.2] (4,12) -- (5,12) -- (5,11) -- (4,11) -- cycle;

    \draw [draw opacity=0][fill={rgb, 255:red, 0; green, 0; blue, 255}, fill opacity=0.2] (4,14) -- (5,14) -- (5,13) -- (4,13) -- cycle;

    \draw[line width=0.1mm] (0, 9) grid (8, 15);
    \draw[line width=0.6mm] (0,9) -- (0,15); 
    \draw[line width=0.6mm] (4,9) -- (4,15); 
    \draw[line width=0.6mm] (8,9) -- (8,15); 
    \draw[line width=0.6mm] (0,9) -- (8,9);
    \draw[line width=0.6mm] (0,12) -- (8,12);
    \draw[line width=0.6mm] (0,15) -- (8,15);
    
    \setcounter{rows}{15}
    \setcounter{row}{0}
    \setrow {4,3, , , }
    \setrow { ,3, , , }
    \setrow { ,3,3,3,3}
    \setrow { , , , ,3};

\end{tikzpicture}
\caption{Affected neighbors of a bridge $T$, whose configuration is shown, for the word $Z_1 = HVVHHHV$. In blue, $N(T)$, which coincides with the cells that gain a grain, as illustrated in \Cref{fig:ciclo}.}
\label{fig:bridge_NT}
\end{figure}
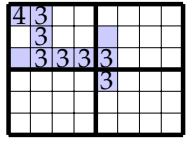

\begin{proof}
Let $\alpha,\beta$ be the constants defining the $Z$-path $P$ as in \Cref{def:defbridge}. By induction on $i,$ we show that for all $i,$ $F(\Tilde{c},Z,i)(P(i)) = 4$ and $F(\Tilde{c},Z,i)(P(j)) = 3$ for each $j>i$. For $i=0$ the result holds directly by definition. 

Let $i\geq 1$ and let us assume $F(\Tilde{c},Z,i-1)(P(i-1)) = 4$ and $F(\Tilde{c},Z,i-1)(P(j)) = 3$ for each $j>i-1$. 

From the bridge definition we have $P(i) = \alpha Z(i) [Z(i) = H]  + \beta Z(i) [Z(i) = V] + P(i-1)$. Suppose $Z(i) = H$, then $P(i) = P(i-1) + \alpha H$. Since we have: \begin{itemize}
    \item $F(\Tilde{c},Z,i-1)(P(i-1)) = 4$
    \item $F(\Tilde{c},Z,i-1)(P(i)) = 3$
    \item $P(i) \in N_h(P(i-1))$
    \item $Z(i)=H$
\end{itemize}
we conclude by definition that $F(\Tilde{c},Z,i)(P(i)) = 4$ and $F(\Tilde{c},Z,i)(P(i-1)) = 2.$

Observe that $c(P(j)) = 3$ for all $j > i$ follows from the fact that in $F(\Tilde{c}, Z, i)$, since each signal can move at most one cell per transition, either vertically or horizontally, all cells located at a horizontal distance greater than $|Z(1)\ldots Z(i)|_H$ (i.e., the number of $H$ transitions in $Z$ up to its $i$-th letter) and a vertical distance greater than $|Z(1)\ldots Z(i)|_V$ (i.e., the number of $V$ transitions in $Z$ up to its $i$-th letter) stay in the same state as in $\Tilde{c}$. Since every cell $c(P(j)) = 3$ for $j > i$ lies beyond these distances, their values remain unchanged.

\end{proof}

\begin{remark}\label{remarkfunc}
In this lemma we assume that the configuration around the bridge was initially in state $0$, but the same result holds if we assume that initially there are no other signals in the $2\times2$ square that defines the figure, nor in the blocks $\{B_i\}_{i\in \{3,4,5,6\}}$.
\end{remark}


With this, we have that the bridges allow us to move signals across blocks. Moreover, we can see that bridges are the minimal structure to carry signals, in the sense that, adding grains to adjacent cells outside the bridge does not affect its functionality (as long we do not add more than 3). 


In the following, we present two technical lemmas that we will use in the next sections. First, it is shown as a corollary that two bridges in the same direction can transport signals across multiple blocks, even when there are surrounding cells with values greater than zero. This latter property will be useful for defining more advanced structures that require moving signals through regions containing other bridges. Then, it is established how bridges with a sink interrupt the propagation of signals, yet are able to increase the value of a cell along the defining path from 2 to 3, as shown in \Cref{fig:lemma4}. This property will be useful later in defining a switch-type structure.

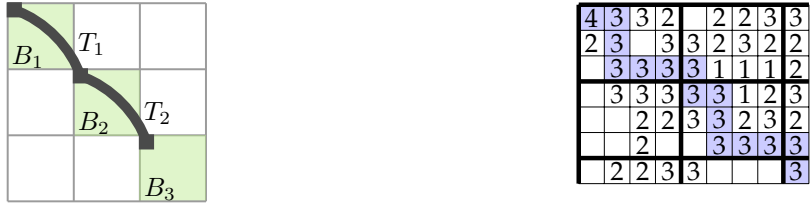
\begin{figure}[t]
\begin{minipage}[t][0.2\textheight][c]{0.5\textwidth}
\centering
\tikzset{every picture/.style={line width=0.75pt}} 

\begin{tikzpicture}[x=0.5pt,y=0.5pt,yscale=-1,xscale=1]

\draw [color={rgb, 255:red, 155; green, 155; blue, 155 }  ,draw opacity=1 ][line width=0.75]    (200,150) -- (50,150) ;
\draw [color={rgb, 255:red, 155; green, 155; blue, 155 }  ,draw opacity=1 ][line width=0.75]    (200,50) -- (200,200) ;
\draw [color={rgb, 255:red, 155; green, 155; blue, 155 }  ,draw opacity=1 ][line width=0.75]    (150,50) -- (150,200) ;
\draw  [draw opacity=0][fill={rgb, 255:red, 184; green, 233; blue, 134 }  ,fill opacity=0.43 ] (100,100) -- (150,100) -- (150,150) -- (100,150) -- cycle ;
\draw  [draw opacity=0][fill={rgb, 255:red, 184; green, 233; blue, 134 }  ,fill opacity=0.43 ] (150,150) -- (200,150) -- (200,200) -- (150,200) -- cycle ;
\draw  [draw opacity=0][fill={rgb, 255:red, 184; green, 233; blue, 134 }  ,fill opacity=0.43 ] (50,50) -- (100,50) -- (100,100) -- (50,100) -- cycle ;
\draw [color={rgb, 255:red, 155; green, 155; blue, 155 }  ,draw opacity=1 ][line width=0.75]    (100,50) -- (100,200) ;
\draw [color={rgb, 255:red, 74; green, 74; blue, 74 }  ,draw opacity=1 ][line width=3.75]    (105,105) .. controls (129.2,113.5) and (148.2,134.5) .. (155,155) ;
\draw [color={rgb, 255:red, 155; green, 155; blue, 155 }  ,draw opacity=1 ][line width=0.75]    (50,50) -- (50,200) ;
\draw [color={rgb, 255:red, 155; green, 155; blue, 155 }  ,draw opacity=1 ][line width=0.75]    (200,50) -- (50,50) ;
\draw [color={rgb, 255:red, 155; green, 155; blue, 155 }  ,draw opacity=1 ][line width=0.75]    (200,100) -- (50,100) ;
\draw  [color={rgb, 255:red, 74; green, 74; blue, 74 }  ,draw opacity=1 ][fill={rgb, 255:red, 74; green, 74; blue, 74 }  ,fill opacity=1 ] (50,50) -- (60,50) -- (60,60) -- (50,60) -- cycle ;
\draw  [color={rgb, 255:red, 74; green, 74; blue, 74 }  ,draw opacity=1 ][fill={rgb, 255:red, 74; green, 74; blue, 74 }  ,fill opacity=1 ] (150,150) -- (160,150) -- (160,160) -- (150,160) -- cycle ;
\draw  [color={rgb, 255:red, 74; green, 74; blue, 74 }  ,draw opacity=1 ][fill={rgb, 255:red, 74; green, 74; blue, 74 }  ,fill opacity=1 ] (100,100) -- (110,100) -- (110,110) -- (100,110) -- cycle ;
\draw [color={rgb, 255:red, 74; green, 74; blue, 74 }  ,draw opacity=1 ][line width=3.75]    (55,55) .. controls (79.2,63.5) and (98.2,84.5) .. (105,105) ;
\draw [color={rgb, 255:red, 155; green, 155; blue, 155 }  ,draw opacity=1 ][line width=0.75]    (200,200) -- (50,200) ;

\draw (101,70.4) node [anchor=north west][inner sep=0.75pt]    {$T_{1}$};
\draw (151,122.4) node [anchor=north west][inner sep=0.75pt]    {$T_{2}$};
\draw (151,180.4) node [anchor=north west][inner sep=0.75pt]    {$B_{3}$};
\draw (101,130.4) node [anchor=north west][inner sep=0.75pt]    {$B_{2}$};
\draw (51,80.4) node [anchor=north west][inner sep=0.75pt]    {$B_{1}$};

\end{tikzpicture}\\
\end{minipage}
\begin{minipage}[t][0.2\textheight][c]{0.5\textwidth}
\centering
\begin{tikzpicture}[scale=.34]
    
    \draw  [draw opacity=0][fill={rgb, 255:red, 0; green, 0; blue,  255 }  ,fill opacity=0.2 ] (0,6) -- (2,6) -- (2,7) -- (0,7) -- cycle ;

    \draw  [draw opacity=0][fill={rgb, 255:red, 0; green, 0; blue,  255 }  ,fill opacity=0.2 ] (1,6) -- (1,4) -- (2,4) -- (2,6) -- cycle ;

    \draw  [draw opacity=0][fill={rgb, 255:red, 0; green, 0; blue,  255 }  ,fill opacity=0.2 ] (2,5) -- (5,5) -- (5,4) -- (2,4) -- cycle ;

    \draw  [draw opacity=0][fill={rgb, 255:red, 0; green, 0; blue,  255 }  ,fill opacity=0.2 ] (4,4) -- (6,4) -- (6,3) -- (4,3) -- cycle ;

    \draw  [draw opacity=0][fill={rgb, 255:red, 0; green, 0; blue,  255 }  ,fill opacity=0.2 ] (5,3) -- (6,3) -- (6,1) -- (5,1) -- cycle ;

    \draw  [draw opacity=0][fill={rgb, 255:red, 0; green, 0; blue,  255 }  ,fill opacity=0.2 ] (6,2) -- (9,2) -- (9,1) -- (6,1) -- cycle ;

    \draw  [draw opacity=0][fill={rgb, 255:red, 0; green, 0; blue,  255 }  ,fill opacity=0.2 ] (8,1) -- (9,1) -- (9,0) -- (8,0) -- cycle ;

    \draw[line width=0.1mm] (0, 0) grid (9, 7);
    \draw[line width=0.6mm] (0,0) -- (0,7); 
    \draw[line width=0.6mm] (4,0) -- (4,7); 
    \draw[line width=0.6mm] (8,0) -- (8,7); 
    \draw[line width=0.6mm] (0,1) -- (9,1);
    \draw[line width=0.6mm] (0,4) -- (9,4);
    \draw[line width=0.6mm] (0,7) -- (9,7);

    \setcounter{cols}{0}
    \setcounter{rows}{7}
    \setcounter{row}{0}
    \setrow {4,3,3,2, ,2,2,3,3}
    \setrow {2,3, ,3,3,2,3,2,2}
    \setrow { ,3,3,3,3,1,1,1,2}
    \setrow { ,3,3,3,3,3,1,2,3}
    \setrow { , ,2,2,3,3,2,3,2}
    \setrow { , ,2, , ,3,3,3,3}
    \setrow { ,2,2,3,3, , , ,3}
    \setrow { , , , , };

\setcounter{cols}{0}

\end{tikzpicture}
\end{minipage}
\caption{On the left, the structure for \Cref{lem:funcbridge2} is shown: $Z$-blocks $B_1$, $B_2$, and $B_3$ are diagonally connected, with two bridges $T_1$ and $T_2$ connecting $B_1$ to $B_2$, and $B_2$ to $B_3$, respectively. On the right, an example is given for the word $Z = HVVHHHV$. Thick black lines indicate the division between blocks. We consider two bridges $T_i$ for $i = 1,2$, where $T_1$ connects the top-left $Z$-block ($B_1$) with the central $Z$-block ($B_2$), and $T_2$ connects the central $Z$-block with the bottom-right block ($B_3$, not fully shown in the figure). The cells that belong to either bridge are highlighted in blue. All cells outside the bridges have values strictly less than 4.}
\label{fig:lemma3}
\end{figure}

\begin{corollary}\label{lem:funcbridge2}
    Let $Z \in \{H,V\}^{h,v,k}$ and let us consider 2 $Z$-bridges
    $T_i=(P_i,c_i)$ with $i \in \{1,2\}$ connecting blocks $B_1,B_2$ and
    $B_2,B_3$ respectively as in \Cref{fig:lemma3}. If we consider the
    configuration $\Tilde{c}(x)$ where 
    \[
        \Tilde{c}(x) = \begin{cases}
        4 & \text{ if } x=P_1(0), \\
        3 & \text{ if } x \in (P_1 \cup P_2)\setminus P_1(0), \\
        s(x) & \text{ if } x \in (\overline{B_1} \cup \overline{B_2} \cup \overline{B_3})\setminus (P_1\cup P_2)\end{cases}
    \]
    with $s(x)$ strictly less than 4, we have that $F^2(\Tilde{c}(x),Z)(P_2(k))=4.$
\end{corollary}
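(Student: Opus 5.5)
The plan is to apply \Cref{lem:funcbridge} twice, once per cycle of $Z$, and to check that between the two applications the configuration around $T_2$ is again of the form the lemma requires.

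\emph{First cycle.} Restrict attention to $T_1$. Its cells are $4$ at $P_1(0)$ and $3$ along the rest of $P_1$. Every other cell of $\overline{B_1}$, including cells of $P_2$ that happen to lie there, has value at most $3$. Hence the hypothesis of \Cref{lem:funcbridge} (together with \Cref{remarkfunc}) holds, and the lemma gives $F(\Tilde{c},Z)(P_1(k))=4$. Since $P_1(k)=B_2^+=P_2(0)$, after one cycle the source cell of $T_2$ carries a signal.

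\emph{Second cycle.} The induction in the proof of \Cref{lem:funcbridge} tracks exactly where signals are. At step $i$ the only cell in state $\ge 4$ is $P_1(i)$, and every cell that fired has dropped to $2$. Cells that merely received a grain were below $4$ beforehand, but they could be raised to $4$ unless they were at most $2$; I must rule this out. The affected neighbours of $T_1$ are the set $N(T_1)$ of \Cref{fig:bridge_NT}, namely the cells in $N_h$ or $N_v$ of $P_1(i-1)$ at step $i$. For each such cell other than $P_1(i)$ I need either that it lies off $P_2$ and stays below $4$, or that it lies on $P_2$.

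The key observation is the order in which $Z$ moves. It begins with $H$ and ends with $V$, so $P_1$ arrives at $B_2^+$ vertically and $P_2$ leaves $B_2^+$ horizontally. The only cell where the two paths meet is then $P_2(0)$. Any cell of $P_2$ that receives a grain from $T_1$ would need to be adjacent to $P_1$ in the direction of the step, and I would argue by the monotone structure of the paths that this only happens near the shared corner, where that cell is $P_1(k)$ itself.

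\emph{Main obstacle: neighbouring cells reaching $4$.} A cell $x\notin P_1$ next to the bridge may start at $3$ and gain a grain, becoming a spurious signal. This is exactly what \Cref{fig:lemma3} risks with the values $s$. I would handle it as in the lemma's proof. A spurious $4$ at $x$ is created at step $i$ only along the current direction $Z(i)$, perpendicular-adjacent positions never receive grains, and the lemma's invariant already asserts that its conclusion holds for arbitrary $s<4$. So I would strengthen the induction of \Cref{lem:funcbridge} to state that, after the cycle, all cells off $P_1$ in $\overline{B_1}$ are at most $3$. Indeed, the only cells receiving grains along step $Z(i)$ are $P_1(i)$ and the cell on the opposite side of $P_1(i-1)$, and a careful statement would require this cell to be at most $2$ or to belong to $P_1$; I expect that the definition of $s$ implicitly allows this. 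Granting it, after one cycle the configuration restricted to $\overline{B_2}$ is $4$ at $P_2(0)$, $3$ along the rest of $P_2$ (untouched, since cells of $P_2$ outside $N(T_1)$ are unchanged), and below $4$ elsewhere. Applying \Cref{lem:funcbridge} to $T_2$ then gives $F^2(\Tilde{c},Z)(P_2(k))=4$.
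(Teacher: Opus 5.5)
There is a genuine gap, exactly at the point you flag and then ``grant''. Your two-cycle decomposition needs the configuration at time $k$ to satisfy the hypotheses of \Cref{lem:funcbridge} for $T_2$. You propose to get this from the invariant that every cell of $\overline{B_1}\setminus P_1$ is at most $3$ after the first cycle. That invariant does not follow from $s<4$, and it is false in general. Whenever the path turns ($Z(i)\neq Z(i-1)$, or $i=1$), the firing cell $P_1(i-1)$ also sends a grain to the off-path cell on its far side, and that cell is allowed to hold a $3$.

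The configuration of \Cref{fig:lemma3}, drawn for this very corollary, shows this. At the last step ($Z(k)=V$ after $Z(k-1)=H$), the cell $P_1(k-1)-V$ lies in $\overline{B_1}\setminus P_1$ and holds a $3$. It receives a grain and is a signal at time $k$, alongside $P_2(0)$. Stray signals also survive to the end of the cycle in \Cref{fig:bridge_basura}. So there is no clean configuration at the cycle boundary to which \Cref{lem:funcbridge} can be re-applied, and nothing in the definition of $s$ rules this out. Note also that \Cref{lem:funcbridge} only speaks about the path cells; it never asserts that the surroundings stay below $4$.

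The missing idea is not to cut at the cycle boundary. Since $P_1$ and $P_2$ are $Z$-paths with the same $\alpha,\beta$, their concatenation is a single $ZZ$-path from $B_1^+$ to $B_3^+$. Then $\tilde{c}$ is precisely a source bridge for $ZZ$ with garbage below $4$, which is the paper's proof. The induction of \Cref{lem:funcbridge} runs for $2k$ steps and never needs the off-path cells to stay clean.

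To see why, let $W$ be the prefix of length $i-1$ of $ZZ$. At time $i-1$, only cells within horizontal distance $|W|_H$ and vertical distance $|W|_V$ of $P(0)$ can differ from $\tilde{c}$. This holds because $\tilde{c}$ with $P(0)$ lowered to $3$ has no signals and is therefore a fixed point. Consequently $P(i)$ still holds $3$, its only neighbour in direction $Z(i)$ that can be a signal is $P(i-1)$, and all $P(j)$ with $j>i$ are untouched. Stray signals do appear, but they always remain behind this front. That is exactly the information your per-cycle decomposition throws away.
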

\begin{proof}
    We note that this structure satisfies the hypotheses of \Cref{lem:funcbridge} for a bridge $T$ defined for the word $ZZ$ (that is, $Z$ concatenated with itself). Therefore, the conclusion follows identically. 
\end{proof}

This notion of combining bridges will be generalized in the next section, but we treat this case separately because it is stronger than our previous assumptions regarding the functionality of combining bridges. In particular, this situation requires special attention because, unlike before, now there may be garbage outside the bridges. This result follows directly from the proof of \Cref{lem:funcbridge}.

Let $Z \in \{H,V\}^{v,h,k}$ a word and $T =(P,c)$ a $Z$-bridge. We define $N(P)$, called the \emph{affected neighbor} of $T$, as the set of cells which gain a cell over the transition of a cycle when $T$ is transmitting a signal (See \Cref{fig:bridge_NT}). 

Formally, given a bridge $T = (P,c)$ for a word $Z$ and $I = \{i\,|\,Z(i) = H\}$
and $J = \{j\,|\,Z(j) = V\}$, we define 
\[
  N(T) = P \cup\left[\bigcup_{i\in I}N_h(P(i))\right] \cup \left[\bigcup_{j\in J}N_v(P(j))\right]
\]

\begin{lemma}\label{lem:funcbridgesink}
    Let $Z \in \{H,V\}^{h,v,k}$ and let us consider a source $Z$-bridge $T =(P,c)$ with a sink at $l$ connecting $B_1^+,B_2^+$ from two diagonally connected blocks $B_1,B_2$, with $P = \{P(0),...,P(k)\}$.  If we consider the configuration where every cell outside the bridge has value less or equal to 3, and every cell at $N(T)$ has value less than 3, i.e if we define $\Tilde{c}(x) = \begin{cases}
        c(x) & \text{ if } x \in P, \\
        s(x) & \text{ if } x \notin P\end{cases}$ where $s$ is any configuration with values strictly less than 4 outside $N(T)$ and strictly less than 3 at $N(T)$, then we have that $F(\Tilde{c}(x),Z)(P(l))=3.$
\end{lemma}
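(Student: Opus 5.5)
We would prove this along the lines of \Cref{lem:funcbridge}, by induction on the step index $i$ of the cycle. Let $\alpha,\beta$ be the constants of the $Z$-path $P$. The claim is that for every $i<l$ we have $F(\Tilde{c},Z,i)(P(i))=4$, that every $P(j)$ with $i<j<l$ still holds $3$, and that $P(l)$ still holds $2$. The base case $i=0$ holds by the definition of a source bridge. For the inductive step with $i<l$ we would copy the argument of \Cref{lem:funcbridge}. If $Z(i)=H$, then $P(i)=P(i-1)+\alpha H$ lies in $N_h(P(i-1))$. The firing cell $P(i-1)$ drops to $2$ and $P(i)$ rises from $3$ to $4$ (the case $Z(i)=V$ is symmetric). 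Cells further along the path are out of reach by the same distance argument as in \Cref{lem:funcbridge}, so they keep their values.

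Next comes the step $i=l$. At time $l-1$ the cell $P(l-1)$ holds $4$, and since $P(l)\in N_{Z(l)}(P(l-1))$ it receives exactly one grain from it. So it goes from $2$ to $3$, provided no other signal fires into it during this step. The signal is thereby absorbed. From then on we want to show there are no signals at all, so $P(l)$ stays at $3$ until the end of the cycle, and in particular $F(\Tilde{c},Z)(P(l))=3$. The cells $P(j)$ with $j>l$ hold $0$ and cannot reach $4$ in the remaining steps in any case.

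The main obstacle is ruling out stray signals, both during the propagation and after absorption. The hypothesis that every cell of $N(T)$ holds less than $3$ is designed for exactly this, and the key step is a small invariant. At time $i\le l$ the only signal in $\overline{B_1}$ (and in the relevant neighboring region) is the one at $P(i)$, or there is none once $i\ge l$. Moreover, each cell outside $P$ has received at most the grains coming from the firings of $P(0),\dots,P(i-1)$. A cell $y\notin P$ gains a grain at step $j$ only if $y\in N_{Z(j)}(P(j-1))\subseteq N(T)$. We would then argue that such a $y$ gains at most one grain over the whole run, or more precisely that it cannot reach $4$ starting from a value at most $2$. For this we check that a cell outside $P$ cannot be a $Z(j)$-neighbor of two distinct firing positions. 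This follows because the path moves monotonically in each coordinate, since $\alpha$ and $\beta$ are fixed. The one case needing care is a cell adjacent to a corner of the path, where an $H$-step is followed by a $V$-step. There the offending cell would have to lie on the path itself, so the gain is bounded by one. A cell outside $N(T)$ gains nothing, and since it starts below $4$ it never fires.

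With these facts, no cell other than the designated path cell is ever a signal, so the induction closes. After step $l$ no signal remains at all, so every value is frozen for the rest of the cycle, and this gives $F(\Tilde{c},Z)(P(l))=3$. As in \Cref{remarkfunc}, we would also assume that no signals enter from the adjacent blocks.
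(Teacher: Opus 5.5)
Your proposal is correct and follows the paper's route. You run the induction of \Cref{lem:funcbridge} up to step $l-1$, note that at step $l$ the sink cell only climbs from $2$ to $3$, and use the hypothesis on $N(T)$ to conclude that no signal survives, so the configuration is frozen for the rest of the cycle.

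Where the paper asserts the last point in one line, you justify it via monotonicity of the $Z$-path: each cell is the trailing $Z(j)$-neighbor of at most one firing position, so off-path cells of $N(T)$ gain at most one grain. The same fact also covers the already-fired cells $P(j)$, $j<l$, which your write-up leaves implicit: they can only go from $2$ back to $3$, never to $4$. Note also that your inclusion $N_{Z(j)}(P(j-1))\subseteq N(T)$ relies on the verbal definition of $N(T)$ as the cells that gain a grain (as in \Cref{fig:bridge_NT}). It does not follow from the displayed formula, which uses $P(i)$ where $P(i-1)$ is meant.
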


That is, bridges with a Sink interrupt the propagation of signals through the bridge, which will be useful for constructing switches.

\begin{proof}
    Proof follows analogous to previous lemma except we have $F(\Tilde{c},Z,l)(P(l)) = F(\Tilde{c},Z,l-1)(P(l)) +1=3$, which ensures with our assumption for $N(P)$ that no cell has a value of 4 (or greater) at step $l$ and $F(\Tilde{c},Z)(x) \leq 3,  \forall x \in N(P)$.
\end{proof}

\begin{figure}[t]
\centering
\resizebox{\textwidth}{!}{%
\begin{tikzpicture}[scale=.3]

    \draw  [draw opacity=0][fill={rgb, 255:red, 255; green, 0; blue,  0 }  ,fill opacity=0.5 ] (0,15) -- (1,15) -- (1,14) -- (0,14) -- cycle;
    
    \draw [draw opacity=0, fill={rgb, 255:red, 255; green, 0; blue, 0}, fill opacity=0.5] (12,15) -- (13,15) -- (13,14) -- (12,14) -- cycle;

    \draw [draw opacity=0, fill={rgb, 255:red, 255; green, 0; blue, 0}, fill opacity=0.5] (23,14) -- (24,14) -- (24,13) -- (23,13) -- cycle;

    \draw[line width=0.1mm] (0, 9) grid (8, 15);
    \draw[line width=0.6mm] (0,9) -- (0,15); 
    \draw[line width=0.6mm] (4,9) -- (4,15); 
    \draw[line width=0.6mm] (8,9) -- (8,15); 
    \draw[line width=0.6mm] (0,9) -- (8,9);
    \draw[line width=0.6mm] (0,12) -- (8,12);
    \draw[line width=0.6mm] (0,15) -- (8,15);
    \setcounter{cols}{0}
    \setcounter{rows}{15}
    \setcounter{row}{0}
    \setrow {4,3,3, , }
    \setrow {3,3,3,3, }
    \setrow { ,2, , , }
    \setrow {3, ,3, , };

    \draw[line width=0.1mm] (11, 9) grid (19, 15);
    \draw[line width=0.6mm] (11,9) -- (11,15); 
    \draw[line width=0.6mm] (15,9) -- (15,15); 
    \draw[line width=0.6mm] (19,9) -- (19,15); 
    \draw[line width=0.6mm] (11,9) -- (19,9);
    \draw[line width=0.6mm] (11,12) -- (19,12);
    \draw[line width=0.6mm] (11,15) -- (19,15);

    \setcounter{cols}{11}
    \setcounter{rows}{15}
    \setcounter{row}{0}
    \setrow {2,4,3, , }
    \setrow {3,3,3,3, }
    \setrow { ,2, , , }
    \setrow {3, ,3, , };

    \draw[line width=0.1mm] (22, 9) grid (30, 15);
    \draw[line width=0.6mm] (22,9) -- (22,15); 
    \draw[line width=0.6mm] (26,9) -- (26,15); 
    \draw[line width=0.6mm] (30,9) -- (30,15); 
    \draw[line width=0.6mm] (22,9) -- (30,9);
    \draw[line width=0.6mm] (22,12) -- (30,12);
    \draw[line width=0.6mm] (22,15) -- (30,15);

    \setcounter{cols}{22}
    \setcounter{rows}{15}
    \setcounter{row}{0}
    \setrow {2,2,3, , }
    \setrow {3,4,3,3, }
    \setrow { ,2, , , }
    \setrow {3, ,3, , };

    \draw[line width=0.1mm] (33, 9) grid (41, 15);
    \draw[line width=0.6mm] (33,9) -- (33,15); 
    \draw[line width=0.6mm] (37,9) -- (37,15); 
    \draw[line width=0.6mm] (41,9) -- (41,15); 
    \draw[line width=0.6mm] (33,9) -- (41,9);
    \draw[line width=0.6mm] (33,12) -- (41,12);
    \draw[line width=0.6mm] (33,15) -- (41,15);

    \setcounter{cols}{33}
    \setcounter{rows}{15}
    \setcounter{row}{0}
    \setrow {2,3,3, , }
    \setrow {3,2,3,3, }
    \setrow { ,3, , , }
    \setrow {3, ,3, , };

\setcounter{cols}{0}
    \draw (3.5,16.6) node [anchor=north west,scale=0.8][inner sep=0.75pt]    {$c$};
    
    \draw (35,16.5) node [anchor=north west,scale=0.8][inner sep=0.75pt]    {$F(c,Z_1,3)$};

    \draw (13,16.5) node [anchor=north west,scale=0.8][inner sep=0.75pt]    {$F(c,Z_1,1)$};

    \draw (24,16.5) node [anchor=north west,scale=0.8][inner sep=0.75pt]    {$F(c,Z_1,2)$};

\end{tikzpicture}
}
\caption{Structure and functionality for \Cref{lem:funcbridgesink}, corresponding to the word $Z_1 = HVVHHHV$, and a bridge $T = (P,c)$ with a sink at 3. Thick black lines indicate the division between blocks. Cells outside the bridge have values less than 4, and after 3 updates, $P(3)$ reaches value 3, having started at 2.} 
\label{fig:lemma4}
\end{figure}
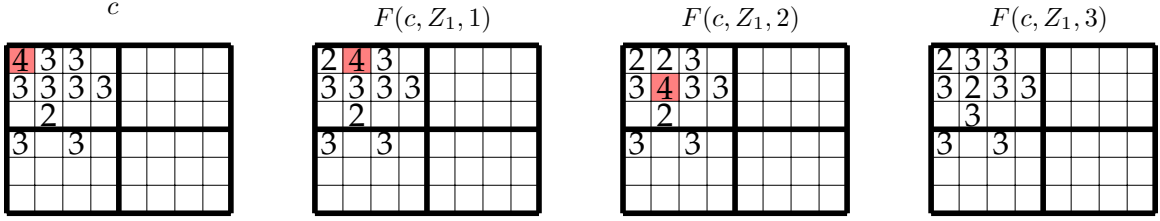


\paragraph{Combining bridges.}
As previously mentioned, the entire circuit is built around bridges, which will serve as the modular structure of the paper. Since bridges depend on the word $Z$, we need to define a formal way to combine them. After that, we will be able to describe every other structure as a union of bridges, this will be called gadget.

As bridges are defined as a tuple consisting of a set and a configuration, we define an operation between configurations to be used in the union of bridges.
Given the configurations $c_1,c_2$ we define $c_1\oplus c_2$ as the
configuration $c$ such that: 
\[
c(x) = \begin{cases}
2 & \text{ if } c_1(x) = 2 \lor c_2(x) = 2\\
\max\{c_1,c_2\}(x) & \text{ if not}
\end{cases}
\]
Given two bridges $T_1 = (P_1,c_1)$ and $T_2 = (P_2,c_2)$. We define $T_1+T_2 := (P_1\cup P_2, c_1 \oplus c_2)$ wich will be called bridge gadget, or simply gadget when the context is clear. We extend the definition to define gadgets obtained from other gadgets, formally, given two gadgets $G_1 = (Q_1,c_1), G_2 = (Q_2,c_2)$. We define the gadget $G = G_1+G_2 :=  (G_1\cup G_2, c_1 \oplus c_2)$.

We saw that bridges are used to transfer signals between 2 blocks. With this operation, we will be able to use bridges to move signals throught any number of blocks (With concatenations as in \cref{fig:operation2}), or even to duplicate a signal, as will be seen in \cref{sec:layer-2}. 

\begin{figure}[t]
\centering
\resizebox{.7\textwidth}{!}{%
\begin{tikzpicture}[scale=.34]

    \draw[line width=0.1mm] (0, 8) grid (9, 15);
    \draw[line width=0.6mm] (0,8) -- (0,15); 
    \draw[line width=0.6mm] (4,8) -- (4,15); 
    \draw[line width=0.6mm] (8,8) -- (8,15); 
    \draw[line width=0.6mm] (0,9) -- (9,9);
    \draw[line width=0.6mm] (0,12) -- (9,12);
    \draw[line width=0.6mm] (0,15) -- (9,15);
    
    \setcounter{rows}{15}
    \setcounter{row}{0}
    \setrow {4,3, , , }
    \setrow { ,3, , , }
    \setrow { ,3,3,3,3}
    \setrow { , , , ,3};
    
    \draw (9.5,12) node [anchor=north west,scale=1][inner sep=0.75pt]    {$\oplus$};
    
    \draw[line width=0.1mm] (11, 8) grid (20, 15);
    \draw[line width=0.6mm] (11,8) -- (11,15); 
    \draw[line width=0.6mm] (15,8) -- (15,15); 
    \draw[line width=0.6mm] (19,8) -- (19,15); 
    \draw[line width=0.6mm] (11,9) -- (20,9);
    \draw[line width=0.6mm] (11,12) -- (20,12);
    \draw[line width=0.6mm] (11,15) -- (20,15);

    \setcounter{cols}{11}
    \setcounter{rows}{15}
    \setcounter{row}{0}
    \setrow { , , , , }
    \setrow { , , , , }
    \setrow { , , , , }
    \setrow { , , , ,3,3}
    \setrow { , , , , ,3}
    \setrow { , , , , ,3,3,3,3}
    \setrow { , , , , , , , ,3}
    \setrow { , , , , };
    
    \draw (20.5,12) node [anchor=north west,scale=1][inner sep=0.75pt]    {$=$};
    
    \draw[line width=0.1mm] (22, 8) grid (31, 15);
    \draw[line width=0.6mm] (22,8) -- (22,15); 
    \draw[line width=0.6mm] (26,8) -- (26,15); 
    \draw[line width=0.6mm] (30,8) -- (30,15); 
    \draw[line width=0.6mm] (22,9) -- (31,9);
    \draw[line width=0.6mm] (22,12) -- (31,12);
    \draw[line width=0.6mm] (22,15) -- (31,15);

    \setcounter{cols}{22}
    \setcounter{rows}{15}
    \setcounter{row}{0}
    \setrow {4,3, , , }
    \setrow { ,3, , , }
    \setrow { ,3,3,3,3}
    \setrow { , , , ,3,3}
    \setrow { , , , , ,3}
    \setrow { , , , , ,3,3,3,3}
    \setrow { , , , , , , , ,3}
    \setrow { , , , , };

\setcounter{cols}{0}

\end{tikzpicture}
}
\caption{Operation $\oplus$ to concatenate two bridges, which will be useful for passing signals through multiple blocks (in this case, two)}
\label{fig:operation2}
\end{figure}
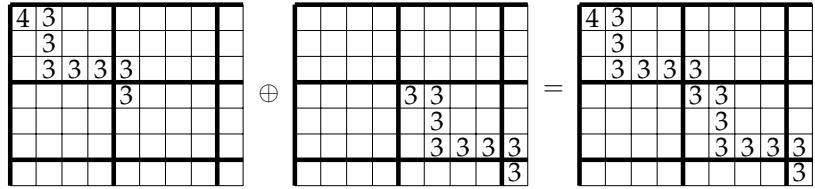

\section{Implementation of Layer 2 Components} \label{sec:layer-2}
\subsection{Semi-wires}\label{Sec_semi_wire}
As we mentioned, concatenating enough bridges can easily be used to achieve a gadget to move signals through any number of $Z-$blocks, this will be achieved with a gadget we call semi-wire. 

Formally, for any sequence of distinct $Z$-blocks ${B_1, \dots, B_{m}}$, such that for all $i \in \{0,...,m-1\}$, $B_i$ is diagonally connected to $B_{i+1}$ and is not diagonally connected to any block $B_j$ with $j \neq i+1,i-1$. Given the positive (resp. negative) bridges $T_i = (P_i,c_i)$, with $P_i$ connecting $B_i$ and $B_{i+1}$, we define a positive (resp. negative) semi-wire as the bridge gadget $(S,c) = \sum_{i\in [0...m-1]} T_i$. We call $B_0^+$ (resp. $B_0^-$) the input of the semi-wire, and $B_m^+$ (resp. $B_m^-$) its output. 

\begin{figure}
\centering
\tikzset{every picture/.style={line width=0.75pt}} 

\begin{tikzpicture}[x=0.6pt,y=0.6pt,yscale=-1,xscale=1]

\draw  [draw opacity=0][fill={rgb, 255:red, 184; green, 233; blue, 134 }  ,fill opacity=0.5 ] (450,150) -- (500,150) -- (500,200) -- (450,200) -- cycle ;
\draw  [draw opacity=0][fill={rgb, 255:red, 184; green, 233; blue, 134 }  ,fill opacity=0.5 ] (400,200) -- (450,200) -- (450,250) -- (400,250) -- cycle ;
\draw  [draw opacity=0][fill={rgb, 255:red, 184; green, 233; blue, 134 }  ,fill opacity=0.5 ] (100,200) -- (150,200) -- (150,250) -- (100,250) -- cycle ;
\draw  [draw opacity=0][fill={rgb, 255:red, 184; green, 233; blue, 134 }  ,fill opacity=0.5 ] (150,250) -- (200,250) -- (200,300) -- (150,300) -- cycle ;
\draw  [draw opacity=0][fill={rgb, 255:red, 208; green, 2; blue, 27 }  ,fill opacity=0.71 ] (200,200) -- (250,200) -- (250,250) -- (200,250) -- cycle ;
\draw  [draw opacity=0][fill={rgb, 255:red, 184; green, 233; blue, 134 }  ,fill opacity=0.5 ] (250,150) -- (300,150) -- (300,200) -- (250,200) -- cycle ;
\draw  [draw opacity=0][fill={rgb, 255:red, 208; green, 2; blue, 27 }  ,fill opacity=0.71 ] (150,150) -- (200,150) -- (200,200) -- (150,200) -- cycle ;
\draw  [draw opacity=0][fill={rgb, 255:red, 184; green, 233; blue, 134 }  ,fill opacity=0.5 ] (100,100) -- (150,100) -- (150,150) -- (100,150) -- cycle ;
\draw [color={rgb, 255:red, 155; green, 155; blue, 155 }  ,draw opacity=1 ]   (100,250) -- (300,250) ;
\draw [color={rgb, 255:red, 155; green, 155; blue, 155 }  ,draw opacity=1 ]   (100,200) -- (300,200) ;
\draw [color={rgb, 255:red, 155; green, 155; blue, 155 }  ,draw opacity=1 ]   (150,100) -- (150,300) ;
\draw [color={rgb, 255:red, 155; green, 155; blue, 155 }  ,draw opacity=1 ]   (200,100) -- (200,300) ;
\draw [color={rgb, 255:red, 155; green, 155; blue, 155 }  ,draw opacity=1 ]   (100,150) -- (300,150) ;
\draw  [color={rgb, 255:red, 155; green, 155; blue, 155 }  ,draw opacity=1 ][line width=0.75]  (100,100) -- (300,100) -- (300,300) -- (100,300) -- cycle ;
\draw [color={rgb, 255:red, 155; green, 155; blue, 155 }  ,draw opacity=1 ]   (250,100) -- (250,300) ;
\draw [color={rgb, 255:red, 74; green, 74; blue, 74 }  ,draw opacity=1 ][line width=3.75]    (155,255) .. controls (179.2,246.5) and (198.2,225.5) .. (205,205) ;
\draw  [color={rgb, 255:red, 74; green, 74; blue, 74 }  ,draw opacity=1 ][fill={rgb, 255:red, 74; green, 74; blue, 74 }  ,fill opacity=1 ] (150,260) -- (160,260) -- (160,250) -- (150,250) -- cycle ;
\draw  [color={rgb, 255:red, 74; green, 74; blue, 74 }  ,draw opacity=1 ][fill={rgb, 255:red, 74; green, 74; blue, 74 }  ,fill opacity=1 ] (150,150) -- (160,150) -- (160,160) -- (150,160) -- cycle ;
\draw [color={rgb, 255:red, 74; green, 74; blue, 74 }  ,draw opacity=1 ][line width=3.75]    (105,105) .. controls (129.2,113.5) and (148.2,134.5) .. (155,155) ;
\draw  [color={rgb, 255:red, 74; green, 74; blue, 74 }  ,draw opacity=1 ][fill={rgb, 255:red, 74; green, 74; blue, 74 }  ,fill opacity=1 ] (100,110) -- (110,110) -- (110,100) -- (100,100) -- cycle ;
\draw [color={rgb, 255:red, 74; green, 74; blue, 74 }  ,draw opacity=1 ][line width=3.75]    (155,155) .. controls (130.8,163.5) and (111.8,184.5) .. (105,205) ;
\draw  [color={rgb, 255:red, 74; green, 74; blue, 74 }  ,draw opacity=1 ][fill={rgb, 255:red, 74; green, 74; blue, 74 }  ,fill opacity=1 ] (100,210) -- (110,210) -- (110,200) -- (100,200) -- cycle ;
\draw [color={rgb, 255:red, 74; green, 74; blue, 74 }  ,draw opacity=1 ][line width=3.75]    (105,205) .. controls (129.2,213.5) and (148.2,234.5) .. (155,255) ;
\draw  [color={rgb, 255:red, 74; green, 74; blue, 74 }  ,draw opacity=1 ][fill={rgb, 255:red, 74; green, 74; blue, 74 }  ,fill opacity=1 ] (200,210) -- (210,210) -- (210,200) -- (200,200) -- cycle ;
\draw  [color={rgb, 255:red, 74; green, 74; blue, 74 }  ,draw opacity=1 ][fill={rgb, 255:red, 74; green, 74; blue, 74 }  ,fill opacity=1 ] (250,160) -- (260,160) -- (260,150) -- (250,150) -- cycle ;
\draw [color={rgb, 255:red, 74; green, 74; blue, 74 }  ,draw opacity=1 ][line width=3.75]    (205,205) .. controls (229.2,196.5) and (248.2,175.5) .. (255,155) ;
\draw  [draw opacity=0][fill={rgb, 255:red, 184; green, 233; blue, 134 }  ,fill opacity=0.5 ] (500,200) -- (550,200) -- (550,250) -- (500,250) -- cycle ;
\draw  [draw opacity=0][fill={rgb, 255:red, 184; green, 233; blue, 134 }  ,fill opacity=0.5 ] (350,150) -- (400,150) -- (400,200) -- (350,200) -- cycle ;
\draw [color={rgb, 255:red, 155; green, 155; blue, 155 }  ,draw opacity=1 ]   (350,250) -- (550,250) ;
\draw [color={rgb, 255:red, 155; green, 155; blue, 155 }  ,draw opacity=1 ]   (400,150) -- (400,250) ;
\draw [color={rgb, 255:red, 155; green, 155; blue, 155 }  ,draw opacity=1 ]   (450,150) -- (450,250) ;
\draw [color={rgb, 255:red, 155; green, 155; blue, 155 }  ,draw opacity=1 ]   (350,200) -- (550,200) ;
\draw  [color={rgb, 255:red, 155; green, 155; blue, 155 }  ,draw opacity=1 ][line width=0.75]  (350,150) -- (550,150) -- (550,250) -- (350,250) -- cycle ;
\draw  [color={rgb, 255:red, 74; green, 74; blue, 74 }  ,draw opacity=1 ][fill={rgb, 255:red, 74; green, 74; blue, 74 }  ,fill opacity=1 ] (400,200) -- (410,200) -- (410,210) -- (400,210) -- cycle ;
\draw [color={rgb, 255:red, 74; green, 74; blue, 74 }  ,draw opacity=1 ][line width=3.75]    (355,155) .. controls (379.2,163.5) and (398.2,184.5) .. (405,205) ;
\draw  [color={rgb, 255:red, 74; green, 74; blue, 74 }  ,draw opacity=1 ][fill={rgb, 255:red, 74; green, 74; blue, 74 }  ,fill opacity=1 ] (350,160) -- (360,160) -- (360,150) -- (350,150) -- cycle ;
\draw  [color={rgb, 255:red, 74; green, 74; blue, 74 }  ,draw opacity=1 ][fill={rgb, 255:red, 74; green, 74; blue, 74 }  ,fill opacity=1 ] (500,210) -- (510,210) -- (510,200) -- (500,200) -- cycle ;
\draw  [color={rgb, 255:red, 74; green, 74; blue, 74 }  ,draw opacity=1 ][fill={rgb, 255:red, 74; green, 74; blue, 74 }  ,fill opacity=1 ] (450,160) -- (460,160) -- (460,150) -- (450,150) -- cycle ;
\draw [color={rgb, 255:red, 74; green, 74; blue, 74 }  ,draw opacity=1 ][line width=3.75]    (405,205) .. controls (429.2,196.5) and (448.2,175.5) .. (455,155) ;
\draw [color={rgb, 255:red, 155; green, 155; blue, 155 }  ,draw opacity=1 ]   (500,150) -- (500,175.5) -- (500,250) ;
\draw [color={rgb, 255:red, 74; green, 74; blue, 74 }  ,draw opacity=1 ][line width=3.75]    (455,155) .. controls (479.2,163.5) and (498.2,184.5) .. (505,205) ;

\draw (109,122.4) node [anchor=north west][inner sep=0.75pt]    {$B_{1}$};
\draw (159,172.4) node [anchor=north west][inner sep=0.75pt]    {$B_{2}$};
\draw (109,222.4) node [anchor=north west][inner sep=0.75pt]    {$B_{3}$};
\draw (159,272.4) node [anchor=north west][inner sep=0.75pt]    {$B_{4}$};
\draw (221,222.4) node [anchor=north west][inner sep=0.75pt]    {$B_{5}$};
\draw (271,172.4) node [anchor=north west][inner sep=0.75pt]    {$B_{6}$};
\draw (359,172.4) node [anchor=north west][inner sep=0.75pt]    {$B_{1}$};
\draw (409,222.4) node [anchor=north west][inner sep=0.75pt]    {$B_{2}$};
\draw (521,222.4) node [anchor=north west][inner sep=0.75pt]    {$B_{4}$};
\draw (459,172.4) node [anchor=north west][inner sep=0.75pt]    {$B_{3}$};

\end{tikzpicture}
\caption{Invalid sequence for a semi-wire (left), as blocks $B_2$ and $B_5$ (shown in red) are diagonally connected but are not consecutive in the enumeration defining the semi-wire, which could allow signals to travel directly from $B_2$ to $B_5$, essentially accelerating signal propagation through the circuit. This is problematic, since—as we will see in later sections—the circuit is delay-sensitive. Valid sequence for a semi-wire (right)}
\label{fig:semiwireej}
\end{figure}
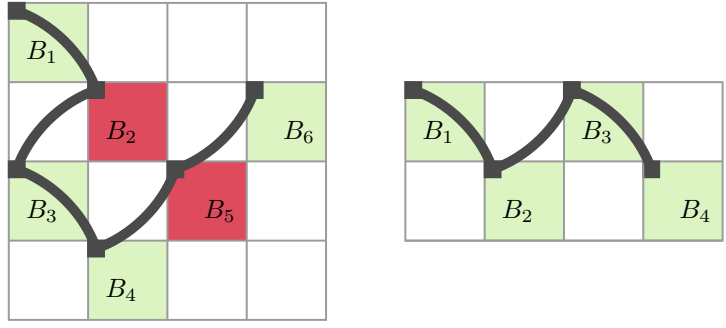

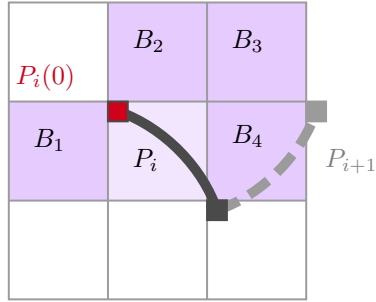
\begin{figure}
\centering
\tikzset{every picture/.style={line width=0.75pt}} 

\begin{tikzpicture}[x=0.75pt,y=0.75pt,yscale=-1,xscale=1]

\draw  [draw opacity=0][fill={rgb, 255:red, 144; green, 19; blue, 254 }  ,fill opacity=0.11 ] (130,119.85) -- (180,119.85) -- (180,169.85) -- (130,169.85) -- cycle ;
\draw  [draw opacity=0][fill={rgb, 255:red, 144; green, 19; blue, 254 }  ,fill opacity=0.22 ] (130,70) -- (180,70) -- (180,120) -- (130,120) -- cycle ;
\draw  [draw opacity=0][fill={rgb, 255:red, 144; green, 19; blue, 254 }  ,fill opacity=0.22 ] (180,70) -- (230,70) -- (230,120) -- (180,120) -- cycle ;
\draw  [draw opacity=0][fill={rgb, 255:red, 144; green, 19; blue, 254 }  ,fill opacity=0.22 ] (180,119.85) -- (230,119.85) -- (230,169.85) -- (180,169.85) -- cycle ;
\draw  [draw opacity=0][fill={rgb, 255:red, 144; green, 19; blue, 254 }  ,fill opacity=0.22 ] (80,120) -- (130,120) -- (130,170) -- (80,170) -- cycle ;
\draw  [color={rgb, 255:red, 155; green, 155; blue, 155 }  ,draw opacity=1 ][fill={rgb, 255:red, 155; green, 155; blue, 155 }  ,fill opacity=1 ] (230,130) -- (240,130) -- (240,120) -- (230,120) -- cycle ;
\draw [color={rgb, 255:red, 155; green, 155; blue, 155 }  ,draw opacity=1 ][line width=3.75]  [dash pattern={on 9pt off 4.5pt}]  (185,175) .. controls (209.2,166.5) and (228.2,145.5) .. (235,125) ;
\draw [color={rgb, 255:red, 155; green, 155; blue, 155 }  ,draw opacity=1 ]   (80,120) -- (230,120) ;
\draw [color={rgb, 255:red, 155; green, 155; blue, 155 }  ,draw opacity=1 ]   (80,170) -- (230,170) ;
\draw [color={rgb, 255:red, 155; green, 155; blue, 155 }  ,draw opacity=1 ]   (180,70) -- (180,220) ;
\draw [color={rgb, 255:red, 155; green, 155; blue, 155 }  ,draw opacity=1 ]   (130,70) -- (130,220) ;
\draw  [color={rgb, 255:red, 74; green, 74; blue, 74 }  ,draw opacity=1 ][fill={rgb, 255:red, 74; green, 74; blue, 74 }  ,fill opacity=1 ] (180,169.85) -- (190,169.85) -- (190,179.85) -- (180,179.85) -- cycle ;
\draw [color={rgb, 255:red, 74; green, 74; blue, 74 }  ,draw opacity=1 ][line width=3.75]    (135,124.85) .. controls (159.2,133.35) and (178.2,154.35) .. (185,174.85) ;
\draw  [color={rgb, 255:red, 74; green, 74; blue, 74 }  ,draw opacity=1 ][fill={rgb, 255:red, 208; green, 2; blue, 27 }  ,fill opacity=1 ] (130,130) -- (140,130) -- (140,120) -- (130,120) -- cycle ;
\draw  [color={rgb, 255:red, 155; green, 155; blue, 155 }  ,draw opacity=1 ][line width=0.75]  (80,70) -- (230,70) -- (230,219.85) -- (80,219.85) -- cycle ;

\draw (141,142.4) node [anchor=north west][inner sep=0.75pt]  [color={rgb, 255:red, 0; green, 0; blue, 0 }  ,opacity=1 ]  {$P_{i}$};
\draw (82,99.4) node [anchor=north west][inner sep=0.75pt]  [color={rgb, 255:red, 208; green, 2; blue, 27 }  ,opacity=1 ]  {$P_{i}( 0)$};
\draw (91,132.4) node [anchor=north west][inner sep=0.75pt]  [color={rgb, 255:red, 0; green, 0; blue, 0 }  ,opacity=1 ]  {$B_{1}$};
\draw (141,82.4) node [anchor=north west][inner sep=0.75pt]  [color={rgb, 255:red, 0; green, 0; blue, 0 }  ,opacity=1 ]  {$B_{2}$};
\draw (191,82.4) node [anchor=north west][inner sep=0.75pt]  [color={rgb, 255:red, 0; green, 0; blue, 0 }  ,opacity=1 ]  {$B_{3}$};
\draw (191,130.4) node [anchor=north west][inner sep=0.75pt]  [color={rgb, 255:red, 0; green, 0; blue, 0 }  ,opacity=1 ]  {$B_{4}$};
\draw (238,142.4) node [anchor=north west][inner sep=0.75pt]  [color={rgb, 255:red, 128; green, 128; blue, 128 }  ,opacity=1 ]  {$P_{i+1}$};

\end{tikzpicture}
\caption{Affected blocks for a bridge $P_i$ over a cicle. In purple, the blocks adjacent to \(P_i\) that can gain grains during the cycle starting with a signal at \(P_i(0)\) are shown. In gray, the orientation for the path \(P_{i+1}\) that could cause it to gain grains during the cycle.}
\label{fig:afectadas}
\end{figure}

The next lemma states how signals travel through semi-wires. Roughly speaking, condition 1 assures that after every cycle, a signal arrives at the next $Z$-block and condition 2 verifies that the remaining part of the semi-wire remains unchanged so it can transmit the information.

\begin{lemma}
Let $(S,c)$ be a semi-wire gadget such that $T_0$ is a source bridge. Then, for each $i \in [m]$,\begin{enumerate}
     \item $F^i(c,Z)(B_i^+) = 4$
     \item $F^i(c,Z)(P_j(l)) = 3,\, \forall j\geq i,l \in [k]$
 \end{enumerate}
\end{lemma}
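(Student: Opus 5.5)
We proceed by induction on $i$, following the template of the proof of \Cref{lem:funcbridge} and using \Cref{lem:funcbridge2} as the model for how two consecutive bridges interact. We read the statement with the natural indexing: the semi-wire uses bridges $T_0,\dots,T_{m-1}$ with $T_i$ connecting $B_i$ to $B_{i+1}$. Condition 2 is then read as saying that all bridges not yet traversed are still intact, i.e.\ $F^i(c,Z)(P_j(l))=3$ for $j\ge i$ and all positions $l$ except $P_i(0)=B_i^+$, which carries the signal. For $i=0$ both conditions hold by definition of a source semi-wire: $c(B_0^+)=4$ and every other cell of $S$ is $3$, since the $\oplus$ operation only takes maxima of $3$'s and $0$'s.

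For the inductive step, assume both conditions hold after $i$ cycles. Let $c_i = F^i(c,Z)$. The goal is to apply \Cref{lem:funcbridge}, in the strengthened form of \Cref{remarkfunc}, to the bridge $T_i$ inside $c_i$. We need three facts.
\begin{itemize}
\item $c_i(P_i(0))=4$ and $c_i(P_i(l))=3$ for $l\ge1$. This is exactly the inductive hypothesis.
\item No cell of $\overline{B_i}\cup\overline{B_{i+1}}$ outside $P_i$ is a signal in $c_i$. Cells of already-used bridges $T_j$ with $j<i$ have dropped to $2$ after firing, by the argument in the proof of \Cref{lem:funcbridge}. Cells of future bridges $T_j$ with $j>i$ are $3$ by hypothesis. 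All other cells are $0$ or have gained at most one grain from an affected neighborhood $N(T_j)$. So it suffices to bound these values by $3$.
\item No other signal exists in the surrounding blocks $B_3,\dots,B_6$ of \Cref{fig:notpuente}. This holds because the signal is unique: each cycle moves the single $4$ along one path, and everything else is at most $3$.
\end{itemize}
Given these, \Cref{lem:funcbridge} yields $c_{i+1}(B_{i+1}^+)=c_{i+1}(P_i(k))=4$, which is condition 1 for $i+1$.

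For condition 2 at $i+1$, we must show that the cycle of $T_i$ does not change any cell of $T_j$ for $j\ge i+1$, other than the shared endpoint $P_i(k)=P_{i+1}(0)$. The cells that change during this cycle are exactly those in $N(T_i)$. Since $N(T_i)\subseteq\overline{B_i}\cup\overline{B_{i+1}}$, we must check that $N(T_i)$ does not meet a later bridge.

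This is the main obstacle. The hypothesis that $B_i$ is diagonally connected only to $B_{i\pm1}$ guarantees that no $T_j$ with $j\ge i+2$ passes through the closure of $B_i$ or $B_{i+1}$, except possibly at a corner. For $T_{i+1}$ itself we use the path structure. $T_{i+1}$ leaves $P_{i+1}(0)$ with a horizontal step, because $Z$ starts with $H$. $T_i$, in turn, arrives at that cell with a vertical step, because $Z$ ends with $V$. Consequently the horizontal neighbors listed in $N(T_i)$ come from earlier positions $P_i(l)$ with $l<k$, which lie in the rows of $B_i$, together with the final vertical segment inside $B_{i+1}$. The cells of $T_{i+1}$ lie in $B_{i+1}$ and in the next block.

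The delicate cases are those in which $T_{i+1}$ bends back so as to share a column with the last vertical run of $T_i$, which is the gray configuration of \Cref{fig:afectadas}. I would handle these by case analysis on the sign pair $(\alpha,\beta)$ of consecutive bridges. In the non-backtracking orientations the sets are disjoint. In the problematic orientation, a cell of $T_{i+1}$ that receives one extra grain would be a $3$ turning into a $4$, which is a premature signal. This must be excluded by the block-sequence condition, which is exactly the role of the invalid example in \Cref{fig:semiwireej}. Once disjointness is established, the remaining cells of $T_j$ for $j\ge i+1$ are untouched and keep the value $3$. This completes the induction.
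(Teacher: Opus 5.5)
Your skeleton matches the paper's proof: induction on $i$, condition 1 from \Cref{lem:funcbridge}, and condition 2 by showing that the cells touched by $T_i$ during the cycle miss all later bridges. The gap is in the one case of condition 2 that needs an argument.

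\textbf{The gap.} You single out the turn where $T_{i+1}$ bends back (the gray path in \Cref{fig:afectadas}). You treat it as an orientation in which a cell of $T_{i+1}$ could receive a grain and become a premature signal, and you dispose of it by saying it ``must be excluded by the block-sequence condition''. It is not excluded. In a zigzag $B_i\to B_{i+1}\to B_{i+2}$ (down-right, then up-right), $B_i$ and $B_{i+2}$ lie in the same block row two blocks apart, so they are not diagonally connected. The \emph{valid} semi-wire on the right of \Cref{fig:semiwireej} contains exactly this turn. The invalid example there concerns a non-consecutive pair ($B_2,B_5$), i.e.\ bridges $T_j$ with $j\ge i+2$, not $T_{i+1}$. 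So your argument either fails on legitimate semi-wires or leaves the lemma unproved exactly where work is needed.

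Your location claims also hide the issue. For $T_i$ as in \Cref{fig:afectadas}, its final vertical run lies in the first column of the block $B_4$ to the right of $B_i$, not inside $B_{i+1}$. In the zigzag, $T_{i+1}$ can run through that same block $B_4$, not only through ``$B_{i+1}$ and the next block''.

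\textbf{The missing idea.} No overlap occurs, and this is exactly where the assumptions on $Z$ are used. Put $B_i^+=(a,b)$ and $B_{i+1}^+=(a+h,b+v)$, with rows increasing downward.
\begin{itemize}
\item Every horizontal firing of $T_i$ takes place in a column at most $a+h-1$. Once $T_i$ reaches column $a+h$ it has used all its horizontal steps, so it only fires vertically there. Hence every cell receiving a grain during the cycle lies in a column at most $a+h$, and inside $B_4$ such cells lie in column $a+h$.
\item If $T_{i+1}$ heads right (up or down), then because $Z$ starts with $H$ it first moves to $(a+h+1,b+v)$. All its cells other than $P_{i+1}(0)$ therefore lie in columns at least $a+h+1$. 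Hence the only touched cell of $P_{i+1}$ is $P_{i+1}(0)=P_i(k)$. This is the paper's remark that the cells of $P_{i+1}$ in $B_4$ are at least one column away from its first column.
\item If $T_{i+1}$ heads down-left, use that $Z$ ends with $V$. Then $T_i$ enters row $b+v$ only at its last step, so the touched cells meet row $b+v$ only in $P_i(k)$, while $T_{i+1}$ stays in rows $\ge b+v$.
\item If $T_{i+1}$ heads up-left, it would return to $B_i$, which distinctness of the blocks forbids.
\end{itemize}

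\textbf{A minor point.} Cells of already used bridges do not all drop to $2$; in \Cref{fig:ciclo}, $P(1)$, $P(3)$ and $P(4)$ end at $3$. Your uniqueness-of-the-signal justification should therefore be weakened to what \Cref{lem:funcbridge} actually needs: all other cells are at most $3$ at the start of the cycle.
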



\begin{proof}
    Let $Z \in \{H,V\}^{h,v,k}$. Similarly to the proof of \Cref{lem:funcbridge} we proceed by induction over $i$. The base case holds directly. Let us assume that, for some $i \in [m],$ we have:
    \begin{enumerate}
         \item $c^i(B_i^+) = 4$
         \item $c^i(P_j(l)) = 3,\, \forall j\geq i,l \in [k]$
    \end{enumerate}
    where $c^i:= F^i(c,Z)$. We need to show that:
    \begin{enumerate}
         \item $F(c^i,k)(B_{i+1}^+) = 4$
         \item $F(c^i,k)(P_j(l)) = 3,\, \forall j\geq i+1,l \in [k]$
    \end{enumerate}

Observe (1) follows directly from inductive hypothesis and \Cref{lem:funcbridge}. 


To verify (2), we analyze every step \(1 < l < k\) of the cycle. We suppose $P_i$ has the form of \Cref{fig:afectadas}. From \Cref{lem:funcbridge}, we know that transition \(l\), \(P_i(l)\) topples. We claim that no cell on a path \(P_j\) with \(j > i\) gains a grain. First, we note that around the bridge $P_i$, the only cells that could gain a grain during the cycle are those within the purple blocks in \Cref{fig:afectadas}. We call these blocks affected blocks. Given our assumption that there are no blocks in the sequence of the semi-wire connected diagonally, we have that bridges \(P_j\) with \(j > i+1\) cannot contain any cell in the marked blocks.

Additionally, bridge $P_{i+1}$ could only contain a cell adjacent to $P_i$ if the cells in $P_{i+1}$ are located in $B_4$. This is only possible if $P_{i+1}$ takes the form shown in gray in \Cref{fig:afectadas}. In this case, we note that, due to the construction of bridges, the cells in $P_{i+1}$ that could belong to $B_4$ are located at least one column away from the first column of $B_4$. In particular, we have that $N(P_i)\cap P_{i+1} = P_{i+1}(0)$, which ensures no cell at $P_{i+1}$ gains a grain during step $l$.

Thus, we have that no cell $P_j(l)$ with $j\geq i+1,l \in [k]$ gains a grain during the cycle. The lemma holds.
\end{proof}

\subsection{Semi-duplications}
We need to be able to create a bridge gadget that carries a signal to 2 different blocks. Formally, given 3 blocks $(B_1,B_2,B_3)$ included in a $3\times2$ block shape as in \Cref{fig:duplic_p}, and the positive (resp. neg.) bridges $T_1,T_2$ connecting $B_1,B_2$ and $B_1,B_3$, we call positive (resp. neg.) semi-duplicator to the gadget $(D,c)=T_1+T_2$. We call $B_1^+$ (resp. $B_1^-$) the input of the semi-duplicator and $B_2^+,B_3^+$ (resp. $B_2^-,B_3^-$) the outputs of the semi-duplicator. 

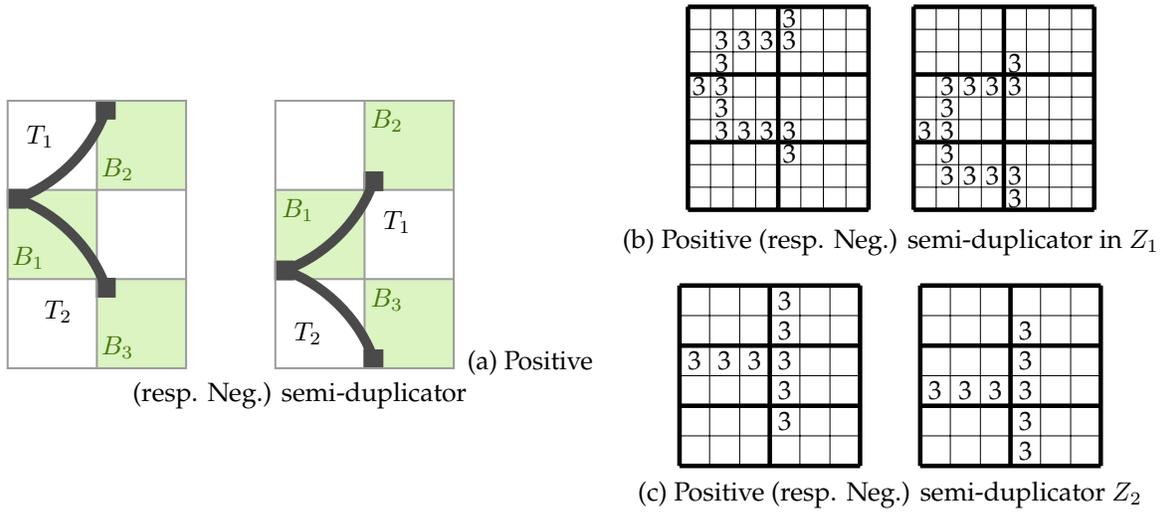
\begin{figure}
\begin{minipage}[t][0.35\textheight][c]{0.5\textwidth}
\centering
\tikzset{every picture/.style={line width=0.75pt}} 

\begin{tikzpicture}[scale = .9,x=0.75pt,y=0.75pt,yscale=-1,xscale=1]

\draw  [draw opacity=0][fill={rgb, 255:red, 184; green, 233; blue, 134 }  ,fill opacity=0.5 ] (250,150) -- (300,150) -- (300,200) -- (250,200) -- cycle ;
\draw  [draw opacity=0][fill={rgb, 255:red, 184; green, 233; blue, 134 }  ,fill opacity=0.5 ] (200,100) -- (250,100) -- (250,150) -- (200,150) -- cycle ;
\draw  [draw opacity=0][fill={rgb, 255:red, 184; green, 233; blue, 134 }  ,fill opacity=0.5 ] (250,50) -- (300,50) -- (300,100) -- (250,100) -- cycle ;
\draw  [draw opacity=0][fill={rgb, 255:red, 184; green, 233; blue, 134 }  ,fill opacity=0.5 ] (100,149.85) -- (150,149.85) -- (150,199.85) -- (100,199.85) -- cycle ;
\draw  [draw opacity=0][fill={rgb, 255:red, 184; green, 233; blue, 134 }  ,fill opacity=0.5 ] (50,100) -- (100,100) -- (100,150) -- (50,150) -- cycle ;
\draw  [draw opacity=0][fill={rgb, 255:red, 184; green, 233; blue, 134 }  ,fill opacity=0.5 ] (100,50) -- (150,50) -- (150,100) -- (100,100) -- cycle ;
\draw [color={rgb, 255:red, 155; green, 155; blue, 155 }  ,draw opacity=1 ]   (200,100) -- (300,100) ;
\draw  [color={rgb, 255:red, 74; green, 74; blue, 74 }  ,draw opacity=1 ][fill={rgb, 255:red, 74; green, 74; blue, 74 }  ,fill opacity=1 ] (250,100.15) -- (260,100.15) -- (260,90.15) -- (250,90.15) -- cycle ;
\draw [color={rgb, 255:red, 155; green, 155; blue, 155 }  ,draw opacity=1 ]   (200,150) -- (300,150) ;
\draw [color={rgb, 255:red, 155; green, 155; blue, 155 }  ,draw opacity=1 ]   (250,50) -- (250,130.3) -- (250,200) ;
\draw [color={rgb, 255:red, 155; green, 155; blue, 155 }  ,draw opacity=1 ]   (50,150) -- (150,150) ;
\draw  [color={rgb, 255:red, 74; green, 74; blue, 74 }  ,draw opacity=1 ][fill={rgb, 255:red, 74; green, 74; blue, 74 }  ,fill opacity=1 ] (100,149.85) -- (110,149.85) -- (110,159.85) -- (100,159.85) -- cycle ;
\draw [color={rgb, 255:red, 155; green, 155; blue, 155 }  ,draw opacity=1 ]   (50,100) -- (150,100) ;
\draw [color={rgb, 255:red, 155; green, 155; blue, 155 }  ,draw opacity=1 ]   (100,200) -- (100,119.7) -- (100,50) ;
\draw [color={rgb, 255:red, 74; green, 74; blue, 74 }  ,draw opacity=1 ][line width=3.75]    (55,105) .. controls (79.2,96.5) and (98.2,75.5) .. (105,55) ;
\draw [color={rgb, 255:red, 74; green, 74; blue, 74 }  ,draw opacity=1 ][line width=3.75]    (55,104.85) .. controls (79.2,113.35) and (98.2,134.35) .. (105,154.85) ;
\draw [color={rgb, 255:red, 74; green, 74; blue, 74 }  ,draw opacity=1 ][line width=3.75]    (205,145.15) .. controls (229.2,136.65) and (248.2,115.65) .. (255,95.15) ;
\draw [color={rgb, 255:red, 74; green, 74; blue, 74 }  ,draw opacity=1 ][line width=3.75]    (205,145) .. controls (229.2,153.5) and (248.2,174.5) .. (255,195) ;
\draw  [color={rgb, 255:red, 74; green, 74; blue, 74 }  ,draw opacity=1 ][fill={rgb, 255:red, 74; green, 74; blue, 74 }  ,fill opacity=1 ] (250,200) -- (260,200) -- (260,190) -- (250,190) -- cycle ;
\draw  [color={rgb, 255:red, 74; green, 74; blue, 74 }  ,draw opacity=1 ][fill={rgb, 255:red, 74; green, 74; blue, 74 }  ,fill opacity=1 ] (100,50) -- (110,50) -- (110,60) -- (100,60) -- cycle ;
\draw  [color={rgb, 255:red, 74; green, 74; blue, 74 }  ,draw opacity=1 ][fill={rgb, 255:red, 74; green, 74; blue, 74 }  ,fill opacity=1 ] (200,150.15) -- (210,150.15) -- (210,140.15) -- (200,140.15) -- cycle ;
\draw  [color={rgb, 255:red, 74; green, 74; blue, 74 }  ,draw opacity=1 ][fill={rgb, 255:red, 74; green, 74; blue, 74 }  ,fill opacity=1 ] (50,110) -- (60,110) -- (60,100) -- (50,100) -- cycle ;
\draw  [color={rgb, 255:red, 155; green, 155; blue, 155 }  ,draw opacity=1 ][line width=0.75]  (200,50) -- (300,50) -- (300,200) -- (200,200) -- cycle ;
\draw  [color={rgb, 255:red, 155; green, 155; blue, 155 }  ,draw opacity=1 ][line width=0.75]  (50,50) -- (150,50) -- (150,200) -- (50,200) -- cycle ;

\draw (101,80.4) node [anchor=north west][inner sep=0.75pt]  [color={rgb, 255:red, 65; green, 117; blue, 5 }  ,opacity=1 ]  {$B_{2}$};
\draw (51,130.4) node [anchor=north west][inner sep=0.75pt]  [color={rgb, 255:red, 65; green, 117; blue, 5 }  ,opacity=1 ]  {$B_{1}$};
\draw (101,180.4) node [anchor=north west][inner sep=0.75pt]  [color={rgb, 255:red, 65; green, 117; blue, 5 }  ,opacity=1 ]  {$B_{3}$};
\draw (202,103.4) node [anchor=north west][inner sep=0.75pt]  [color={rgb, 255:red, 65; green, 117; blue, 5 }  ,opacity=1 ]  {$B_{1}$};
\draw (252,53.4) node [anchor=north west][inner sep=0.75pt]  [color={rgb, 255:red, 65; green, 117; blue, 5 }  ,opacity=1 ]  {$B_{2}$};
\draw (252,153.4) node [anchor=north west][inner sep=0.75pt]  [color={rgb, 255:red, 65; green, 117; blue, 5 }  ,opacity=1 ]  {$B_{3}$};
\draw (59,62.4) node [anchor=north west][inner sep=0.75pt]    {$T_{1}$};
\draw (69,160.4) node [anchor=north west][inner sep=0.75pt]    {$T_{2}$};
\draw (209,172.4) node [anchor=north west][inner sep=0.75pt]    {$T_{2}$};
\draw (259,110.4) node [anchor=north west][inner sep=0.75pt]    {$T_{1}$};

\end{tikzpicture}
(a) Positive (resp. Neg.) semi-duplicator
\end{minipage}
\begin{minipage}[t][0.35\textheight][c]{0.5\textwidth}
\centering
\begin{tikzpicture}[scale=.3]

\draw[line width=0.1mm] (0, 0) grid (8, 9);
\draw[line width=0.6mm] (0,0) -- (0,9); 
\draw[line width=0.6mm] (4,0) -- (4,9); 
\draw[line width=0.6mm] (8,0) -- (8,9); 
\draw[line width=0.6mm] (0,0) -- (8,0);
\draw[line width=0.6mm] (0,3) -- (8,3);
\draw[line width=0.6mm] (0,6) -- (8,6);
\draw[line width=0.6mm] (0,9) -- (8,9);

\setcounter{rows}{9}
\setcounter{row}{0}
\setrow { , , , ,3}
\setrow { ,3,3,3,3}
\setrow { ,3, , , }
\setrow {3,3, , , }
\setrow { ,3, , , }
\setrow { ,3,3,3,3}
\setrow { , , , ,3};


\draw[line width=0.1mm] (10, 0) grid (18, 9);
\draw[line width=0.6mm] (10,0) -- (10,9); 
\draw[line width=0.6mm] (14,0) -- (14,9); 
\draw[line width=0.6mm] (18,0) -- (18,9); 
\draw[line width=0.6mm] (10,0) -- (18,0);
\draw[line width=0.6mm] (10,3) -- (18,3);
\draw[line width=0.6mm] (10,6) -- (18,6);
\draw[line width=0.6mm] (10,9) -- (18,9);

\setcounter{cols}{10}
\setcounter{rows}{9}
\setcounter{row}{2}
\setrow { , , , ,3}
\setrow { ,3,3,3,3}
\setrow { ,3, , , }
\setrow {3,3, , , }
\setrow { ,3, , , }
\setrow { ,3,3,3,3}
\setrow { , , , ,3};

\end{tikzpicture}

(b) Positive (resp. Neg.) semi-duplicator in $Z_1$\\
~~\\

\begin{tikzpicture}[scale=.4]

\draw[line width=0.1mm] (0, 0) grid (6, 6);
\draw[line width=0.6mm] (0,0) -- (0,6); 
\draw[line width=0.6mm] (3,0) -- (3,6); 
\draw[line width=0.6mm] (6,0) -- (6,6); 
\draw[line width=0.6mm] (0,0) -- (6,0);
\draw[line width=0.6mm] (0,2) -- (6,2);
\draw[line width=0.6mm] (0,4) -- (6,4);
\draw[line width=0.6mm] (0,6) -- (6,6);

\setcounter{cols}{0}
\setcounter{rows}{6}
\setcounter{row}{0}
\setrow { , , ,3}
\setrow { , , ,3}
\setrow {3,3,3,3}
\setrow { , , ,3}
\setrow { , , ,3};

\draw[line width=0.1mm] (8, 0) grid (14, 6);
\draw[line width=0.6mm] (8,0) -- (8,6); 
\draw[line width=0.6mm] (11,0) -- (11,6); 
\draw[line width=0.6mm] (14,0) -- (14,6); 
\draw[line width=0.6mm] (8,0) -- (14,0);
\draw[line width=0.6mm] (8,2) -- (14,2);
\draw[line width=0.6mm] (8,4) -- (14,4);
\draw[line width=0.6mm] (8,6) -- (14,6);

\setcounter{cols}{8}
\setcounter{rows}{6}
\setcounter{row}{1}
\setrow { , , ,3}
\setrow { , , ,3}
\setrow {3,3,3,3}
\setrow { , , ,3}
\setrow { , , ,3};

\setcounter{cols}{0}

\end{tikzpicture}

(c) Positive (resp. Neg.) semi-duplicator $Z_2$\\
\end{minipage}
\caption{Semi-duplicator Gadget. On the left, the gadget is shown using the standard notation we’ve been using for bridges and the $Z$-blocks involved in the structure. A bridge $T_1$ connecting $B_1$ and $B_2$ and $T_2$ connecting $B_1$ and $B_3$. On the right, particular cases of the gadget are illustrated. The positive version of the gadget is always shown on the left of each subfigure.}
\label{fig:duplic_p}
\end{figure}

The functionality of the structure follows as a corollary of \Cref{lem:funcbridge} that states that a bridge gadget formed by 2 source bridges with its source at the same cell, produces both  bridges to carry a signal, essentially duplicating it:

\begin{corollary}\label{lem:funcduplic}
Let $D=T_1+T_2$ a positive semi-duplicator gadget with blocks $(B_1,B_2,B_3).$ If $T_1$ or $T_2$ are source bridges then $F(c,Z)(B_2^+)=F(c,Z)(B_3^+)=4$.
\end{corollary}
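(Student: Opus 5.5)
The plan is to run the inductive argument from the proof of \Cref{lem:funcbridge} on both bridges at once. Write $T_1=(P_1,c_1)$ and $T_2=(P_2,c_2)$, with $P_1(0)=P_2(0)=B_1^+$, $P_1(k)=B_2^+$ and $P_2(k)=B_3^+$. If either bridge is a source bridge, then by the definition of $\oplus$ the common start cell $B_1^+$ has state $4$ and all other cells of $P_1\cup P_2$ have state $3$. Since $B_2$ lies above and $B_3$ below the row of $B_1$ (or the mirror image, in the configuration of \Cref{fig:duplic_p}), the two $Z$-paths use the same horizontal sign $\alpha$ and opposite vertical signs $\beta_1=-\beta_2$. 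Consequently $P_1(i)=P_2(i)$ for every $i$ before the first $V$ in $Z$. From the first $V$ onward the paths separate vertically and never meet again, because their vertical offsets have opposite signs and are nonzero.

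I would prove the following invariant by induction on $i$: after $i$ steps, the cells $P_1(i)$ and $P_2(i)$ (one cell or two) are in state $4$, and every cell $P_r(j)$ with $j>i$ is still in state $3$.

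\textbf{Shared prefix.} While $Z(i)=H$ and the paths still coincide, this is exactly the step in \Cref{lem:funcbridge}. The signal at $P(i-1)$ fires horizontally, its neighbour $P(i)$ goes from $3$ to $4$, and $P(i-1)$ drops to $2$.

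\textbf{Split step.} At the first $V$-step the single signal at $P(i-1)$ fires vertically. It adds one grain to both $P(i-1)+V$ and $P(i-1)-V$, and these are precisely $P_1(i)$ and $P_2(i)$. Both were in state $3$, so both become $4$. This is the duplication mechanism: one firing feeds both vertical neighbours.

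\textbf{Independent steps.} After the split we have two separate signals. Each must be advanced without disturbing the other path. Here I would reuse the distance argument from \Cref{lem:funcbridge}, together with the affected-neighbour set $N(T)$. The key claim is that $N(T_1)$ restricted to the steps after the split is disjoint from $P_2(j)$ for $j>i$, and symmetrically for $T_2$.

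The main obstacle I expect is this disjointness check, especially right after the split. For a few steps the two signals are only two rows apart, in cells $P(i-1)\pm V$. If a further $V$-step follows, the two signals move in opposite directions away from each other, so they separate further. If an $H$-step follows, both move horizontally in parallel rows, and their horizontal neighbourhoods $N_h$ stay within their own rows. So no cell of the other path can receive a grain unless the two rows coincide, and they cannot since $\beta_1\neq\beta_2$. The only cells touched by both signals would be in the row of $P(i-1)$, and the remaining path cells do not lie in that row. Apart from the split cell, all cells adjacent to the bridges are $0$ (or $<4$ as in \Cref{remarkfunc}), so no spurious signal appears.

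Closing the induction at $i=k$ gives $F(c,Z)(B_2^+)=F(c,Z)(B_3^+)=4$. The negative case is symmetric.
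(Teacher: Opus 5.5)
Your plan is essentially the paper's. The paper simply applies \Cref{lem:funcbridge} to each of $T_1,T_2$ separately, treating the other bridge's cells (state $3<4$) as admissible garbage; you rerun that induction on both bridges jointly. The gap is in your non-interference step. You track only the two front signals and then conclude that no spurious signal appears, and that conclusion is false.

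Take the paper's own example $Z_1=HVVHHHV$ with the duplicator of \Cref{fig:duplic_p}(b), where $i=2$ and the split cell is $P(1)$.
\begin{itemize}
\item After step $2$, $P(1)$ has fired and is in state $2$.
\item At step $3$, both fronts $P(1)\pm V$ fire vertically, and each sends one grain back to $P(1)$, which therefore reaches $4$.
\end{itemize}
Your remark that the remaining path cells do not lie in the row of $P(i-1)$ does not dispose of this. If the $V$-run is longer, say $Z=HVVV\cdots$, the split cell fires vertically at step $4$. Together with the fronts it pushes $P_1(i)$ and $P_2(i)$ from $2$ back to $4$. So a checkerboard of extra signals builds up along the vertical segment behind the fronts. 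Your case analysis never accounts for these signals, so it does not show that they cannot feed grains into unvisited path cells or into the cell just ahead of a front.

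The statement is still true, and the right tool is the light-cone argument already used in the proof of \Cref{lem:funcbridge}.
\begin{itemize}
\item Since the initial configuration has a single signal at $B_1^+$, induction on $t$ shows that every cell changed by time $t$ lies in the rectangle of horizontal radius $|Z(1)\cdots Z(t)|_H$ and vertical radius $|Z(1)\cdots Z(t)|_V$ around $B_1^+$. This includes every signal, spurious or not.
\item The fronts $P_1(t),P_2(t)$ sit at corners of this rectangle. Let $d$ be the direction of step $t+1$. Every $P_r(j)$ with $j>t$ lies outside the rectangle, hence is still in state $3$. The far neighbour $P_r(t)+2d$ also lies outside, hence is not a signal.
\item Therefore $P_r(t+1)$ receives exactly one grain and goes from $3$ to $4$, whatever happens behind the fronts.
\end{itemize}
This handles the shared prefix, the split step and the independent steps uniformly. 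It makes the row-disjointness analysis unnecessary, and it is exactly why the paper's one-line appeal to \Cref{lem:funcbridge} for each bridge is enough.
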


\begin{proof}
    Using \Cref{lem:funcbridge} to both bridges $T_1,T_2$, both work as intended making $F(c,Z)(B_2^+)=F(c,Z)(B_3^+)=4$.
\end{proof}

\subsection{Semi-merges}
Now we show how to merge semi-wires. In order to do this, we need to be able to create a gadget that carries a signal from 2 different blocks to one. Formally, given 3 blocks $(B_1,B_2,B_3)$ included in a $3\times2$ block shape as in \Cref{fig:merg_p}, and the positive bridges (resp. neg.) $T_1,T_2$ connecting $B_1,B_3$ and $B_2,B_3$, we call positive (resp. neg.) semi-merge to the gadget $(M,c)= T_1+T_2$. We call $B_1^+,B_2^+$ (resp. $B_1^-,B_2^-$) the input of the semi-merge and $B_3^+$ (resp. $B_3^-$) the outputs of the semi-merge.

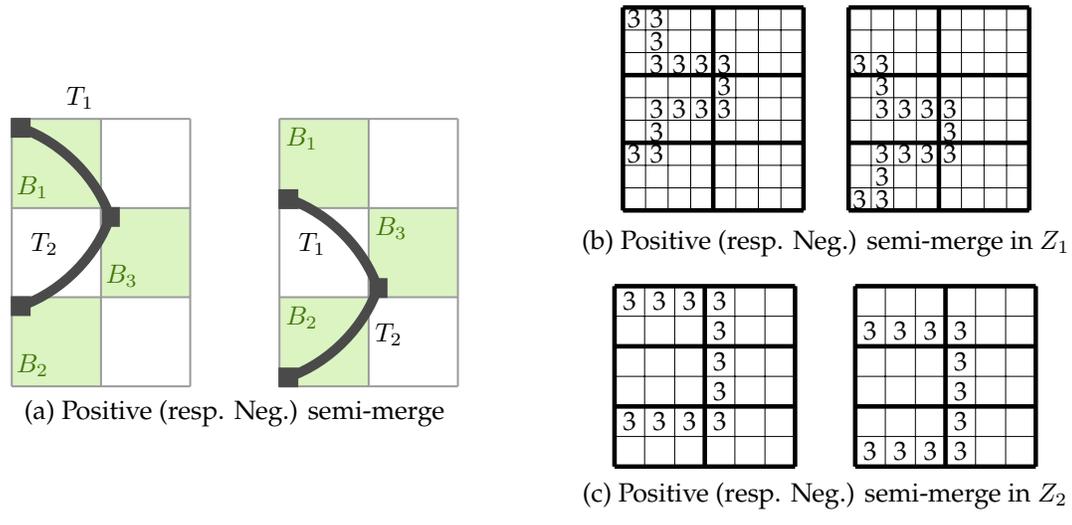
\begin{figure}
\begin{minipage}[t][0.35\textheight][c]{0.5\textwidth}
\centering
\tikzset{every picture/.style={line width=0.75pt}} 

\begin{tikzpicture}[scale = .9,x=0.75pt,y=0.75pt,yscale=-1,xscale=1]

\draw  [draw opacity=0][fill={rgb, 255:red, 184; green, 233; blue, 134 }  ,fill opacity=0.5 ] (200,150) -- (250,150) -- (250,200) -- (200,200) -- cycle ;
\draw  [draw opacity=0][fill={rgb, 255:red, 184; green, 233; blue, 134 }  ,fill opacity=0.5 ] (250,100) -- (300,100) -- (300,150) -- (250,150) -- cycle ;
\draw  [draw opacity=0][fill={rgb, 255:red, 184; green, 233; blue, 134 }  ,fill opacity=0.5 ] (200,50) -- (250,50) -- (250,100) -- (200,100) -- cycle ;
\draw  [draw opacity=0][fill={rgb, 255:red, 184; green, 233; blue, 134 }  ,fill opacity=0.5 ] (50,150) -- (100,150) -- (100,200) -- (50,200) -- cycle ;
\draw  [draw opacity=0][fill={rgb, 255:red, 184; green, 233; blue, 134 }  ,fill opacity=0.5 ] (100,100) -- (150,100) -- (150,150) -- (100,150) -- cycle ;
\draw  [draw opacity=0][fill={rgb, 255:red, 184; green, 233; blue, 134 }  ,fill opacity=0.5 ] (50,50) -- (100,50) -- (100,100) -- (50,100) -- cycle ;
\draw [color={rgb, 255:red, 155; green, 155; blue, 155 }  ,draw opacity=1 ]   (50,50) -- (50,130.3) -- (50,200) ;
\draw [color={rgb, 255:red, 155; green, 155; blue, 155 }  ,draw opacity=1 ]   (50,100) -- (150,100) ;
\draw  [color={rgb, 255:red, 74; green, 74; blue, 74 }  ,draw opacity=1 ][fill={rgb, 255:red, 74; green, 74; blue, 74 }  ,fill opacity=1 ] (100,110) -- (110,110) -- (110,100) -- (100,100) -- cycle ;
\draw [color={rgb, 255:red, 155; green, 155; blue, 155 }  ,draw opacity=1 ]   (50,50) -- (150,50) ;
\draw [color={rgb, 255:red, 155; green, 155; blue, 155 }  ,draw opacity=1 ]   (50,150) -- (150,150) ;
\draw [color={rgb, 255:red, 155; green, 155; blue, 155 }  ,draw opacity=1 ]   (100,50) -- (100,130.3) -- (100,200) ;
\draw [color={rgb, 255:red, 155; green, 155; blue, 155 }  ,draw opacity=1 ]   (150,50) -- (150,130.3) -- (150,200) ;
\draw [color={rgb, 255:red, 155; green, 155; blue, 155 }  ,draw opacity=1 ]   (50,200) -- (150,200) ;
\draw [color={rgb, 255:red, 155; green, 155; blue, 155 }  ,draw opacity=1 ]   (200,200) -- (200,119.7) -- (200,50) ;
\draw [color={rgb, 255:red, 155; green, 155; blue, 155 }  ,draw opacity=1 ]   (200,150) -- (300,150) ;
\draw  [color={rgb, 255:red, 74; green, 74; blue, 74 }  ,draw opacity=1 ][fill={rgb, 255:red, 74; green, 74; blue, 74 }  ,fill opacity=1 ] (250,139.85) -- (260,139.85) -- (260,149.85) -- (250,149.85) -- cycle ;
\draw [color={rgb, 255:red, 155; green, 155; blue, 155 }  ,draw opacity=1 ]   (200,200) -- (300,200) ;
\draw [color={rgb, 255:red, 155; green, 155; blue, 155 }  ,draw opacity=1 ]   (200,100) -- (300,100) ;
\draw [color={rgb, 255:red, 155; green, 155; blue, 155 }  ,draw opacity=1 ]   (250,200) -- (250,119.7) -- (250,50) ;
\draw [color={rgb, 255:red, 155; green, 155; blue, 155 }  ,draw opacity=1 ]   (300,200) -- (300,119.7) -- (300,50) ;
\draw [color={rgb, 255:red, 155; green, 155; blue, 155 }  ,draw opacity=1 ]   (200,50) -- (300,50) ;
\draw [color={rgb, 255:red, 74; green, 74; blue, 74 }  ,draw opacity=1 ][line width=3.75]    (205,194.85) .. controls (229.2,186.35) and (248.2,165.35) .. (255,144.85) ;
\draw [color={rgb, 255:red, 74; green, 74; blue, 74 }  ,draw opacity=1 ][line width=3.75]    (205,94.85) .. controls (229.2,103.35) and (248.2,124.35) .. (255,144.85) ;
\draw [color={rgb, 255:red, 74; green, 74; blue, 74 }  ,draw opacity=1 ][line width=3.75]    (55,155) .. controls (79.2,146.5) and (98.2,125.5) .. (105,105) ;
\draw [color={rgb, 255:red, 74; green, 74; blue, 74 }  ,draw opacity=1 ][line width=3.75]    (55,55) .. controls (79.2,63.5) and (98.2,84.5) .. (105,105) ;
\draw  [color={rgb, 255:red, 74; green, 74; blue, 74 }  ,draw opacity=1 ][fill={rgb, 255:red, 74; green, 74; blue, 74 }  ,fill opacity=1 ] (50,60) -- (60,60) -- (60,50) -- (50,50) -- cycle ;
\draw  [color={rgb, 255:red, 74; green, 74; blue, 74 }  ,draw opacity=1 ][fill={rgb, 255:red, 74; green, 74; blue, 74 }  ,fill opacity=1 ] (200,190) -- (210,190) -- (210,200) -- (200,200) -- cycle ;
\draw  [color={rgb, 255:red, 74; green, 74; blue, 74 }  ,draw opacity=1 ][fill={rgb, 255:red, 74; green, 74; blue, 74 }  ,fill opacity=1 ] (50,160) -- (60,160) -- (60,150) -- (50,150) -- cycle ;
\draw  [color={rgb, 255:red, 74; green, 74; blue, 74 }  ,draw opacity=1 ][fill={rgb, 255:red, 74; green, 74; blue, 74 }  ,fill opacity=1 ] (200,100) -- (210,100) -- (210,90) -- (200,90) -- cycle ;

\draw (51,80.4) node [anchor=north west][inner sep=0.75pt]  [color={rgb, 255:red, 65; green, 117; blue, 5 }  ,opacity=1 ]  {$B_{1}$};
\draw (101,130.4) node [anchor=north west][inner sep=0.75pt]  [color={rgb, 255:red, 65; green, 117; blue, 5 }  ,opacity=1 ]  {$B_{3}$};
\draw (51,180.4) node [anchor=north west][inner sep=0.75pt]  [color={rgb, 255:red, 65; green, 117; blue, 5 }  ,opacity=1 ]  {$B_{2}$};
\draw (202,53.4) node [anchor=north west][inner sep=0.75pt]  [color={rgb, 255:red, 65; green, 117; blue, 5 }  ,opacity=1 ]  {$B_{1}$};
\draw (252,103.4) node [anchor=north west][inner sep=0.75pt]  [color={rgb, 255:red, 65; green, 117; blue, 5 }  ,opacity=1 ]  {$B_{3}$};
\draw (202,153.4) node [anchor=north west][inner sep=0.75pt]  [color={rgb, 255:red, 65; green, 117; blue, 5 }  ,opacity=1 ]  {$B_{2}$};
\draw (79,30.4) node [anchor=north west][inner sep=0.75pt]    {$T_{1}$};
\draw (252,163.25) node [anchor=north west][inner sep=0.75pt]    {$T_{2}$};
\draw (209,112.4) node [anchor=north west][inner sep=0.75pt]    {$T_{1}$};
\draw (59,112.4) node [anchor=north west][inner sep=0.75pt]    {$T_{2}$};

\end{tikzpicture}\\
(a) Positive (resp. Neg.) semi-merge
\end{minipage}
\begin{minipage}[t][0.35\textheight][c]{0.5\textwidth}
\centering
\begin{tikzpicture}[scale=.3]
    
    \draw[line width=0.1mm] (0, 0) grid (8, 9);
    \draw[line width=0.6mm] (0,0) -- (0,9); 
    \draw[line width=0.6mm] (4,0) -- (4,9); 
    \draw[line width=0.6mm] (8,0) -- (8,9); 
    \draw[line width=0.6mm] (0,0) -- (8,0);
    \draw[line width=0.6mm] (0,3) -- (8,3);
    \draw[line width=0.6mm] (0,6) -- (8,6);
    \draw[line width=0.6mm] (0,9) -- (8,9);

    \setcounter{rows}{9}
    \setcounter{row}{0}
    \setrow {3,3, , , }
    \setrow { ,3, , , }
    \setrow { ,3,3,3,3}
    \setrow { , , , ,3}
    \setrow { ,3,3,3,3}
    \setrow { ,3, , , }
    \setrow {3,3, , , };


\setcounter{cols}{10}
\setcounter{rows}{9}
\setcounter{row}{2}

\draw[line width=0.1mm] (10, 0) grid (18, 9);
\draw[line width=0.6mm] (10,0) -- (10,9); 
\draw[line width=0.6mm] (14,0) -- (14,9); 
\draw[line width=0.6mm] (18,0) -- (18,9); 
\draw[line width=0.6mm] (10,0) -- (18,0);
\draw[line width=0.6mm] (10,3) -- (18,3);
\draw[line width=0.6mm] (10,6) -- (18,6);
\draw[line width=0.6mm] (10,9) -- (18,9);

    \setcounter{rows}{9}
    \setcounter{row}{2}
    \setrow {3,3, , , }
    \setrow { ,3, , , }
    \setrow { ,3,3,3,3}
    \setrow { , , , ,3}
    \setrow { ,3,3,3,3}
    \setrow { ,3, , , }
    \setrow {3,3, , , };

    \end{tikzpicture}

(b) Positive (resp. Neg.) semi-merge in $Z_1$\\
~~\\

\begin{tikzpicture}[scale=.4]

    \draw[line width=0.1mm] (0, 0) grid (6, 6);
    \draw[line width=0.6mm] (0,0) -- (0,6); 
    \draw[line width=0.6mm] (3,0) -- (3,6); 
    \draw[line width=0.6mm] (6,0) -- (6,6); 
    \draw[line width=0.6mm] (0,0) -- (6,0);
    \draw[line width=0.6mm] (0,2) -- (6,2);
    \draw[line width=0.6mm] (0,4) -- (6,4);
    \draw[line width=0.6mm] (0,6) -- (6,6);

    \setcounter{cols}{0}
    \setcounter{rows}{6}
    \setcounter{row}{0}
    \setrow {3,3,3,3}
    \setrow { , , ,3}
    \setrow { , , ,3}
    \setrow { , , ,3}
    \setrow {3,3,3,3};

\draw[line width=0.1mm] (8, 0) grid (14, 6);
\draw[line width=0.6mm] (8,0) -- (8,6); 
\draw[line width=0.6mm] (11,0) -- (11,6); 
\draw[line width=0.6mm] (14,0) -- (14,6); 
\draw[line width=0.6mm] (8,0) -- (14,0);
\draw[line width=0.6mm] (8,2) -- (14,2);
\draw[line width=0.6mm] (8,4) -- (14,4);
\draw[line width=0.6mm] (8,6) -- (14,6);

\setcounter{cols}{8}
\setcounter{rows}{6}
\setcounter{row}{1}

    \setrow {3,3,3,3}
    \setrow { , , ,3}
    \setrow { , , ,3}
    \setrow { , , ,3}
    \setrow {3,3,3,3};
  
    \end{tikzpicture}

    (c) Positive (resp. Neg.) semi-merge in $Z_2$\\
\end{minipage}
\caption{Bridge merging. On the left, the gadget is shown using the standard notation we’ve been using for bridges and the $Z$-blocks involved in the structure. A bridge $T_1$ connecting $B_1$ and $B_3$ and $T_2$ connecting $B_2$ and $B_3$. On the right, particular cases of the gadget are illustrated. The positive version of the gadget is always shown on the left of each subfigure.}
\label{fig:merg_p}
\end{figure}

As the semi-duplicator, the functionality of the structure follows as a corollary of \Cref{lem:funcbridge}, it states that this gadget, formed by two bridges with the same endcell, can carry a signal to that cell independently:

\begin{corollary}   
Let $(M,c)=T_1+T_2$ be a positive semi-merge in $(B_1,B_2,B_3).$ If $T_1$ or $T_2$ are source bridges, then $F(c,Z)(B_3^+) =4$.
\end{corollary}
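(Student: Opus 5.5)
We reduce the semi-merge to a single bridge and apply \Cref{lem:funcbridge}, in the same way as for the semi-duplicator (\Cref{lem:funcduplic}). Assume without loss of generality that $T_1=(P_1,c_1)$ is the source bridge, so $c(P_1(0))=c(B_1^+)=4$. The plan is to show that $P_1$, viewed inside the configuration $c=c_1\oplus c_2$ of the gadget, satisfies the hypotheses of \Cref{lem:funcbridge}, possibly in the relaxed form of \Cref{remarkfunc}. That lemma then gives $F(c,Z)(P_1(k))=4$, and $P_1(k)=B_3^+$.

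First we check that on $P_1$ the gadget configuration agrees with the bridge $T_1$. The operator $\oplus$ only alters a cell if one of the summands has value $2$ there. Neither bridge has a sink, so no cell has value $2$, and on $P_1$ we get $c=\max\{c_1,c_2\}$. This equals $4$ at $P_1(0)$ and $3$ elsewhere on $P_1$, including the shared endpoint $P_1(k)=P_2(k)=B_3^+$. Outside $P_1$, every cell of $\overline{B_1}$ (and of the relevant neighbouring blocks) has value $0$ or $3$, the latter coming from cells of $P_2$. In particular all of these are strictly less than $4$, so they form an admissible background $s$ in \Cref{lem:funcbridge}.

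The main obstacle is ruling out a signal appearing on $P_2$ during the cycle, because \Cref{lem:funcbridge} assumes no other signal is present. A cell of $P_2$ is at $3$ and can reach $4$ only if it gains a grain, that is, only if it lies in $N(T_1)$ at the relevant step. I would argue as in the semi-wire proof: $P_1$ and $P_2$ start in the distinct blocks $B_1$ and $B_2$, which are vertically stacked, and both end at $B_3^+$.

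\begin{itemize}
\item By the monotone shape of $Z$-paths, the two paths run in mirrored vertical directions, one with $\beta=1$ and one with $\beta=-1$.
\item Consequently they can only come close inside $B_3$ and near its left column.
\item So $N(T_1)\cap P_2$ should be contained in cells that $P_1$ reaches at the final steps, essentially only $B_3^+$ itself.
\end{itemize}

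I would make this precise by the distance argument from the proof of \Cref{lem:funcbridge}. At step $i$ the signal on $P_1$ has made $|Z(1)\cdots Z(i)|_H$ horizontal and $|Z(1)\cdots Z(i)|_V$ vertical moves. The cells of $P_2$ that are adjacent to it at that time lie at a different row offset, using that $B_1^+$ and $B_2^+$ differ by $2v$ rows. So those cells are not affected before step $k$.

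Even if some cell of $P_2$ did fire, it would only emit grains towards $B_3^+$, since $P_2$ is itself a bridge heading there. This cannot prevent $B_3^+$ from holding at least $4$ at the end of the cycle, so the conclusion would still hold. The same reasoning applies when $T_2$ is the source bridge, by symmetry, and when both are source bridges. In the latter case, the two signals arrive simultaneously at $B_3^+$ and may push it to $5$. I would note that this case needs a separate remark, since the statement requires exactly $4$. Each signal individually suffices to make $B_3^+$ a signal, which is all the later layers use.
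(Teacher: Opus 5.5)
Your proof is the paper's: the corollary follows by a direct application of \Cref{lem:funcbridge} to whichever bridge holds the source, with the cells of the other bridge (value $3<4$) as admissible background. Your row-offset observation already settles the interference question: by the distance argument in the proof of that lemma, $P_2\setminus\{B_3^+\}$ is untouched until step $k$. So the vaguer bullet points, and especially the fallback claim, can be dropped. The fallback is false as stated, since a firing cell of $P_2$ would send grains to both of its neighbours in the current direction, not only towards $B_3^+$.

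Your remark on the case where both bridges are sources is correct, and the paper's one-line proof passes over it. The two signals sit at $B_3^+\pm V$ at time $k-1$ and both fire in the final $V$ step, so $B_3^+$ ends at $5$. The statement should therefore be read with an exclusive ``or'', or with $\ge 4$ in place of $=4$.
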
 

\begin{proof}
  As a direct application of \Cref{lem:funcbridge} to both bridges $T_1$ and $T_2,$ we have that $F(c,Z)(B_3^+)=4$.
\end{proof}

\subsection{Semi-crossings}
We describe in this section a way to cross semi-wires of distinct polarity with a gadget we call semi-crossing. This gadget will be a key feature in the construction of a wire crossing gadget, as it will allow a positive (resp. negative) signal to travel through negative bridges without allowing them to retain a signal after the cycle.

\begin{figure}
\begin{minipage}[t][0.35\textheight][c]{0.5\textwidth}
\centering
\input{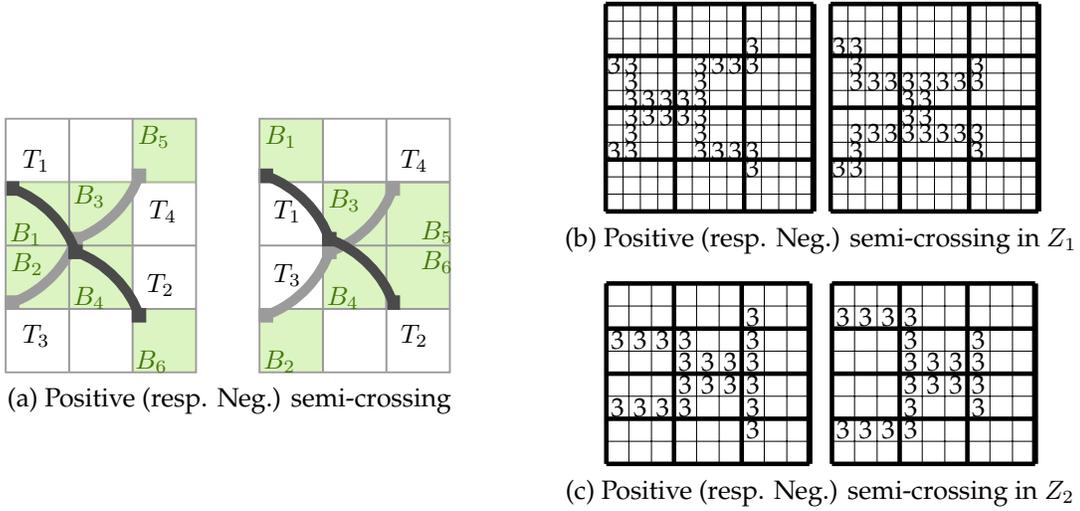}\\
(a) Positive (resp. Neg.) semi-crossing
\end{minipage}
\begin{minipage}[t][0.35\textheight][c]{0.5\textwidth}
\centering
\begin{tikzpicture}[scale=.23]

  \begin{scope}
    
    \draw[line width=0.1mm] (0, 0) grid (12, 12);
    \draw[line width=0.6mm] (0,0) -- (0,12); 
    \draw[line width=0.6mm] (4,0) -- (4,12); 
    \draw[line width=0.6mm] (8,0) -- (8,12); 
    \draw[line width=0.8mm] (12,0) -- (12,12); 
        
    \draw[line width=0.6mm] (0,0) -- (12,0);
    \draw[line width=0.6mm] (0,3) -- (12,3);
    \draw[line width=0.6mm] (0,6) -- (12,6);
    \draw[line width=0.6mm] (0,9) -- (12,9);
    \draw[line width=0.6mm] (0,12) -- (12,12);

    \setcounter{cols}{0}
    \setcounter{rows}{12}
    \setcounter{row}{2}
    \setrow { , , , , , , , ,3}
    \setrow {3,3, , , ,3,3,3,3}
    \setrow { ,3, , , ,3, , , }
    \setrow { ,3,3,3,3,3, , , }
    \setrow { ,3,3,3,3,3, , , }
    \setrow { ,3, , , ,3, , , }
    \setrow {3,3, , , ,3,3,3,3}
    \setrow { , , , , , , , ,3};
  \end{scope}

  \begin{scope}
\draw[line width=0.1mm] (13, 0) grid (25, 12);
\draw[line width=0.6mm] (13,0) -- (13,12); 
\draw[line width=0.6mm] (17,0) -- (17,12); 
\draw[line width=0.6mm] (21,0) -- (21,12); 
\draw[line width=0.8mm] (25,0) -- (25,12); 

\draw[line width=0.6mm] (13,0) -- (25,0);
\draw[line width=0.6mm] (13,3) -- (25,3);
\draw[line width=0.6mm] (13,6) -- (25,6);
\draw[line width=0.6mm] (13,9) -- (25,9);
\draw[line width=0.6mm] (13,12) -- (25,12);
    \setcounter{cols}{13}
    \setcounter{row}{2}
    \setrow {3,3, , , , , , , }
    \setrow { ,3, , , , , , ,3}
    \setrow { ,3,3,3,3,3,3,3,3}
    \setrow { , , , ,3,3, , , }
    \setrow { , , , ,3,3, , , }
    \setrow { ,3,3,3,3,3,3,3,3}
    \setrow { ,3, , , , , , ,3}
    \setrow {3,3, , , , , , , };
  \end{scope}

    \end{tikzpicture}
    
(b) Positive (resp. Neg.) semi-crossing in $Z_1$\\
~~\\

\begin{tikzpicture}[scale=.3]

  \begin{scope}
    
    \draw[line width=0.1mm] (0, 0) grid (9, 8);
    \draw[line width=0.6mm] (0,0) -- (0,8); 
    \draw[line width=0.6mm] (3,0) -- (3,8); 
    \draw[line width=0.6mm] (6,0) -- (6,8); 
    \draw[line width=0.8mm] (9,0) -- (9,8); 
        
    \draw[line width=0.6mm] (0,0) -- (9,0);
    \draw[line width=0.6mm] (0,2) -- (9,2);
    \draw[line width=0.6mm] (0,4) -- (9,4);
    \draw[line width=0.6mm] (0,6) -- (9,6);
    \draw[line width=0.6mm] (0,8) -- (9,8);
    
    \setcounter{cols}{0}
    \setcounter{rows}{8}
    \setcounter{row}{0}
    \setrow { , , , , , , , , }
    \setrow { , , , , , ,3, , }
    \setrow {3,3,3,3, , ,3, , }
    \setrow { , , ,3,3,3,3, , }
    \setrow { , , ,3,3,3,3, , }
    \setrow {3,3,3,3, , ,3, , }
    \setrow { , , , , , ,3, , }
    \setrow { , , , , , , , , };
  \end{scope}

  \begin{scope}
    
    \draw[line width=0.1mm] (10, 0) grid (19, 8);
    \draw[line width=0.6mm] (10,0) -- (10,8); 
    \draw[line width=0.6mm] (13,0) -- (13,8); 
    \draw[line width=0.6mm] (16,0) -- (16,8); 
    \draw[line width=0.8mm] (19,0) -- (19,8); 
            
    \draw[line width=0.6mm] (10,0) -- (19,0);
    \draw[line width=0.6mm] (10,2) -- (19,2);
    \draw[line width=0.6mm] (10,4) -- (19,4);
    \draw[line width=0.6mm] (10,6) -- (19,6);
    \draw[line width=0.6mm] (10,8) -- (19,8);

    \setcounter{cols}{10}
    \setcounter{rows}{8}
    \setcounter{row}{0}
    \setrow { , , , , , , , , }

    \setrow {3,3,3,3, , , , , }
    \setrow { , , ,3, , ,3, , }    
    \setrow { , , ,3,3,3,3, , }
    \setrow { , , ,3,3,3,3, , }
    \setrow { , , ,3, , ,3, , }
    \setrow {3,3,3,3, , , , , };
  \end{scope}

    \end{tikzpicture}
    
(c) Positive (resp. Neg.) semi-crossing in $Z_2$\\
~~\\
\end{minipage}
\caption{Bridge semi-crossing. On the left, the gadget is shown using the standard notation we’ve been using for bridges and the $Z$-blocks involved in the structure. Bridges $T_1,T_2$ connecting $B_1$ to $B_4$ and $B_4$ to $B_6$; Bridges $T_3,T_4$ connecting $B_2$ to $B_3$ and $B_3$ to $B_5$. On the right, particular cases of the gadget are illustrated. The positive version of the gadget is always shown on the left.}
\label{fig:semicr_p}
\end{figure}

\begin{figure}
\centering
\begin{tikzpicture}[scale=.25]

  \begin{scope}
    
    \draw[line width=0.1mm] (0, 0) grid (12, 12);
    \draw[line width=0.6mm] (0,0) -- (0,12); 
    \draw[line width=0.6mm] (4,0) -- (4,12); 
    \draw[line width=0.6mm] (8,0) -- (8,12); 
    \draw[line width=0.8mm] (12,0) -- (12,12); 
        
    \draw[line width=0.6mm] (0,0) -- (12,0);
    \draw[line width=0.6mm] (0,3) -- (12,3);
    \draw[line width=0.6mm] (0,6) -- (12,6);
    \draw[line width=0.6mm] (0,9) -- (12,9);
    \draw[line width=0.6mm] (0,12) -- (12,12);

    \setcounter{cols}{0}
    \setcounter{rows}{12}
    \setcounter{row}{2}
    \setrow { , , , , , , , ,3}
    \setrow {4,3, , , ,3,3,3,3}
    \setrow { ,3, , , ,3, , , }
    \setrow { ,3,3,3,3,3, , , }
    \setrow { ,3,3,3,3,3, , , }
    \setrow { ,3, , , ,3, , , }
    \setrow {3,3, , , ,3,3,3,3}
    \setrow { , , , , , , , ,3};
  \end{scope}

  \begin{scope}
\draw[line width=0.1mm] (14, 0) grid (26, 12);
\draw[line width=0.6mm] (14,0) -- (14,12); 
\draw[line width=0.6mm] (18,0) -- (18,12); 
\draw[line width=0.6mm] (22,0) -- (22,12); 
\draw[line width=0.8mm] (26,0) -- (26,12); 

\draw[line width=0.6mm] (14,0) -- (26,0);
\draw[line width=0.6mm] (14,3) -- (26,3);
\draw[line width=0.6mm] (14,6) -- (26,6);
\draw[line width=0.6mm] (14,9) -- (26,9);
\draw[line width=0.6mm] (14,12) -- (26,12);
    \setcounter{cols}{14}
    \setcounter{row}{2}
    \setrow { ,1, , , , , , ,3}
    \setrow {2,3, , , ,3,3,3,3}
    \setrow { ,2, , ,1,4, , , }
    \setrow {1,3,3,3,2,2, , , }
    \setrow { ,3,3,2,2,4, , , }
    \setrow { ,3, ,1, ,2, , , }
    \setrow {3,3, , , ,4,3,3,3}
    \setrow { , , , , , , , ,3};
  \end{scope}

  \begin{scope}
\draw[line width=0.1mm] (28, 0) grid (40, 12);
\draw[line width=0.6mm] (28,0) -- (28,12); 
\draw[line width=0.6mm] (32,0) -- (32,12); 
\draw[line width=0.6mm] (36,0) -- (36,12); 
\draw[line width=0.8mm] (40,0) -- (40,12); 

\draw[line width=0.6mm] (28,0) -- (40,0);
\draw[line width=0.6mm] (28,3) -- (40,3);
\draw[line width=0.6mm] (28,6) -- (40,6);
\draw[line width=0.6mm] (28,9) -- (40,9);
\draw[line width=0.6mm] (28,12) -- (40,12);
    \setcounter{cols}{28}
    \setcounter{row}{2}
    \setrow { ,1, , , , , , ,3}
    \setrow {2,3, , , ,3,3,3,3}
    \setrow { ,2, , ,2,2,1, , }
    \setrow {1,3,3,3,2,2, , , }
    \setrow { ,3,3,2,3,2,1, , }
    \setrow { ,3, ,1, ,2, , ,1}
    \setrow {3,3, , ,1,3,3,2,2}
    \setrow { , , , , , , , ,4};
  \end{scope}
    
    \draw (28,-0.5) node [anchor=north west,scale=0.8][inner sep=0.75pt]    {$F(c,Z_1,14) = F^2(c,Z_1)$};

    \draw (5.5,-0.5) node [anchor=north west,scale=0.8][inner sep=0.75pt]    {$c$};

    \draw (17.5,-0.5) node [anchor=north west,scale=0.8][inner sep=0.75pt]    {$F(c,Z_1,8)$};

    \end{tikzpicture}
\caption{Example of the positive semi-crossing functionality for the word $Z_1 = HVVHHHV$. At the configuration $F(c, Z_1, 8)$, two signals coexist with the same vertical parity; however, as intended, in $F^2(c, Z_1)$ only the signal at the corresponding output remains.}
\label{fig:semicrsvpp}
\end{figure}
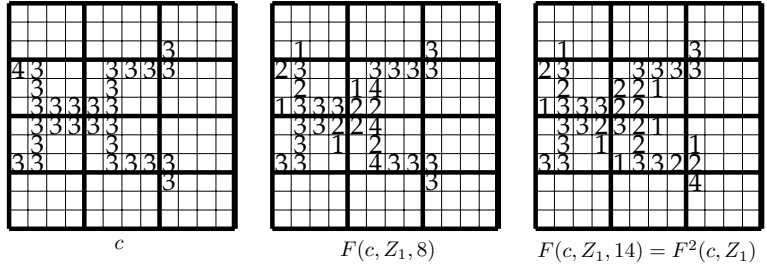

Consider 6 blocks $(B_1,...,B_6)$ included in a $4\times 3$ block shape as in \Cref{fig:semicr_p}, positive bridges $T_1,T_2$ connecting $B_1,B_4$ and $B_4,B_6$ and negative bridges $T_3,T_4$ connecting $B_2,B_3$ and $B_3,B_5$. We define a positive semi-crossing as the bridge gadget $(Q,c) = T_1+T_2+T_3+T_4$. We call $B_1^+,B_2^-$ the inputs of the semi-crossing, and $B_5^-,B_6^+$ its outputs. Its functionality follows from the following lemma:

\begin{lemma}\label{lem:switch}
Let $(Q,c) = T_1+T_2+T_3+T_4$ be a positive (resp. negative) semi-crossing with blocks $(B_1,...,B_6).$  The following conditions hold:

\begin{itemize}
    \item If $T_1$ is a source bridge, then $F^2(c,Z)(B_6^+) =4$ and $F^i(c,Z)(B_5^-) \neq 4, \forall i$
    \item If $T_3$ is a source bridge, then $F^2(c,Z)(B_5^-) =4$ and $F^i(c,Z)(B_6^+) \neq 4, \forall i$
\end{itemize} 
\end{lemma}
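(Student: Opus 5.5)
The plan is to treat the two items symmetrically and prove only the first; the second follows by the same argument with the roles of polarities exchanged, since a negative semi-crossing is the reflection of a positive one. Each item has two parts: positive propagation of the input signal along its own pair of bridges, and a negative claim that the crossing signal never reaches the other output.

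\emph{Step 1: positive propagation.} Assume $T_1$ is a source bridge. I would apply \Cref{lem:funcbridge2} to the pair $T_1,T_2$, which connects $B_1,B_4$ and $B_4,B_6$ through diagonally connected blocks. The hypothesis to check is that every cell of $\overline{B_1}\cup\overline{B_4}\cup\overline{B_6}$ outside $P_1\cup P_2$ starts with value at most $3$. This holds because the only non-zero cells of the gadget are $3$'s on $P_3\cup P_4$, and $\oplus$ never creates a $4$ except at the source. The corollary then gives $F^2(c,Z)(B_6^+)=4$ directly. The same works for $T_3,T_4$ in the second item.

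\emph{Step 2: the signal does not leak into the other semi-wire.} This is the main obstacle. While the positive signal traverses $T_1$ and $T_2$, it fires exactly the cells $P_1(i)$ and then $P_2(i)$. The cells that gain grains are precisely $N(T_1)\cup N(T_2)$ by \Cref{lem:funcbridge}. I would show that every cell of $P_3\cup P_4$ lying in $N(T_1)\cup N(T_2)$ receives at most one grain during the two cycles, and at a moment when it is not also adjacent to another firing cell. Such a cell then goes from $3$ to $4$ at most once.

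Where this happens, the cells of $P_3$ or $P_4$ at the crossing (in $B_3,B_4$, as in \Cref{fig:semicrsvpp}) become $4$ at a time step whose position in the cycle does not match their index along the negative path. The sign convention of the $Z$-path matters here: a positive path enters a block from its top row and a negative path from its bottom row, so the step parities in $H$ and $V$ are offset. Concretely, the spurious signal appears at $P_3(j)$ or $P_4(j)$ only at a time $t\not\equiv j \pmod k$. At that point the next rule applied pushes it toward a cell not on the negative path, or toward an already depleted cell with value $2$. There it produces a $3$ rather than a $4$ and dies out, as in the sink mechanism of \Cref{lem:funcbridgesink}.

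I would carry this out by following the induction in the proof of \Cref{lem:funcbridge} across the $2k$ steps. The invariant to maintain is that every cell outside $P_1\cup P_2$ has value at most $4$, and any $4$ it attains is not followed by a $3$-neighbour in the direction of the next rule. This is the step I expect to need a case analysis over where the first and last letters $H$, $V$ of $Z$ fall, using the normalization that $Z$ starts with $H$ and ends with $V$.

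\emph{Step 3: conclusion.} From Step 2, no signal travels along $P_4$ to its endpoint, so $F^i(c,Z)(B_5^-)\neq 4$ for $i\le 2$. After $F^2$, the only signal left is at $B_6^+$ and the rest of the gadget contains no $4$'s, so no further signal can ever reach $B_5^-$. This gives the statement for all $i$.
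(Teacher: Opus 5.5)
Your Step 1 is correct and is exactly how the paper obtains $F^2(c,Z)(B_6^+)=4$, via \Cref{lem:funcbridge2}.

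The gap is Step 2. It carries the whole negative claim, yet it is only sketched, and the mechanism it relies on is false. You claim that a mis-phased signal on $P_3\cup P_4$ is killed by the very next rule, i.e.\ that a $4$ off $P_1\cup P_2$ never has a $3$-neighbour in the direction of the next rule. The paper's own example $Z_1=HVVHHHV$ already contradicts this:
\begin{itemize}
  \item At step $9$ the real signal at $P_2(1)$ fires vertically. Besides advancing to $P_2(2)$, it turns $P_4(1)$ into a $4$, at index $1$ and cycle position $2$, so it is mis-phased as you predict.
  \item The next letter is $Z(3)=V$, and the vertical neighbour $P_4(2)$ is still a $3$.
  \item So at step $10$ the spurious signal has advanced to $P_4(2)$. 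It is the $4$ on $P_4$ visible in the middle panel of \Cref{fig:semicrsvpp}.
\end{itemize}
A mis-phased signal keeps advancing as long as $Z$ agrees with a shift of itself. How far it travels therefore depends on the self-overlap structure of the whole word, not on where its first and last letters fall. The case analysis you anticipate would not close this.

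The missing idea is the invariant the paper uses instead, \Cref{lem:vpp}. A configuration with a single signal has the vertical parity property, and both $H$ and $V$ preserve it. Hence at every moment all signals lie in rows of one parity. After $2k$ steps $B_6^+$ holds a signal, and $B_5^-$ lies $2v+1$ rows away from $B_6^+$, an odd number. So $B_5^-$ cannot hold a signal at that moment. Spurious signals may well exist; the argument only shows they cannot be at $B_5^-$, so no tracking of them is needed. The second item is symmetric.

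This settles $i=2$. For $i=1$ it suffices that $B_5^-$ is $2h$ columns from $B_1^+$, so one cycle cannot reach it. For $i\ge 3$ the parity argument says nothing once the real signal has left the gadget. Your Step 3 claim that no $4$'s remain after two cycles would therefore need its own justification in any case.
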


For the proof of the previous lemma, we need a property related to the "parity" of signals. For a configuration \(c\), we say that \(c\) has the \emph{vertical parity property} (vpp) if, for every pair of signals \(x, y\) in \(c\), we have that, \(x - y = (a, 2b)\) for some \(a, b \in \mathbb{Z}\). Observe that this property is preserved by the dynamics. In fact we have the following lemma:

\begin{lemma}\label{lem:vpp}
Let $c$ be a configuration with the vpp. Then, $F(c,H),F(c,V)$ have the vpp.
\end{lemma}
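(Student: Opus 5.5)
We prove the claim by tracking where signals of $F(c,H)$ (resp. $F(c,V)$) can lie relative to the signals of $c$. The key observation is a locality statement: a cell $x$ can be a signal in $F(c,H)$ only if some cell of $N_h(x)$ is a signal in $c$. Indeed, if no cell in $N_h(x)$ is a signal, then $F(c,H)(x)=c(x)\le 3$, since no grains are lost or gained. (States are bounded by $5$, so $c(x)\le 3$ whenever $x$ is not a signal.) Similarly, a signal of $F(c,V)$ must have a signal of $c$ in its vertical neighborhood $N_v(x)$.

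Now take two signals $x',y'$ of $F(c,H)$. By the observation, there are signals $x,y$ of $c$ with $x'-x,\ y'-y\in\{0,\pm H\}$. Then
\[
x'-y' = (x-y) + (x'-x) - (y'-y),
\]
where $x-y=(a,2b)$ by the vpp of $c$, and the correction terms only change the first coordinate. Hence $x'-y'=(a',2b)$, so $F(c,H)$ has the vpp. For the $H$ rule this is essentially immediate.

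The $V$ case is the main obstacle, because a vertical step changes the second coordinate by $\pm1$, and naively two signals could move by different vertical amounts (one staying put, one moving). So we need the sharper fact that in $F(c,V)$, \emph{every} signal has moved vertically by an odd amount relative to its origin, or every signal by an even amount. The plan is to show that after a $V$ step, no cell that was a signal in $c$ remains a signal unless it received a grain from a vertical neighbor. A firing cell goes from $4$ to $2$, or from $5$ to $3$, and regains at most one grain per firing vertical neighbor. But by the vpp of $c$, two vertically adjacent cells cannot both be signals, since their difference would be $(0,\pm1)$. So a signal of $c$ receives no grains under $V$ and ends at most at $3$: it is not a signal in $F(c,V)$.

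Therefore every signal $x'$ of $F(c,V)$ satisfies $x'=x\pm V$ for some signal $x$ of $c$, with the difference exactly one vertical unit. Then
\[
x'-y'=(x-y)+(\pm V)-(\pm V)
\]
has second coordinate $2b+\{0,\pm2\}$, which is even, so the vpp is preserved. (The same adjacency argument could be applied in the $H$ case, but the vpp only constrains vertical parity, so there it is not needed.) The only delicate point is confirming that the state bound $\le 5$ and the nonadjacency of signals under vpp rule out a firing cell staying at $\ge4$; I expect this to be a short case check.
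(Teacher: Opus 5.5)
Your proof is correct and follows essentially the same route as the paper. Under $H$, signals cannot appear outside the rows that already contain signals. Under $V$, every signal topples without receiving grains, because its vertical neighbours lie in signal-free rows, so all new signals appear exactly one row away and the common row parity flips uniformly. Your per-signal bookkeeping also makes explicit a step the paper leaves implicit: a firing cell cannot remain a signal, since no vertically adjacent cell fires.
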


\begin{proof}
Given a configuration $c$ with the vpp, we have:

1. Since no row without a signal can gain one over transitions \(F(\cdot, H)\), the configuration \(F(c, H)\) has the vpp 

2. Since the vpp holds on $c$, only the rows of a given parity contain signals; assume these are all columns with even vertical coordinate. Through a transition $F(\cdot,V)$, all signals topple and distribute grains to columns of odd parity. Because the columns of odd parity initially contained no signals, all columns with even vertical parity on $F(c,V)$ are signal-free, and therefore the vpp holds.

\end{proof}

Now we proceed with the proof of \Cref{lem:switch}.
\begin{proof} (of \Cref{lem:switch})
    We need to show that that the signal will not duplicate and arrive to $B_5^-$ \textbf{and} $B_6^+$ at the end of the second cycle. 

    If $c(B_1^+) = 4$, from \Cref{lem:funcbridge2}, we have $F^2(c,Z)(B_6^+) =4$. We note from the structure that $B_5^- = B_6^+ + 2vV + 1$, which says $B_5^-, B_6^+$ have different vertical parity. Since at $c$ all signals have the same parity, and $F^2(c,Z)(B_6^+) =4$, we conclude $F^2(c,Z)(B_5^+) \neq 4$.
\end{proof}

We define the negative semi-crossing gadget as shown in \Cref{fig:semicr_p}. The difference between the polarities of the upper input and the bottom one: the positive semi-crossing has the positive signal input on top, while the negative semi-crossing has it below. The functionality of the negative semi-crossing is analogous, and we do not provide a proof here.

\subsection{Switches}
In this section we define a gadget called switch gadget. Roughly, this gadget will function as a positive semi-crossing gadget with one key difference: It will have only one output, of its respective polarity, which will be activated by a signal only if the opposite input has received a signal.

\begin{figure}[ht]
\begin{minipage}[t][0.35\textheight][c]{0.5\textwidth}
\centering
\input{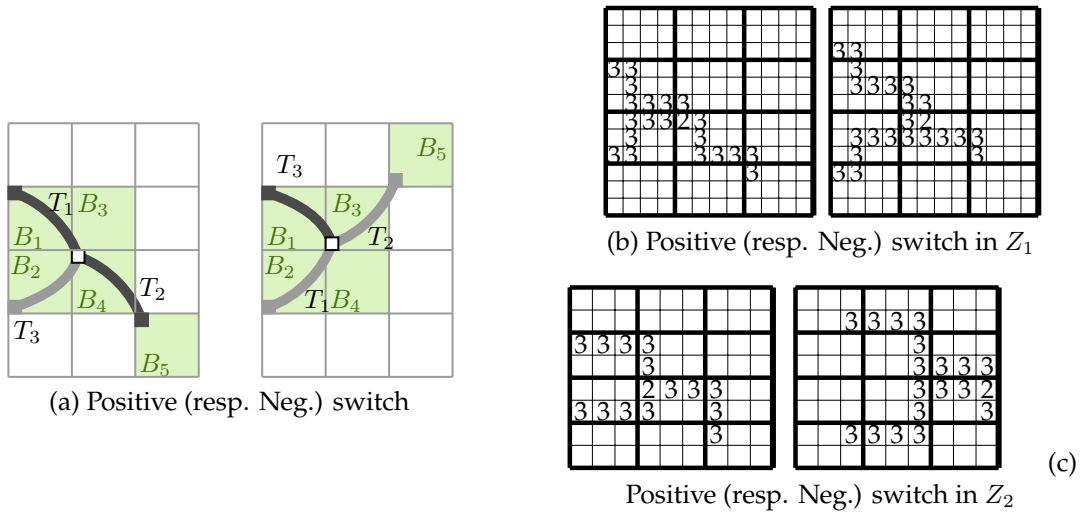}\\
(a) Positive (resp. Neg.) switch
\end{minipage}
\begin{minipage}[t][0.35\textheight][c]{0.5\textwidth}
\centering
\begin{tikzpicture}[scale=.23]

  \begin{scope}
    
    \draw[line width=0.1mm] (0, 0) grid (12, 12);
    \draw[line width=0.6mm] (0,0) -- (0,12); 
    \draw[line width=0.6mm] (4,0) -- (4,12); 
    \draw[line width=0.6mm] (8,0) -- (8,12); 
    \draw[line width=0.8mm] (12,0) -- (12,12); 
        
    \draw[line width=0.6mm] (0,0) -- (12,0);
    \draw[line width=0.6mm] (0,3) -- (12,3);
    \draw[line width=0.6mm] (0,6) -- (12,6);
    \draw[line width=0.6mm] (0,9) -- (12,9);
    \draw[line width=0.6mm] (0,12) -- (12,12);

    \setcounter{cols}{0}
    \setcounter{rows}{12}
    \setcounter{row}{2}
    \setrow { , , , , , , , , }
    \setrow {3,3, , , , , , , }
    \setrow { ,3, , , , , , , }
    \setrow { ,3,3,3,3, , , , }
    \setrow { ,3,3,3,2,3, , , }
    \setrow { ,3, , , ,3, , , }
    \setrow {3,3, , , ,3,3,3,3}
    \setrow { , , , , , , , ,3};

\draw[line width=0.1mm] (13, 0) grid (25, 12);
\draw[line width=0.6mm] (13,0) -- (13,12); 
\draw[line width=0.6mm] (17,0) -- (17,12); 
\draw[line width=0.6mm] (21,0) -- (21,12); 
\draw[line width=0.8mm] (25,0) -- (25,12); 

\draw[line width=0.6mm] (13,0) -- (25,0);
\draw[line width=0.6mm] (13,3) -- (25,3);
\draw[line width=0.6mm] (13,6) -- (25,6);
\draw[line width=0.6mm] (13,9) -- (25,9);
\draw[line width=0.6mm] (13,12) -- (25,12);

    \setcounter{cols}{13}
    \setcounter{row}{2}
    \setrow {3,3, , , , , , , }
    \setrow { ,3, , , , , , , }
    \setrow { ,3,3,3,3, , , , }
    \setrow { , , , ,3,3, , , }
    \setrow { , , , ,3,2, , , }
    \setrow { ,3,3,3,3,3,3,3,3}
    \setrow { ,3, , , , , , ,3}
    \setrow {3,3, , , , , , , };
  \end{scope}

    \end{tikzpicture}

(b) Positive (resp. Neg.) switch in $Z_1$\\
~~\\

\begin{tikzpicture}[scale=.3]

  \begin{scope}
    
    \draw[line width=0.1mm] (0, 0) grid (9, 8);
    \draw[line width=0.6mm] (0,0) -- (0,8); 
    \draw[line width=0.6mm] (3,0) -- (3,8); 
    \draw[line width=0.6mm] (6,0) -- (6,8); 
    \draw[line width=0.8mm] (9,0) -- (9,8); 
        
    \draw[line width=0.6mm] (0,0) -- (9,0);
    \draw[line width=0.6mm] (0,2) -- (9,2);
    \draw[line width=0.6mm] (0,4) -- (9,4);
    \draw[line width=0.6mm] (0,6) -- (9,6);
    \draw[line width=0.6mm] (0,8) -- (9,8);

    \setcounter{cols}{0}
    \setcounter{rows}{8}
    \setcounter{row}{0}
    \setrow { , , , , , , , , }
    \setrow { , , , , , , , , }
    \setrow {3,3,3,3, , , , , }
    \setrow { , , ,3, , , , , }
    \setrow { , , ,2,3,3,3, , }
    \setrow {3,3,3,3, , ,3, , }
    \setrow { , , , , , ,3, , }
    \setrow { , , , , , , , , };
    
\draw[line width=0.1mm] (10, 0) grid (19, 8);
\draw[line width=0.6mm] (10,0) -- (10,8); 
\draw[line width=0.6mm] (13,0) -- (13,8); 
\draw[line width=0.6mm] (16,0) -- (16,8); 
\draw[line width=0.8mm] (19,0) -- (19,8); 

\draw[line width=0.6mm] (10,0) -- (19,0);
\draw[line width=0.6mm] (10,2) -- (19,2);
\draw[line width=0.6mm] (10,4) -- (19,4);
\draw[line width=0.6mm] (10,6) -- (19,6);
\draw[line width=0.6mm] (10,8) -- (19,8);

    \setcounter{cols}{12}
    \setcounter{rows}{8}
    \setcounter{row}{0}
    \setrow { , , , , , , , , }
    \setrow {3,3,3,3, , , , , }
    \setrow { , , ,3, , , , , }    
    \setrow { , , ,3,3,3,3, , }
    \setrow { , , ,3,3,3,2, , }
    \setrow { , , ,3, , ,3, , }
    \setrow {3,3,3,3, , , , , };
  \end{scope}

    \end{tikzpicture}
(c) Positive (resp. Neg.) switch in $Z_2$\\
~~\\
\end{minipage}
\caption{Switch. On the left, the gadget is shown using the standard notation we’ve been using for bridges and the $Z$-blocks involved in the structure. Bridges $T_1,T_2$ connecting $B_1$ to $B_4$ and $B_4$ to $B_5$ in the positive switch; Bridges $T_1,T_2$ connecting $B_2$ to $B_3$ and $B_3$ to $B_5$ in the negative switch; Bridge with a sink (denoted as a white dot) $T_3$ connecting $B_2$ to $B_3$ in the positive switch; Bridge with a sink (denoted as a white dot) $T_3$ connecting $B_1$ to $B_4$ in the negative switch;. On the right, particular cases of the gadget are illustrated. The positive version of the gadget is always shown on the left. Note that in this case, bridge $T_3$ does not end at the block it would normally reach if it were not a bridge with a sink.}
\label{fig:swich}
\end{figure}

The construction of the positive and negative switches differs slightly but follows the same principle. The polarity of the switch is determined by the polarities of the upper bridges. Like the semi-crossing, its inputs/outputs have different polarities.

\subsubsection{Positive switch}

For the positive switch, we arrange 5 blocks $(B_1,...,B_5)$ included in a $4\times 3$ block shape as in \Cref{fig:swich}, positive bridges $T_1,T_2$ connecting $B_1,B_4$ and $B_4,B_5$ and a negative bridge $T_3 = (P_3,c_3)$ with a sink at $k-1$ connecting $B_2,B_3$. We define a positive off switch as the bridge gadget $(S,c_{\text{off}}) = T_1+T_2+T_3$. We call $B_1^+,B_2^-$ the inputs of the switch, and $B_5^+$ its output.

If the upper input receives a signal, it will travel to the output iff the lower input has already received a signal before. We show the functionality for the positive switch, with the negative one being analogous.

\begin{lemma}\label{lem:funcswitch}
~\\
\begin{itemize}
    \item Given $(S,c_{\text{off}}) = T_1+T_2+T_3$ a positive off switch at $(B_1,...,B_5)$. If $T_1$ is a source bridge, then $F^2(c_{\text{off}},Z)$ has no signals.
    \item Let $(S,c_{\text{on}}) = T_1+T_2+T_3$ the gadget defined as $S(S,c_{\text{off}})$ except $T_3$ is a source bridge with a sink at $k-1$, then $F^2(c_{on},Z)$ satisfies the conditions of \Cref{lem:funcbridge2} for bridges $T_1,T_2$
\end{itemize}
\end{lemma}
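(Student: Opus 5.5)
We handle the two items separately. In both we track the signal cycle by cycle using the induction of \Cref{lem:funcbridge}, and we control every cell in the closures of $B_1,\dots,B_5$ that could be touched. The only interaction between the positive path $T_1+T_2$ and the negative bridge with a sink $T_3$ occurs at the crossing region in the central blocks $B_3,B_4$, where $P_3$ meets $P_1\cup P_2$. By the $\oplus$ convention, the cell $P_3(k-1)$ holds a $2$, and the first step is to check from the geometry of \Cref{fig:swich} that this cell lies on $P_2$. We then record its index $j_0$ with $P_2(j_0)=P_3(k-1)$.

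\textbf{Off switch.} Assume $T_1$ is a source bridge and $T_3$ carries no signal. The conditions of \Cref{lem:funcbridge} hold for $T_1$: every cell of $\overline{B_1}$ off $P_1$ is at most $3$, since it belongs either to $P_3$ (value $3$) or to the background. Hence after one cycle $B_4^+=P_2(0)$ holds a $4$, and, as in the semi-wire lemma, no cell of $P_2$ or $P_3$ has gained a grain. In the second cycle we rerun the induction of \Cref{lem:funcbridge} along $P_2$ up to step $j_0-1$. At step $j_0$ the signal reaches a cell of value $2$, which becomes $3$. This is the mechanism of \Cref{lem:funcbridgesink}, applied with $T_2$ viewed as a bridge with a sink at $j_0$. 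At that moment no cell is at least $4$, so the configuration becomes stable and stays free of signals. The hypothesis of \Cref{lem:funcbridgesink} that cells in $N(T_2)$ are below $3$ is the delicate point, because $P_3$ runs through that neighbourhood with $3$'s. The vpp argument of \Cref{lem:vpp} settles it. The positive signal has the opposite vertical parity from the negative cells of $P_3$, and grains dropped on $P_3$ by horizontal topplings land in rows the signal does not occupy. So we check directly from the bridge shape that no cell of $P_3$ lies in $N_v$ of a vertical step of $P_2$ before step $j_0$, which means no $3$ is raised to $4$.

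\textbf{On switch.} Assume $T_3$ is a source bridge with a sink at $k-1$. By \Cref{lem:funcbridgesink}, after one cycle $P_3(k-1)$ has value $3$, and the signal has died with no remaining grain above $3$. Along the way, the remaining cells of $P_3$ drop to $2$ (or are left at most $3$), and the cells of $N(T_3)$ gain at most one grain. The key verification is that, by the semi-wire argument on affected blocks (\Cref{fig:afectadas}), $N(T_3)\cap(P_1\cup P_2)$ consists only of $P_3(k-1)$ itself. This ensures that every cell of $P_1\cup P_2$ now equals $3$ and every other cell in $\overline{B_1}\cup\overline{B_4}\cup\overline{B_5}$ is at most $3$. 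These are precisely the hypotheses of \Cref{lem:funcbridge2} for $T_1,T_2$ once the signal enters at $B_1^+$. Whether we take $F$ or $F^2$ as in the statement, the argument is the same, since with no signal present the configuration is stable.

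\textbf{Main obstacle.} The hardest step is the generic neighbourhood bookkeeping: showing that $P_3(k-1)\in P_2$, and that the other cells of $P_3$ lie outside the sets that matter, for every $Z$ starting with $H$ and ending with $V$. Our plan is to do this using only the facts that bridges move monotonically with $h$ horizontal and $v$ vertical steps and that positive and negative sources differ by $(v-1)V$. This is the same style of argument used for the semi-wire lemma.
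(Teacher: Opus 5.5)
Your proof of the off switch fails because the sink is in the wrong place. The on-switch part is essentially the paper's argument.

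\textbf{Where the sink actually is.} Since $Z$ ends with $V$, by time $k-1$ the negative path $P_3$ has made all $h$ horizontal steps but only $v-1$ of its upward vertical steps. Hence
\[
  P_3(k-1)=B_2^-+hH-(v-1)V=B_1^{+}+hH+vV=B_4^+=P_1(k)=P_2(0),
\]
and likewise $P_3(k)=P_1(k-1)=B_3^-$. So your index $j_0$ is $0$: the cell holding the $2$ is the endpoint of $T_1$ itself, not an interior cell of $P_2$.

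\textbf{Consequences for your off-switch argument.}
\begin{itemize}
  \item \Cref{lem:funcbridge} does not apply to $T_1$ in $c_{\text{off}}$, because $P_1(k)$ holds a $2$. You only checked the cells off $P_1$.
  \item Your claim that $B_4^+$ holds a $4$ after one cycle is false. At step $k$ the signal lifts $B_4^+$ from $2$ to $3$ and dies within the first cycle, never entering $T_2$.
  \item Your whole second-cycle analysis along $P_2$, including the vpp discussion of $N(T_2)$, describes dynamics that never occur.
  \item The vpp reasoning is also unsound on its own terms: horizontal topplings deposit grains in the signal's own row. For example, with $Z=HVHVV$ the horizontal firing of $P_2(2)$ would land on $P_3(3)$, which holds a $3$ in the off state.
\end{itemize}

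\textbf{The missing observation.} It is exactly the paper's: in $c_{\text{off}}$, $T_1$ is itself a bridge with a sink at $k$. The first item then follows directly from \Cref{lem:funcbridgesink} applied to $T_1$. To verify its hypothesis:
\begin{itemize}
  \item $P_1(0),\dots,P_1(k-1)$ lie in the rows of $B_1$.
  \item All cells of $(P_2\cup P_3)\setminus P_1$ holding a $3$ lie in the rows of $B_2$.
  \item The only firing of $P_1$ that reaches the rows of $B_2$ is the vertical one at step $k$, and it lands on the sink.
\end{itemize}
So the configuration is signal-free after one cycle, hence after two.

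\textbf{On switch.} Your argument matches the paper's: apply \Cref{lem:funcbridgesink} to $T_3$, which lifts $B_4^+$ from $2$ to $3$, then check that no other cell of $P_1\cup P_2$ gains a grain. One small correction: $N(T_3)\cap(P_1\cup P_2)$ also contains $P_3(k)=P_1(k-1)$, not only $P_3(k-1)$. This is harmless, because the signal on $T_3$ stops at step $k-1$ and never fires into $P_3(k)$.
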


\begin{proof}
    Both parts follows from the functionality of bridges with a sink, since every cell at \(N(P_1)\setminus P_1\) and \(N(P_3)\setminus P_3\) in both $c_{on},c_{off}$ has value 0 and $T_1,T_3$ are bridges with sinks. 
    
    (1) follows directly from \Cref{lem:funcbridgesink}.
    
    For (2), \Cref{lem:funcbridgesink} ensures $F(c_{\text{on}},Z)(P_3(k-1)) = 3$. Outside of $P_3(k)$ and $P_3(k-1)$, no cell from $P_1,P_2$ is at $N(P_3)$, ensuring no cell at $P_1,P_2$ outside $P_{3}(k-1)$ gains a grain during the cicle (remeber by definition of the switch $P_{3}(k-1) = P_1(k)$).
    
\end{proof}

\subsubsection{Negative switch}

For the negative switch, given 5 blocks $(B_1,...,B_5)$ included in a $4\times 3$ block shape as in \Cref{fig:swich}, negative bridges $T_1,T_2$ connecting $B_2,B_3$ and $B_3,B_5$ and a positive bridge $T_3 = (P_3,c_3)$ with a sink at $k-1$ connecting $B_1,B_4$. We define a positive off switch as the bridge gadget $(S,c_{\text{off}}) = T_1+T_2+T_3$. We call $B_1^+,B_2^-$ the inputs of the switch, and $B_5^-$ its output.

Functionality of negative switches is analogous to the positive one, having:
\begin{lemma}\label{lem:funcswitchneg}
~\\
\begin{itemize}
    \item Given $(S,c_{\text{off}}) = T_1+T_2+T_3$ a negative off switch at $(B_1,...,B_5)$ except $T_1$ is a source bridge, then $F^2(c_{\text{off}},Z)$ has no signals.
    \item Let $(S,c_{\text{on}}) = T_1+T_2+T_3$ the gadget defined as $S(S,c_{\text{off}})$ except $T_3$ is a source bridge with a sink at $k-1$, then $F^2(c_{on},Z)$ satisfies the conditions of \Cref{lem:funcbridge2} for bridges $T_1,T_2$
\end{itemize}
\end{lemma}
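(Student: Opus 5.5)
The plan is to mirror the proof of \Cref{lem:funcswitch}, exchanging the roles of the two polarities; equivalently, to reflect the positive switch across a horizontal line. In the negative switch, $T_1,T_2$ are the negative bridges carrying the signal from $B_2^-$ through $B_3^-$ to $B_5^-$. The positive bridge $T_3=(P_3,c_3)$, with its sink at $k-1$ and connecting $B_1$ towards $B_4$, plays the role of the controlling input. By construction of the sink, $P_3(k-1)$ coincides with a cell of the negative path, namely $P_1(k)=B_3^-$ in this mirrored geometry. That cell therefore starts at value $2$ in $c_{\text{off}}$, since $\oplus$ keeps the $2$. First I will verify that every cell of $N(P_1)\setminus P_1$ and of $N(P_3)\setminus P_3$ holds $0$ in both $c_{\text{off}}$ and $c_{\text{on}}$. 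This is a direct inspection of the block layout in \Cref{fig:swich}, using that bridges occupy disjoint parts of the closures except at the shared cell. It also uses the observation made in the semi-wire lemma that bridges from different diagonal pairs intersect only at their endpoints.

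For the off case, a signal starts at $P_1(0)$. By \Cref{lem:funcbridgesink} applied to $T_1$, viewed as a bridge with a sink at $k$, the signal reaches the cell of value $2$ and raises it only to $3$. Hence no cell reaches $4$ at step $k$, and all neighbours stay at most $3$. After one cycle there are no signals. With no signal present, the second cycle leaves the configuration unchanged, so $F^2(c_{\text{off}},Z)$ has no signals.

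For the on case, $T_3$ is a source bridge with a sink at $k-1$. By \Cref{lem:funcbridgesink}, the first cycle converts $P_3(k-1)$ from $2$ to $3$ and creates no signal. I then argue that no other cell of $P_1\cup P_2$ gains a grain, because only $P_3(k-1)$ and $P_3(k)$ lie in $N(P_3)$. The resulting configuration is exactly the one required by \Cref{lem:funcbridge2} for $T_1,T_2$: value $3$ along the negative paths and garbage strictly below $4$ elsewhere. Here I take the statement's ``$T_1$ is a source'' in the off case to mean that the negative input fires, as in the positive version.

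The main obstacle is the geometric bookkeeping in the on case. I must check that the parity argument of \Cref{lem:vpp} and the affected-blocks argument (\Cref{fig:afectadas}) still hold when the roles of top and bottom are swapped. In particular, $N(P_3)$ must not touch the negative path anywhere except at the sink cell. I would verify this first on the $Z_1$ and $Z_2$ instances of \Cref{fig:swich} and then argue generically via the concavity of bridges, which start horizontal and end vertical.
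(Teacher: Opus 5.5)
Your proposal is correct and essentially matches the paper's proof. The paper proves only the positive switch and declares the negative one analogous, and your argument is that proof reflected across a horizontal line: cells of $N(P_1)\setminus P_1$ and $N(P_3)\setminus P_3$ hold $0$, \Cref{lem:funcbridgesink} handles the off case via the $2$ that $\oplus$ leaves at $P_1(k)=P_3(k-1)$, and $P_1\cup P_2$ meets $N(P_3)$ only in $P_3(k-1),P_3(k)$; the parity check via \Cref{lem:vpp} that you plan is not needed.
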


Observe that the case in which two signals arrive at the same time is not interesting to analyze since we can control in our construction that this particular case never happens.

\subsection{Connecting gadgets}\label{sec:connecting}
\fxnote{Shouldn't this come after retarders?}

\subsubsection{Shifted cycles}
We will assume that the words we are working with are not cyclic, that is, words \(Z\) that cannot be decomposed as \(Z = W^i\) with \(W \in \{H, V\}^*\). This assumption is made without loss of generality, because if we wanted to perform the reduction for cyclic words, we could simply perform the reduction for the subword, as the dynamics behave the same. We will make this assumption to ensure that the structure responsible for eliminating uncoordinated signals functions correctly. Let $Z$ be a word and $c$ an initial configuration of a gadget. We define an \emph{uncoordinated signal} as any signal that appears at the input of a bridge (belonging to the gadget) in a configuration different from $F^i(c, Z)$ for some positive integer $i$, i.e., at a time other than the start of a cycle. In general, this kind of signal has proven not to be problematic. Moreover, with a careful construction adapted to each word, it may be possible to ensure that each part of the final construction avoids the presence of such signals. However, for completeness and to simplify the overall functionality, we proceed with the following construction.

Consider a subdivision of the word \(Z\) such that \(Z = XY\), with \(X, Y\) non-empty, and the word \(W = Y|X\), named \textit{shifted $Z$ cycle}. We note that the dynamics of \(W\) are the same as those of \(Z\), except that it starts out of sync. An example of the uncoordinated cycle can be seen in \Cref{fig:coord_problem}, and in the same figure, it can be observed how the bridge ceases to function properly (although in that case, we only have a pure bridge). We will show that the signals cannot move through the semi-wires following the dynamics of \(W\), which demonstrates that we do not need to worry about uncoordinated signals, as they will not arrive its destination.

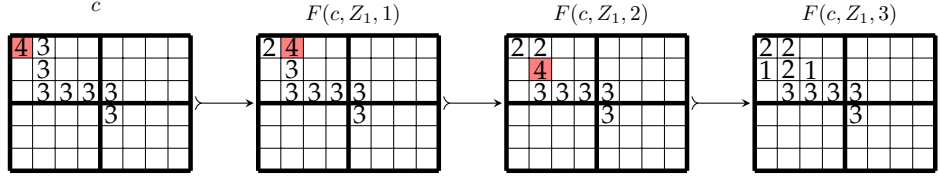
\begin{figure}
\centering
\begin{tikzpicture}[scale=.3]

    \draw  [draw opacity=0][fill={rgb, 255:red, 255; green, 0; blue,  0 }  ,fill opacity=0.5 ] (0,15) -- (1,15) -- (1,14) -- (0,14) -- cycle;
    
    \draw [draw opacity=0, fill={rgb, 255:red, 255; green, 0; blue, 0}, fill opacity=0.5] (12,15) -- (13,15) -- (13,14) -- (12,14) -- cycle;

    \draw [draw opacity=0, fill={rgb, 255:red, 255; green, 0; blue, 0}, fill opacity=0.5] (23,14) -- (24,14) -- (24,13) -- (23,13) -- cycle;

    \draw[line width=0.1mm] (0, 9) grid (8, 15);
    \draw[line width=0.6mm] (0,9) -- (0,15); 
    \draw[line width=0.6mm] (4,9) -- (4,15); 
    \draw[line width=0.6mm] (8,9) -- (8,15); 
    \draw[line width=0.6mm] (0,9) -- (8,9);
    \draw[line width=0.6mm] (0,12) -- (8,12);
    \draw[line width=0.6mm] (0,15) -- (8,15);
    \setcounter{cols}{0}
    \setcounter{rows}{15}
    \setcounter{row}{0}
    \setrow {4,3, , , }
    \setrow { ,3, , , }
    \setrow { ,3,3,3,3}
    \setrow { , , , ,3};
    
    \draw[>- stealth] (8.2,12) -- (10.8,12);

    \draw[line width=0.1mm] (11, 9) grid (19, 15);
    \draw[line width=0.6mm] (11,9) -- (11,15); 
    \draw[line width=0.6mm] (15,9) -- (15,15); 
    \draw[line width=0.6mm] (19,9) -- (19,15); 
    \draw[line width=0.6mm] (11,9) -- (19,9);
    \draw[line width=0.6mm] (11,12) -- (19,12);
    \draw[line width=0.6mm] (11,15) -- (19,15);

    \setcounter{cols}{11}
    \setcounter{rows}{15}
    \setcounter{row}{0}
    \setrow {2,4, , , }
    \setrow { ,3, , , }
    \setrow { ,3,3,3,3}
    \setrow { , , , ,3};

    \draw[>- stealth] (19.2,12) -- (21.8,12);

    \draw[line width=0.1mm] (22, 9) grid (30, 15);
    \draw[line width=0.6mm] (22,9) -- (22,15); 
    \draw[line width=0.6mm] (26,9) -- (26,15); 
    \draw[line width=0.6mm] (30,9) -- (30,15); 
    \draw[line width=0.6mm] (22,9) -- (30,9);
    \draw[line width=0.6mm] (22,12) -- (30,12);
    \draw[line width=0.6mm] (22,15) -- (30,15);

    \setcounter{cols}{22}
    \setcounter{rows}{15}
    \setcounter{row}{0}
    \setrow {2,2, , , }
    \setrow { ,4, , , }
    \setrow { ,3,3,3,3}
    \setrow { , , , ,3};

    \draw[>- stealth] (30.2,12) -- (32.8,12);

    \draw[line width=0.1mm] (33, 9) grid (41, 15);
    \draw[line width=0.6mm] (33,9) -- (33,15); 
    \draw[line width=0.6mm] (37,9) -- (37,15); 
    \draw[line width=0.6mm] (41,9) -- (41,15); 
    \draw[line width=0.6mm] (33,9) -- (41,9);
    \draw[line width=0.6mm] (33,12) -- (41,12);
    \draw[line width=0.6mm] (33,15) -- (41,15);

    \setcounter{cols}{33}
    \setcounter{rows}{15}
    \setcounter{row}{0}
    \setrow {2,2, , , }
    \setrow {1,2,1, , }
    \setrow { ,3,3,3,3}
    \setrow { , , , ,3};

\setcounter{cols}{0}

    \draw (3.5,16.6) node [anchor=north west,scale=0.8][inner sep=0.75pt]    {$c$};
    
    \draw (35,16.5) node [anchor=north west,scale=0.8][inner sep=0.75pt]    {$F(c,Z_1,3)$};

    \draw (13,16.5) node [anchor=north west,scale=0.8][inner sep=0.75pt]    {$F(c,Z_1,1)$};

    \draw (24,16.5) node [anchor=north west,scale=0.8][inner sep=0.75pt]    {$F(c,Z_1,2)$};

\end{tikzpicture}
\caption{Bridge failure under updates following a shifted cycle. The signals in the configuration are highlighted in red. Recalling  \Cref{fig:ciclo}, which indeed shows a cycle for the word $Z_1 = HVVHHHV$, we observe that if we consider the sequence of updates along the shifted cycle $W_1 = HV|HVVHH$, then in $F(W_1,3)$ there are no longer any signals in the configuration. We note that 3 is the smallest index such that $Z_1(i) \neq W_1(i)$, and this is not a coincidence—it is related to the way the bridges were constructed and will be an important aspect in the study of coordinating semi-wires in this section.}
\label{fig:coord_problem}
\end{figure}

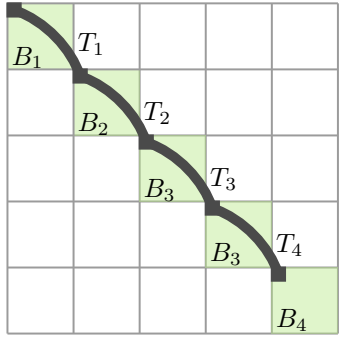
\begin{figure}
\centering
\tikzset{every picture/.style={line width=0.75pt}} 

\begin{tikzpicture}[x=0.5pt,y=0.5pt,yscale=-1,xscale=1]

\draw [color={rgb, 255:red, 155; green, 155; blue, 155 }  ,draw opacity=1 ][line width=0.75]    (300,150) -- (50,150) ;
\draw [color={rgb, 255:red, 155; green, 155; blue, 155 }  ,draw opacity=1 ][line width=0.75]    (300,200) -- (50,200) ;
\draw [color={rgb, 255:red, 155; green, 155; blue, 155 }  ,draw opacity=1 ][line width=0.75]    (300,250) -- (50,250) ;
\draw [color={rgb, 255:red, 155; green, 155; blue, 155 }  ,draw opacity=1 ][line width=0.75]    (300,300) -- (50,300) ;
\draw [color={rgb, 255:red, 155; green, 155; blue, 155 }  ,draw opacity=1 ][line width=0.75]    (300,50) -- (300,300) ;
\draw [color={rgb, 255:red, 155; green, 155; blue, 155 }  ,draw opacity=1 ][line width=0.75]    (150,50) -- (150,300) ;
\draw [color={rgb, 255:red, 155; green, 155; blue, 155 }  ,draw opacity=1 ][line width=0.75]    (200,50) -- (200,300) ;
\draw [color={rgb, 255:red, 155; green, 155; blue, 155 }  ,draw opacity=1 ][line width=0.75]    (250,50) -- (250,300) ;
\draw  [draw opacity=0][fill={rgb, 255:red, 184; green, 233; blue, 134 }  ,fill opacity=0.43 ] (100,100) -- (150,100) -- (150,150) -- (100,150) -- cycle ;
\draw  [draw opacity=0][fill={rgb, 255:red, 184; green, 233; blue, 134 }  ,fill opacity=0.43 ] (150,150) -- (200,150) -- (200,200) -- (150,200) -- cycle ;
\draw  [draw opacity=0][fill={rgb, 255:red, 184; green, 233; blue, 134 }  ,fill opacity=0.43 ] (50,50) -- (100,50) -- (100,100) -- (50,100) -- cycle ;
\draw [color={rgb, 255:red, 155; green, 155; blue, 155 }  ,draw opacity=1 ][line width=0.75]    (100,50) -- (100,300) ;
\draw [color={rgb, 255:red, 74; green, 74; blue, 74 }  ,draw opacity=1 ][line width=3.75]    (105,105) .. controls (129.2,113.5) and (148.2,134.5) .. (155,155) ;
\draw [color={rgb, 255:red, 155; green, 155; blue, 155 }  ,draw opacity=1 ][line width=0.75]    (50,50) -- (50,300) ;
\draw [color={rgb, 255:red, 155; green, 155; blue, 155 }  ,draw opacity=1 ][line width=0.75]    (300,50) -- (50,50) ;
\draw [color={rgb, 255:red, 155; green, 155; blue, 155 }  ,draw opacity=1 ][line width=0.75]    (300,100) -- (50,100) ;
\draw  [color={rgb, 255:red, 74; green, 74; blue, 74 }  ,draw opacity=1 ][fill={rgb, 255:red, 74; green, 74; blue, 74 }  ,fill opacity=1 ] (50,50) -- (60,50) -- (60,60) -- (50,60) -- cycle ;
\draw  [color={rgb, 255:red, 74; green, 74; blue, 74 }  ,draw opacity=1 ][fill={rgb, 255:red, 74; green, 74; blue, 74 }  ,fill opacity=1 ] (150,150) -- (160,150) -- (160,160) -- (150,160) -- cycle ;
\draw  [color={rgb, 255:red, 74; green, 74; blue, 74 }  ,draw opacity=1 ][fill={rgb, 255:red, 74; green, 74; blue, 74 }  ,fill opacity=1 ] (100,100) -- (110,100) -- (110,110) -- (100,110) -- cycle ;
\draw [color={rgb, 255:red, 74; green, 74; blue, 74 }  ,draw opacity=1 ][line width=3.75]    (55,55) .. controls (79.2,63.5) and (98.2,84.5) .. (105,105) ;
\draw  [draw opacity=0][fill={rgb, 255:red, 184; green, 233; blue, 134 }  ,fill opacity=0.43 ] (200,200) -- (250,200) -- (250,250) -- (200,250) -- cycle ;
\draw  [draw opacity=0][fill={rgb, 255:red, 184; green, 233; blue, 134 }  ,fill opacity=0.43 ] (250,250) -- (300,250) -- (300,300) -- (250,300) -- cycle ;
\draw [color={rgb, 255:red, 74; green, 74; blue, 74 }  ,draw opacity=1 ][line width=3.75]    (155,155) .. controls (179.2,163.5) and (198.2,184.5) .. (205,205) ;
\draw  [color={rgb, 255:red, 74; green, 74; blue, 74 }  ,draw opacity=1 ][fill={rgb, 255:red, 74; green, 74; blue, 74 }  ,fill opacity=1 ] (200,200) -- (210,200) -- (210,210) -- (200,210) -- cycle ;
\draw  [color={rgb, 255:red, 74; green, 74; blue, 74 }  ,draw opacity=1 ][fill={rgb, 255:red, 74; green, 74; blue, 74 }  ,fill opacity=1 ] (250,250) -- (260,250) -- (260,260) -- (250,260) -- cycle ;
\draw [color={rgb, 255:red, 74; green, 74; blue, 74 }  ,draw opacity=1 ][line width=3.75]    (205,205) .. controls (229.2,213.5) and (248.2,234.5) .. (255,255) ;

\draw (101,70.4) node [anchor=north west][inner sep=0.75pt]    {$T_{1}$};
\draw (151,122.4) node [anchor=north west][inner sep=0.75pt]    {$T_{2}$};
\draw (151,180.4) node [anchor=north west][inner sep=0.75pt]    {$B_{3}$};
\draw (101,130.4) node [anchor=north west][inner sep=0.75pt]    {$B_{2}$};
\draw (51,80.4) node [anchor=north west][inner sep=0.75pt]    {$B_{1}$};
\draw (201,229.4) node [anchor=north west][inner sep=0.75pt]    {$B_{3}$};
\draw (201,172.4) node [anchor=north west][inner sep=0.75pt]    {$T_{3}$};
\draw (251,222.4) node [anchor=north west][inner sep=0.75pt]    {$T_{4}$};
\draw (251,279.4) node [anchor=north west][inner sep=0.75pt]    {$B_{4}$};

\end{tikzpicture}
\caption{Structure of a coordinating semiwire consisting of 4 bridges $T_i,i \in [4]$}
\label{fig:coord}
\end{figure}

The positive coordinating semi-wire is any semi-wire consisting of a constant number of bridges in the same direction, as shown in \Cref{fig:coord}. It holds that an uncoordinated signal at the input of the semi-wire, will not produce a signal arriving at the output. 

\begin{lemma}
    Given a coordinator semi-wire $(S,c) = \sum_{i\in [m]}T_i$ for bridges $T_i = (P_i,c_i)$, for a word $Z$, with $T_1$ a source bridge. There exists some $i < km$ such that $F(c,w,i)$ has no signals for any $w$ \textit{shifted $Z$ cycle}.
\end{lemma}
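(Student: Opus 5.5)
The plan is to follow the single signal through the semi-wire while the updates follow $W = YX$ instead of $Z$, and to show that it must die at the first position where the two words disagree. Throughout, the initial configuration $c$ contains exactly one signal, at $P_1(0)$, and every other cell of the semi-wire is a $3$ or a $0$.

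\textbf{Step 1: the signal stays a single cell on the bridge path while the words agree.} Write $W = w_1 w_2 \cdots$, periodically extended. Let $j$ be the largest index such that $w_1 \cdots w_j = Z(1)\cdots Z(j)$ (indices of $Z$ also taken cyclically along the concatenated path $P_1P_2\cdots$). Repeating the induction in the proof of \Cref{lem:funcbridge} for these $j$ steps, I would show that $F(c,W,i)$ has exactly one signal, at $P(i)$, for every $i \le j$. All other cells stay at most $3$, and $P(i-1)$ drops to $2$.

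\textbf{Step 2: the first mismatch kills the signal.} At step $j+1$ the rule $w_{j+1}$ differs from $Z(j+1)$. Suppose $Z(j+1)=H$ and $w_{j+1}=V$ (the other case is symmetric). The signal at $P(j)$ fires vertically. Its next path cell $P(j+1)=P(j)\pm H$ then receives nothing, and $P(j)$ drops to $2$. The two vertical neighbours $P(j)\pm V$ each gain one grain. I will argue that neither of them reaches $4$:

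\begin{itemize}
\item \emph{The earlier path cell.} If $P(j)\pm V$ equals $P(j-1)$, which happens when $Z(j)=V$, that cell holds $2$ and becomes $3$.
\item \emph{Later path cells.} A $Z$-path is monotone: each step moves by $\alpha H$ or $\beta V$ with fixed signs. So no other path cell can lie vertically adjacent to $P(j)$, except possibly cells of a later bridge.
\item \emph{Later bridges.} These are excluded by the non-adjacency argument used in the semi-wire lemma, via the affected blocks and $N(P_i)\cap P_{i+1}=P_{i+1}(0)$.
\item \emph{Cells off the paths.} These hold $0$ and become $1$.
\end{itemize}

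Hence $F(c,W,j+1)$ has no signals. Since the dynamics is deterministic and a configuration without signals is a fixed point, the configuration stays signal-free from then on.

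\textbf{Step 3: a mismatch occurs within the required time.} It remains to show that a mismatch happens at some index $j < km$. Because $Z$ is assumed non-cyclic (primitive), no nontrivial rotation $YX$ equals $Z$, so $W$ and $Z$ differ within their first $k$ letters. Taking $m\ge 2$ bridges gives $j+1 \le k < km$.

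I expect the main obstacle to be defining "uncoordinated" correctly. The signal may enter at an input bridge at an arbitrary phase, and what actually matters is the offset between the phase of the update sequence and the position along the path. So in Step 1 I would parametrize by this offset $r\in\{1,\dots,k-1\}$ and compare $Z(r+1)Z(r+2)\cdots$ with $Z(1)Z(2)\cdots$. By primitivity these two sequences first differ within $k$ steps. The delicate part of Step 2 is the local check at that mismatch: that the vertical (or horizontal) neighbours of $P(j)$ hold at most $2$ at that moment, including at block corners where two consecutive bridges meet. This is where I would reuse \Cref{fig:afectadas}.
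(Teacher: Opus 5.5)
Your proposal is correct and follows the paper's route: a single signal runs along the wire while $W$ agrees with $Z$, and at the first disagreement it fires into cells holding at most $2$, so it dies. Your handling of $P(j-1)$, the primitivity bound on the mismatch time and the fixed-point remark just spell out what the paper's sketch leaves implicit, with one small correction: for later bridges the $N(P_i)$ argument of the semi-wire lemma does not apply, since it only covers firings in the $Z$-direction, but you do not need it, because all bridges of a coordinating semi-wire point the same way, so the concatenated path $P_1P_2\cdots P_m$ is itself monotone and your single-path argument already excludes those cells.
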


\begin{proof}
    At this semiwire, each cell $x$ with value 3 has exactly two adjacent cells also with value 3, except for the initial signal at $c$, $P_1(0)$, which has only one adjacent cell with value 3. After any transition $Q$ (where $Q = H$ or $Q = V$), $P_1(0)$ takes the value 2, and its only neighbor with value 3 could potentially become a signal (although this is not guaranteed, since $w \neq Z$). However, we can ensure that $F(c, Q)$ has at most one signal at $P_1(1)$, which now has only one neighbor with value 3 (since its other neighbor, $P_1(0)$, satisfies $F(c,Q)(P_1(0)) = 2$). Following this reasoning, for every $i < km$, $F(c,w,i)$ has at most one signal. Since $w \neq Z$, there exists some $i$ such that $w(i) \neq Z(i)$. Considering this step over the cycle: in $F(c,w,i-1)$, if there is a signal, the signal cell is horizontally (or vertically) connected to a neighbor with value 3. However, if $w(i) = V$ (or $w(i) = H$, respectively), the transition will cause cells with value less than 3 to receive a signal instead. As a result, $F(c,w,i)$ will have no signal at all.
\end{proof}

\subsubsection{Connecting gadgets}

As mentioned earlier, the circuits we construct are sensitive to the time it takes for signals to reach each of the gadgets, as switches do not function if the signal that is supposed to pass arrives before the one that activates it. All gadgets described above have at least one input cell and at least one output cell, and therefore can be connected by superimposing the output pin of one object with the input pin of another. With this, if $t_1$ is the number of cycles a signal takes to travel from the input pin to the output pin of a gadget, and $t_2$ is the same for a second gadget, then the total \textit{travel time} is simply $t_1 + t_2$. We will refer as the \emph{delay} of a gadget as the time it takes for a signal to travel from its input to the corresponding output. Note that the delay of a structure depends solely on the number of bridges the signal must pass through.

In the functionality of gadgets, we treat them as if they were the only cells with grains in the configuration; however, in the final construction, this will not be the case. It is important to note that, in all instances of this dynamics, the grains located in a $Z$-block surrounded by $Z$-blocks with cells of value 0 can only affect adjacent $Z$-blocks. This is because the dynamics is conservative, and the signals will dissipate into the adjacent cells after a few transitions. Therefore, by leaving a constant number of $Z$-blocks between the structures and the semiwires, we can ensure that all gadgets function in parallel without interfering with each other.

\begin{figure}
\centering
\tikzset{every picture/.style={line width=0.75pt}} 

\begin{tikzpicture}[x=0.5pt,y=0.5pt,yscale=-1,xscale=1]

\draw  [draw opacity=0][fill={rgb, 255:red, 184; green, 233; blue, 134 }  ,fill opacity=0.5 ] (200,150) -- (250,150) -- (250,200) -- (200,200) -- cycle ;
\draw  [draw opacity=0][fill={rgb, 255:red, 184; green, 233; blue, 134 }  ,fill opacity=0.5 ] (150,100) -- (200,100) -- (200,150) -- (150,150) -- cycle ;
\draw  [draw opacity=0][fill={rgb, 255:red, 184; green, 233; blue, 134 }  ,fill opacity=0.5 ] (100,50) -- (150,50) -- (150,100) -- (100,100) -- cycle ;
\draw [color={rgb, 255:red, 155; green, 155; blue, 155 }  ,draw opacity=1 ]   (250,50) -- (250,300) ;
\draw [color={rgb, 255:red, 155; green, 155; blue, 155 }  ,draw opacity=1 ]   (300,50) -- (300,300) ;
\draw [color={rgb, 255:red, 155; green, 155; blue, 155 }  ,draw opacity=1 ]   (150,50) -- (150,300) ;
\draw [color={rgb, 255:red, 155; green, 155; blue, 155 }  ,draw opacity=1 ]   (200,50) -- (200,300) ;
\draw [color={rgb, 255:red, 155; green, 155; blue, 155 }  ,draw opacity=1 ]   (100,150) -- (350,150) ;
\draw [color={rgb, 255:red, 155; green, 155; blue, 155 }  ,draw opacity=1 ]   (100,100) -- (350,100) ;
\draw  [draw opacity=0][fill={rgb, 255:red, 184; green, 233; blue, 134 }  ,fill opacity=0.5 ] (300,248.85) -- (350,248.85) -- (350,298.85) -- (300,298.85) -- cycle ;
\draw  [draw opacity=0][fill={rgb, 255:red, 184; green, 233; blue, 134 }  ,fill opacity=0.5 ] (250,199) -- (300,199) -- (300,249) -- (250,249) -- cycle ;
\draw  [draw opacity=0][fill={rgb, 255:red, 184; green, 233; blue, 134 }  ,fill opacity=0.5 ] (300,149) -- (350,149) -- (350,199) -- (300,199) -- cycle ;
\draw [color={rgb, 255:red, 155; green, 155; blue, 155 }  ,draw opacity=1 ]   (100,250) -- (350,250) ;
\draw  [color={rgb, 255:red, 74; green, 74; blue, 74 }  ,draw opacity=1 ][fill={rgb, 255:red, 74; green, 74; blue, 74 }  ,fill opacity=1 ] (300,248.85) -- (310,248.85) -- (310,258.85) -- (300,258.85) -- cycle ;
\draw [color={rgb, 255:red, 155; green, 155; blue, 155 }  ,draw opacity=1 ]   (100,200) -- (350,200) ;
\draw [color={rgb, 255:red, 155; green, 155; blue, 155 }  ,draw opacity=1 ]   (300,299) -- (300,218.7) -- (300,149) ;
\draw [color={rgb, 255:red, 74; green, 74; blue, 74 }  ,draw opacity=1 ][line width=3.75]    (255,204) .. controls (279.2,195.5) and (298.2,174.5) .. (305,154) ;
\draw [color={rgb, 255:red, 74; green, 74; blue, 74 }  ,draw opacity=1 ][line width=3.75]    (255,203.85) .. controls (279.2,212.35) and (298.2,233.35) .. (305,253.85) ;
\draw  [color={rgb, 255:red, 74; green, 74; blue, 74 }  ,draw opacity=1 ][fill={rgb, 255:red, 74; green, 74; blue, 74 }  ,fill opacity=1 ] (300,149) -- (310,149) -- (310,159) -- (300,159) -- cycle ;
\draw  [color={rgb, 255:red, 74; green, 74; blue, 74 }  ,draw opacity=1 ][fill={rgb, 255:red, 74; green, 74; blue, 74 }  ,fill opacity=1 ] (250,209) -- (260,209) -- (260,199) -- (250,199) -- cycle ;
\draw  [color={rgb, 255:red, 155; green, 155; blue, 155 }  ,draw opacity=1 ][line width=0.75]  (100,50) -- (350,50) -- (350,299) -- (100,299) -- cycle ;
\draw [color={rgb, 255:red, 74; green, 74; blue, 74 }  ,draw opacity=1 ][line width=3.75]    (205,154) .. controls (229.2,162.5) and (248.2,183.5) .. (255,204) ;
\draw  [color={rgb, 255:red, 74; green, 74; blue, 74 }  ,draw opacity=1 ][fill={rgb, 255:red, 74; green, 74; blue, 74 }  ,fill opacity=1 ] (200,149) -- (210,149) -- (210,159) -- (200,159) -- cycle ;
\draw [color={rgb, 255:red, 74; green, 74; blue, 74 }  ,draw opacity=1 ][line width=3.75]    (155,105) .. controls (179.2,113.5) and (198.2,134.5) .. (205,155) ;
\draw  [color={rgb, 255:red, 74; green, 74; blue, 74 }  ,draw opacity=1 ][fill={rgb, 255:red, 74; green, 74; blue, 74 }  ,fill opacity=1 ] (150,100) -- (160,100) -- (160,110) -- (150,110) -- cycle ;
\draw [color={rgb, 255:red, 74; green, 74; blue, 74 }  ,draw opacity=1 ][line width=3.75]    (105,55) .. controls (129.2,63.5) and (148.2,84.5) .. (155,105) ;
\draw  [color={rgb, 255:red, 74; green, 74; blue, 74 }  ,draw opacity=1 ][fill={rgb, 255:red, 74; green, 74; blue, 74 }  ,fill opacity=1 ] (100,50) -- (110,50) -- (110,60) -- (100,60) -- cycle ;

\draw (259,161.4) node [anchor=north west][inner sep=0.75pt]    {$T_{1}$};
\draw (269,259.4) node [anchor=north west][inner sep=0.75pt]    {$T_{2}$};
\draw (126,112.4) node [anchor=north west][inner sep=0.75pt]    {$S$};

\end{tikzpicture}
\caption{Correct connection of a coordinator semi-wire $S$ to a semi-duplicator composed of the bridges $T_1,T_2$.}
\label{fig:conection_duplic}
\end{figure}
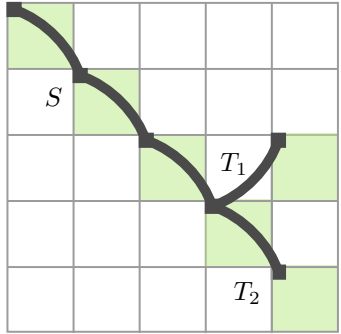

On the other hand, we have the following theorem, wich refeers to the correct combination of semi-wires and the semi-duplicator (of the same polarity), but can be generalized to any other gadget: 
\begin{theorem}
    Let $S$ be the coordinator semiwire whose output coincides with the input of the semi-duplicator composed of the bridges $T_1$ and $T_2$, oriented as shown in \Cref{fig:conection_duplic}, both with the same polarity. If the first bridge of the semiwire is a source bridge, then after a number of cycles equal to the length of the semiwire plus one, there will be two signals at the output of the semi-duplicator.
\end{theorem}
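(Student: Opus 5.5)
The plan is induction on the cycles, splitting the argument into the semi-wire part and the final duplication step. Write the coordinator semi-wire as $S=\sum_{i\in[m]}T_i'$ with $T_1'$ a source bridge. Its output block, say $B$, is the input block $B_1$ of the semi-duplicator $(D,c_D)=T_1+T_2$. The whole configuration is $c = c_S\oplus c_D$. Since the output of $S$ and the input of $D$ are the same source cell, which carries value $3$ in both gadgets, the operation $\oplus$ introduces no value $2$ there. So $c$ restricted to $S$ is exactly the semi-wire configuration, and restricted to $D$ it is exactly the semi-duplicator configuration, apart from the shared cell.

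First step: apply the semi-wire lemma (conditions 1 and 2) to $S$ to obtain $F^m(c,Z)(B_1^+)=4$. This needs the extra cells of $D$ to be harmless during the first $m$ cycles. I would argue this exactly as in the proof of the semi-wire lemma, using the affected-block argument of \Cref{fig:afectadas}. For $i<m$, the affected blocks of $T_i'$ do not contain cells of $T_1,T_2$, because the orientation in \Cref{fig:conection_duplic} places $T_1$ and $T_2$ beyond $B_1$, away from earlier blocks of $S$. For the last bridge $T_m'$, $N(T_m')\cap(P_1\cup P_2)=\{P_1(0)\}=\{P_2(0)\}$, mirroring the observation $N(P_i)\cap P_{i+1}=P_{i+1}(0)$ in that proof. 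Hence no cell of $T_1,T_2$ other than their common source gains grains before cycle $m$. Cells outside the bridges only ever hold values below $4$, since each gains at most a bounded number of grains from a single passing signal. The hypotheses of \Cref{lem:funcbridge} (with \Cref{remarkfunc}) therefore remain valid throughout.

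Second step: after cycle $m$, $T_1$ and $T_2$ are both source bridges sharing the source $B_1^+$. The remaining cells of $S$ are at values $2$ or $3$, or have been freed. \Cref{lem:funcduplic} then gives $F^{m+1}(c,Z)(B_2^+)=F^{m+1}(c,Z)(B_3^+)=4$, which is the claimed two signals after (length of $S$) $+1$ cycles.

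The main obstacle is that the leftover cells of $S$ near $B_1$ have value $2$ after the signal passes, and may receive grains from $T_1$ and $T_2$ during cycle $m+1$. I must check two things here. First, these cells never reach $4$ and so create no spurious signal. Second, $T_1$ and $T_2$ genuinely fire in the same step from the shared source. In \Cref{lem:funcbridge} the source fires once and feeds both $P_1(1)$ and $P_2(1)$, provided $Z(1)$ is the direction of both first steps. Both bridges start with $H$ ($Z$ starts with $H$) and go in the same horizontal direction $\alpha$, so $P_1(1)=P_2(1)$. The two bridges therefore coincide until they diverge at the first $V$, and the branching cell receives one grain and fires both ways. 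To settle both checks I would trace the shared prefix explicitly, bound the values in $N(T_1)\cup N(T_2)$ using the parity property of \Cref{lem:vpp}, and invoke \Cref{remarkfunc}.
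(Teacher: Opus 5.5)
Your decomposition matches the paper's: run the semi-wire lemma for $m$ cycles with the duplicator inert, then apply \Cref{lem:funcduplic} for cycle $m+1$. You are also right that cycle $m+1$ is where care is needed; the paper's proof passes over it. But the claim you plan to prove there, that the leftover cells never reach $4$, is false, and \Cref{lem:vpp} cannot rescue it. Take the primitive word $Z=HVVHHVV$ ($h=3$, $v=4$). Write cells as (column, row) with rows increasing downward, and put $B_1^+=(0,0)$. The last semi-wire bridge is $(-3,-4),(-2,-4),(-2,-3),(-2,-2),(-1,-2),(0,-2),(0,-1),(0,0)$. So after cycle $m$ the cell $(0,-2)$ holds $3$: it fired at step $6$ and received a grain back from $(0,-1)$ at step $7$. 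Meanwhile $T_1$ runs $(0,0),(1,0),(1,-1),(1,-2),(2,-2),(3,-2),(3,-3),(3,-4)$. During cycle $m+1$ three spurious signals appear.
\begin{enumerate}
\item At step $3$, both $(1,-1)$ and $(1,1)$ fire vertically into the branching cell $(1,0)$, which returns to $4$. So the branching cell does not just fire once both ways.
\item At step $4$, the front $(1,-2)$ fires horizontally and $(0,-2)$ becomes $4$.
\item At step $5$, $(0,-2)$ and $(2,-2)$ both fire into $(1,-2)$, which becomes $4$ again.
\end{enumerate}
All of these cells lie on even rows, exactly like the genuine fronts, so vertical parity rules none of them out.

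The conclusion still holds, but what you need is \emph{nothing overtakes the front}, not \emph{no spurious signals}. Use the light-cone argument already present in the proof of \Cref{lem:funcbridge}. If at the start of cycle $m+1$ the only signal is $B_1^+$, then after step $t$ every signal lies within horizontal distance $|Z(1)\cdots Z(t)|_H$ and vertical distance $|Z(1)\cdots Z(t)|_V$ of $B_1^+$, hence within $\ell_1$-distance $t$. Every neighbour of $P_i(t+1)$ is at $\ell_1$-distance at least $t$ from $B_1^+$. So $P_i(t+1)$ receives no grain before step $t+1$, and at step $t+1$ its only $Z(t+1)$-neighbour within distance $t$ is $P_i(t)$; the other one is at distance $t+2$. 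Your first step already shows that the cells $P_i(j)$, $j\ge 1$, still hold $3$ when cycle $m+1$ starts. Induction then gives that $P_i(t)$ is exactly $4$ after step $t$. Hence $B_2^+=P_1(k)$ and $B_3^+=P_2(k)$ both hold $4$ after cycle $m+1$, whatever transient activity occurs behind the fronts.
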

\begin{proof}
    The cells of the duplicator do not interfere with the correct functioning of the semiwire. Therefore, after as many cycles as the number of bridges composing it, there will be exactly one signal (due to the behavior of the coordinator semiwire) at the input of the duplicator. At that point, since only one signal is present, it follows by the same argument used in \Cref{lem:funcduplic} that the signal reaches the outputs.
\end{proof}

This theorem can be generalized to all the gadgets we need, We only need to additionaly ensure that the gadgets with two inputs receive the semi-wire of their upper input from above and the semi-wire of their lower input from below, as shown in \Cref{fig:conection}.

\begin{figure}
\centering
\input{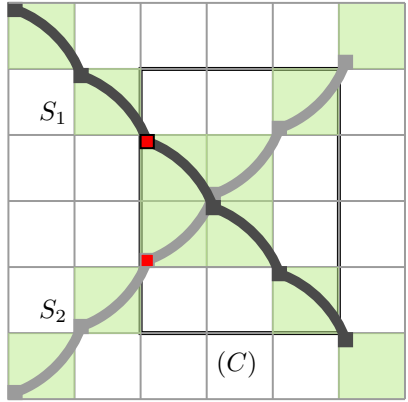}
\caption{Correct connection of two semiwires $S_1$ and $S_2$, each consisting of at least two bridges, is shown in the figure, connected to a semicrossing Gadget $C$, which is contained within the eight $Z$-blocks marked by the black rectangle. The cells marked in red indicate positions that may contain a signal, depending on whether the corresponding semiwire carries one.}
\label{fig:conection}
\end{figure}

It has been shown that the gadgets correctly deliver the signals to the outputs, but we still need to address the cases where a signal additionally exits through an input or when there are "uncoordinated" signals, which will be explained in this section.

\subsubsection{Backward signal travel}

The same construction from coordinator semi-wire ensures that coordinate signals cannot travel backward through coordinating wires, leaving only the problem of uncoordinated signals traveling backward along the semi-wires. To address this, by combining the previously described structures, we can build a positive diode, as shown in \Cref{fig:diode}, which operates in such a way that only coordinated signals can travel from the input to the output, while signals at the output cannot reach the input. Analogously, we can construct a negative diode. As in the previous section, backward-traveling signals within the construction are not inherently problematic. In fact, for many words, it is impossible for such signals to propagate beyond a limited number of blocks. Nevertheless, for completeness and simplicity, we choose to work with diodes.

\begin{figure}
\centering
\input{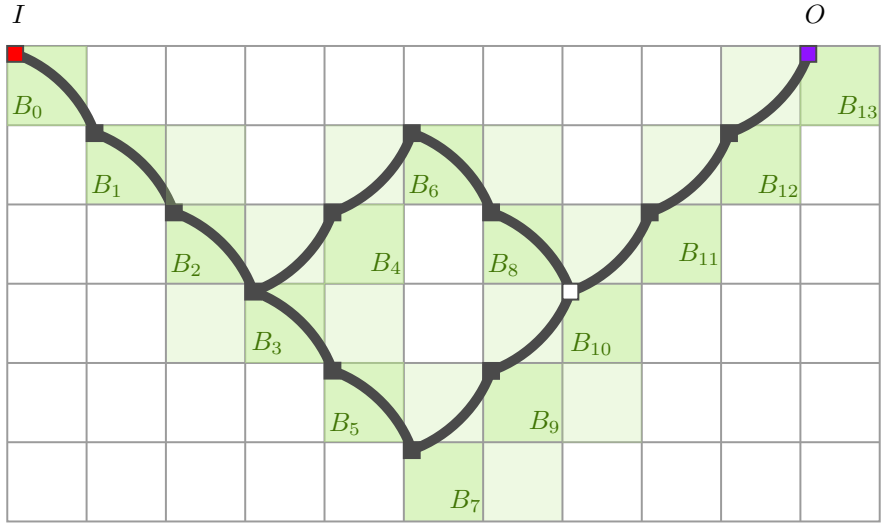}
\caption{Positive diode with input cell $I$ (in red) and output cell $O$ (in purple), formed by positive bridges connecting the blocks $\{B_i\}_{i \in [0...13]}$ using the usual notation. The white square represents a cell with value 2 in $B_{10}^+$, which can be obtained by defining the gadget so that the bridge connecting $B_8$ and $B_{10}$ is a bridge with a sink at its endpoint. If a signal arrives at $I$, it will travel to $B_3^+$, where it will be duplicated, producing two signals that reach $B_{10}^+$. With two signals arriving at $B_{10}^+$, even if its initial value is 2, a signal will be generated at $B_{10}^+$ and will travel to $O$. On the other hand, if a signal starts at $O$ and manages to pass through $B_{11}$, it will necessarily be consumed by $B_{10}^+$.}
\label{fig:diode}
\end{figure}

\subsection{Retarders}
\begin{figure}
\centering
\input{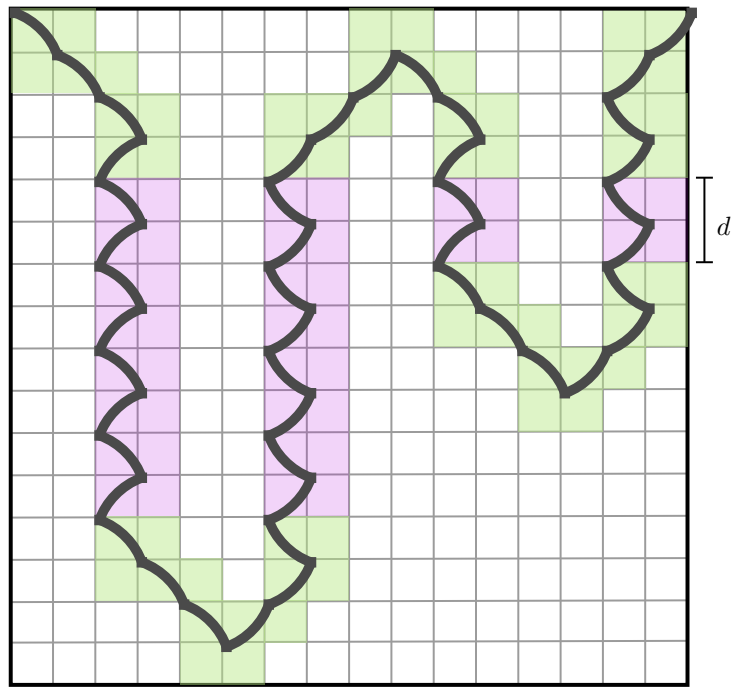}
\caption{Positive Retarder formed by bridges connecting diagonally adjacent blocks. The marked $Z$-blocks indicate those that may contain grains in this gadget. The purple blocks represent sections of the retarder that could vary. In this case, the value $d$ could be increased, to make the right section of the retarder to be "extended" (At most to match the length of the left section), allowing the retarder to have a bigger delay.}
\label{fig:retarder_positivo}
\end{figure}

Retarders will be used to delay a signal as it travels from left to right. This can be achieved using a structure like the one shown in \Cref{fig:retarder_positivo}. Those are very important in the construction, since the switches we defined require that the activating signal arrives before the signal that is supposed to pass. Note that if we constrain the retarder to a space of $L \times L$ $Z-$blocks, the number of bridges that can be placed within the purple section is $\frac{L}{4}(L - 4)$ (Since the first and last two rows are reserved for the U-turns of the semiwires, and only $\frac{1}{4}$ of the remaining blocks are used in bridges), allowing the delay introduced by the retarder to grow quadratically with respect to its width and height. In the construction we describe, all retarders will have the same size; however, the delays they produce will differ. Using fixed-size retarders for any given circuit greatly simplifies the overall layout.

\section{Layer 4: The Logspace Constructor} \label{sec:layer-4}

\fxnote[inline,nomargin]{Add some introductory text, explain we are adapting the
original constructor, etc.}

For the logarithmic-space construction we follow an analysis similar to that of \cite{modanese2024embedding}. 
The complexity of the construction now lies in the fact that the constructor must determine in which diagonally adjacent $Z$-blocks a bridge must be placed. This introduces an additional layer of complexity where the constructor is also required to explicitly build the bridges between such blocks, compared to the previous construction, where it directly adds cells onto the $2\times2$ subdivision if necessary.

Consider a instance of the CVP with $m$ gates. The description of $F$ is column by column. At each step, we will keep track of the coordinates we are currently at and return them as needed. The size of the structure is polynomial in $m$, so we can store the coordinates of $F$ using logarithmic space. Before initializing the construction, we must compute $D$ and store it. Since $D = \Theta(m^4)$, this can easily be done using logarithmic space by examining the end of the input to obtain $m$ (which we will also store in logarithmic space).

Before describing each section, we must define the inputs. Knowing $D$, since each wire maintains a fixed height in each section that only depends on $D$, we only need to construct the wires $\{x_1, \dots, x_n\}$ at the corresponding heights, assigning positive or negative signals as appropriate.

Given a circuit encoded node by node, the construction of $F$ will proceed tile by tile. At each point in the construction, since all sections have the same size, we can calculate the relative coordinates of each section and, from there, assemble the wires and crossings at the corresponding positions.

To construct section $T_i$, we need to know:
\begin{itemize}
    \item The number of wires in the upper part.
    \item The delays for the crossings.
    \item The locations of the crossings.
    \item The corresponding gate.
\end{itemize}

The number of wires in the upper part comes directly from the section number. As we saw in the description of $F$, the number of wires entering the upper part is $n + i - 1$, and the number exiting is $n + i$. The position and length of the wire that descends to the gate depend only on the number of wires in the section, which is known.

The delays required for each crossing and NAND gate can be computed in logarithmic space by sequentially calculating the entry delay of the tile, $d_T = imO(D)$, the entry delay of each crossing $C_j$, $d_{C_j} = d_T + jO(D)$, and the corresponding delay for each NAND gate $N_k$ of the crossing, $d_{N_k} = d_{C_j} + kO(D)$. At each step, we will store these three quantities, but since they are all polynomial in $m$, they can be stored in logarithmic space.

The positions of the crossings are fixed in $T_i$; we only need to determine where crossings will actually occur. This can be obtained directly from the encoding since we saw that each gate has its two inputs $y_1, y_2$ in the encoding. Thus, all wires below these two must have crossings.

The type of gate in the section can be easily determined from the input given the encoding, and the necessary delay for its NAND gates is $iO(D)$, so it can also be computed. Its position in the section is fixed, so by calculating the delays of its NAND gates, we can construct it without issue.

Finally, we describe the coordinates of the target cell (which corresponds to the output cell of the last gate), and we write the waiting time in unary, which can be obtained from the exit time of the last tile. This time is stored in binary, but we write it in unary for the output.

\section{Conclusion}

To conclude, in this section we present the complete proof of the theorem, with explicit references to the required constructions.

\begin{theorem}
  If $Z \in \{ H,V \}^+$ is any word that contains at least one $H$ and one $V$,
  then the fungal sandpile prediction problem associated with $Z$ is
  $\P$-complete.
\end{theorem}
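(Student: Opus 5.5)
The proof splits into membership in $\P$ and $\P$-hardness. For membership, I would simulate the fungal automaton directly. Since $T$ is given in unary and every signal moves at most one cell per step, only cells within distance $T$ of the finite region $R$ can ever become non-zero before time $T$. I would therefore simulate $F(\cdot,Z,t)$ for $t\le T$ on a grid of size $(|R|+2T)^2$, using the local rule with $Z_{(t-1)\bmod |Z|}$ at step $t$, and check cell $x$. This takes polynomial time, since $Z$ is a fixed parameter.

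For hardness, I would give a logspace reduction from CVP following the layered scheme of \cite{modanese2024embedding}. First, normalise $Z$ using the discussion under \enquote{Restricting update schemes}:
\begin{itemize}
\item by rotation, and by prepending waiting wires to the source bridges, assume $Z$ starts with $H$ and ends with $V$;
\item reduce a periodic word $W^i$ to its primitive root $W$;
\item if $Z$ has a single $H$ or a single $V$, use the $(i+1)\times 2$ adaptation of the original construction.
\end{itemize}
After this we work with a $Z$-block subdivision. By \cref{rem:pathconect}, bridges exist between source cells of the same polarity in diagonally connected blocks. The layer-2 components are then the ones built in \cref{sec:layer-2}, each with its functionality lemma:
\begin{itemize}
\item semi-wires;
\item semi-duplicators and semi-merges (the corollaries of \cref{lem:funcbridge});
\item semi-crossings (\cref{lem:switch}, via the vpp invariant \cref{lem:vpp});
\item switches (\cref{lem:funcswitch,lem:funcswitchneg}, via \cref{lem:funcbridgesink});
\item coordinating semi-wires, diodes and retarders.
\end{itemize}
Together these give exactly the interface (polarized signals, delays measured in cycles) that layer 3 of \cite{modanese2024embedding} expects. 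So the Boolean circuits with delay, the NAND gates and the full crossings of that paper can be reused verbatim, with one unit of time read as one cycle of $Z$.

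Next I would argue the global correctness of the composed configuration. Gadgets are separated by a constant number of empty $Z$-blocks, so garbage grains only affect adjacent blocks. The connection theorem for coordinator semi-wires, and its generalisation in \cref{sec:connecting}, shows that signals pass correctly between gadgets. Uncoordinated signals die within $k$ steps by the coordinating semi-wire lemma, and backward signals are blocked by diodes. Hence, by induction on the depth of the circuit, the output cell receives a signal at cycle $D_{\mathrm{out}}$ if and only if $C(x)=1$. Moreover, no signal reaches it before then, because every path to the output passes through the final gate's retarders and switches. Setting $T=k\cdot D_{\mathrm{out}}$ gives a correct instance. The layer-4 constructor of \cref{sec:layer-4} produces the configuration tile by tile with $O(\log m)$ counters. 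Each bridge is determined by its start cell and $(\alpha,\beta)$, so it can be emitted in logspace from the stored coordinates and the fixed word $Z$.

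The main obstacle I expect is the non-interference argument at gadget junctions and corners. Bridges of adjacent gadgets share closure cells, and switches deliberately contain cells in state $2$. The affected-neighbour sets $N(T)$ of an active bridge must not push a neighbouring idle bridge cell to $4$, and must not push a sink cell to $3$ prematurely. I would handle this by proving a single local lemma: for any two bridges in the allowed relative positions (\cref{fig:afectadas} and the connection patterns), $N(T_i)$ meets $T_j$ only at shared endpoints. I would check this by a case analysis on the first and last letters of $Z$, which are fixed to $H$ and $V$. Every gadget proof, and the global induction, then reduces to this lemma together with \cref{lem:funcbridge2}.
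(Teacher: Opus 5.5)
Your proposal is correct at the paper's level of detail and follows the same route as the paper's proof: the Modanese--Worsch CVP reduction with $Z$-blocks in place of $2\times 2$ blocks, the bridge gadgets of \cref{sec:layer-2} in place of the original layer-2 gadgets, layer~3 reused unchanged, and the logspace constructor of \cref{sec:layer-4}; your direct simulation for membership in $\P$ is just a step the paper leaves implicit. The one claim to temper is that every gadget proof reduces to a junction lemma of the form ``$N(T_i)$ meets $T_j$ only at shared endpoints'', because that fails for pairs inside gadgets: semi-duplicators share their initial $H$-run, and in semi-crossings and switches the two bridges pass through each other's endpoints by design ($P_1(k-1)=B_3^-$ and $P_3(k-1)=B_4^+$). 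So confine that lemma to consecutive wire bridges and inter-gadget junctions, as in the paper's semi-wire proof, and let \cref{lem:vpp} and \cref{lem:funcbridgesink} handle the intentional overlaps.
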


\begin{proof}
The proof proceeds via an efficient reduction from the CVP to the fungal sandpile prediction problem associated with $Z$. Given an instance $C$ of the CVP, we compute the triple $(R,x,T)$.

The region $R$ is a finite grid with configuration $c$ that follows the same tile scheme described in \textbf{Layer 4} of \cite{modanese2024embedding}, with the following modifications:
\begin{itemize}
    \item The subdivision is performed into blocks whose size depends on $Z$ (instead of a $2 \times 2$ subdivision), as described in \Cref{sec:main}.
    \item All gadgets introduced in \cite{modanese2024embedding} are replaced by the generalized gadgets defined in \Cref{sec:layer-2}, and the delay is redefined as in that section.
\end{itemize}

The parameters $x$ and $T$ are chosen analogously to \cite{modanese2024embedding}. Taking into account the adaptations above. The correctness of the reduction follows directly from the functionality of each gadget of stated in \Cref{sec:layer-2} and the overall correctness stated in \cite{modanese2024embedding}.

Finally, the reduction is computable in logarithmic space by the arguments given in \Cref{sec:layer-4}.
\end{proof}
 
\bibliographystyle{splncs04}
\bibliography{sn-bibliography}


%


\end{document}